\newcolumntype{C}[1]{>{\centering\arraybackslash}p{#1}}
\theoremstyle{plain} 
\newtheorem{theorem}{Theorem}
\newtheorem{assumption}{Assumption}
\newtheorem*{normalization}{Normalization}
\newtheorem{proposition}{Proposition}
\newenvironment{assumptionbis}[1]
  {\renewcommand{\theassumption}{$D$\ref{#1}}%
   \addtocounter{assumption}{-1}%
   \begin{assumption}}
  {\end{assumption}}
\newtheorem{Lemma}{Lemma}
\newcommand*{\myfont}{\fontfamily{qpl}\selectfont}
\DeclareTextFontCommand{\textmyfont}{\myfont}
\newcommand{\leqnomode}{\tagsleft@true}
\newcommand{\reqnomode}{\tagsleft@false}
\def\mysingleq#1{`#1'}
\DeclareSymbolFont{upgreek}{LGR}{cmr}{m}{n}
\DeclareMathSymbol{\Epsilon}{\mathalpha}{letters}{"0F}
\DeclareMathSymbol{\Eta}{\mathalpha}{letters}{"11}
\DeclareMathSymbol{\epsilon}{\mathalpha}{letters}{`e}
\DeclareMathSymbol{\eta}{\mathalpha}{letters}{`h}
\titlespacing*{\section}{0pt}{19pt}{7pt}
\titlespacing*{\subsection}{0pt}{14pt}{5pt}
\titlespacing*{\subsubsection}{0pt}{12pt}{5pt}
\titlespacing*{\paragraph}{0pt}{9pt}{9pt}
\titleformat{\section}{\normalfont\fontsize{16}{15}\bfseries}{\thesection}{1em}{}
\titleformat{\subsection}{\normalfont\fontsize{14}{15}\bfseries}{\thesubsection}{1em}{}
\titleformat{\subsubsection}{\normalfont\fontsize{12}{15}\bfseries}{\thesubsubsection}{1em}{}
\begin{document}

\title{{\fontsize{14}{20} \selectfont 
\textbf{\textmyfont{ 
Dynamic Discrete-Continuous Choice Models: \\
Identification and Conditional Choice Probability Estimation}}}}
\author{{\fontsize{12}{20} \textbf{Christophe Bruneel-Zupanc}}\footnote{E-mail address: \href{mailto:christophe.bruneel@gmail.com}{christophe.bruneel@gmail.com}.  
I am especially grateful to Thierry Magnac and Olivier De Groote for their supervision and guidance. I am also grateful to Geert Dhaene for his detailed remarks. I thank Peter Arcidiacono, Jad Beyhum, Matteo Bobba, Stéphane Bonhomme, Catherine Casamatta, Edoardo Ciscato, Fabrice Collard, Giovanni Compiani, Paul Diegert, Pierre Dubois, Patrick Fève, Chao Fu, Alexandre Gaillard, Ana Gazmuri, Cristina Gualdani, Jasmine Hao, Vishal Kamat, Jihyun Kim, Dennis Kristensen, Pascal Lavergne, Arnaud Maurel, Nour Meddahi, Robert Miller, Pedro Mira, François Poinas, Shruti Sinha and Frank Verboven for helpful comments and discussions. I also thank the participants in the internal workshop at the Toulouse School of Economics. This research has received funding from ANR under grant ANR-17-EURE-0010 (Investissements d'Avenir program). I also acknowledge financial support from KU Leuven grants STG/21/040 and C14/24/013.} \\
{\fontsize{12}{20} \textit{Department of Economics, KU Leuven}}} 
\date{{\fontsize{12}{20} \textit{\today} }}
\maketitle

\vspace{-0.2in}

\setstretch{1}
\setlength{\abovedisplayskip}{0pt} 
\setlength{\belowdisplayskip}{0pt}

\begin{abstract}
\noindent This paper develops a general framework for dynamic models in which individuals simultaneously make both discrete and continuous choices. 
The framework incorporates a wide range of unobserved heterogeneity. 
I show that such models are nonparametrically identified. 
Based on constructive identification arguments, I build a novel two-step estimation method in the lineage of \cite{hm1993} and \cite{am2011} but extended to simultaneous discrete-continuous choice. In the first step, I recover the (type-dependent) optimal choices with an expectation-maximization algorithm and instrumental variable quantile regression.  In the second step, I estimate the primitives of the model taking the estimated optimal choices as given. The method is especially attractive for complex dynamic models because it significantly reduces the computational burden associated with their estimation compared to alternative full solution methods. \\ 

\noindent \textbf{Keywords:} Discrete and continuous choice, dynamic model, identification, structural estimation, unobserved heterogeneity. \\ 
\end{abstract}

\setstretch{1.30} 
\setlength{\abovedisplayskip}{6pt} 
\setlength{\belowdisplayskip}{6pt}

\pagebreak

\section{Introduction}

%
%

Many economic problems involve joint discrete and continuous choices. 
For example, a firm can decide what to produce and the corresponding sale price \citep{crawford2019}. Firms also decide whether to register their business and how many workers to hire \citep{ulyssea2018}. 
Students select their majors and decide how much effort to exert in their study \citep{arcidiaconoetal2019}.
Consumers decide what to buy and how much to consume \citep[e.g., appliance choice and demand for energy, ][]{dubin1984}. 
In housing, buyers decide on their house size and housing tenure \citep{hanemann1984, bckm2013}. 
The buyer of a car selects a model and the mileage of the car \citep{bento2009}. 
Individuals decide whether to retire or not and how much they plan to consume accordingly \citep{iskhakov2017}. 
Similarly, labor force participation and consumption/savings are joint choices for potential workers \citep{altugmiller1998, bcms2016, arellano2017earnings}. \\ 
\indent In all these examples, a rational individual makes both decisions simultaneously. 
As a result, the discrete choice is endogenous with respect to the continuous choice and vice versa. Taking the labor and consumption problem as the leading example throughout the paper, if an individual works, she consumes differently than if she does not work: she has two different conditional consumption choices. Moreover, her decision to work or not is dependent on these two conditional continuous choices. 
%
%
%
%
Unfortunately, the identification of models with simultaneous choices is difficult \citep{matzkin2007}. Indeed, there is a core observability problem because we only observe the continuous choice made in the selected discrete alternative, and we do not know the counterfactual choices the individual would have made in the other alternatives. 
Ideally, we would like to recover counterfactual continuous choices using the choices of individuals with similar characteristics but who chose another alternative. However, doing so is not possible if individuals also differ on factors which are unobserved by the econometrician and affect both continuous and discrete choices. 
In this case, two identical individuals as measured by their observed covariates might still differ along the unobserved dimension. There is likely a problem of \textit{selection on unobservables}, which prevents the identification of counterfactual continuous choices. 
To further pursue the example, if a researcher observes that working individuals consume more than unemployed individuals, she cannot identify whether this is because the consumption choice conditional on working is truly higher or because individuals with an unobserved higher taste for consumption select themselves more into working. \\
%
%
%
%
%
%
\indent This paper develops a general framework of \textit{dynamic simultaneous discrete-continuous choice} models suited for dynamic problems including a wide range of unobserved heterogeneity with transitory period-specific shocks and unobserved permanent types. I show how nonparametric identification of these models can be obtained by combining and extending insights from both, the dynamic discrete choice model literature \citep{hm1993, kasaharashimotsu2009, am2011} and the reduced-form literature on quantile treatment effects identification \citep{chernozhukovhansen2005, vuongxu2017}. 
Then, building upon the identification, I provide a two-step estimation method for these models. The method is attractive because it yields significant computational gains regarding the estimation of dynamic discrete-continuous choice models, in the lineage of \cite{hm1993} for dynamic discrete choice models. \\ 
%
\indent The first contribution of this paper is that I provide a constructive proof of the nonparametric identification of a general class of structural dynamic models in which individuals simultaneously make a discrete and a continuous choice. 
First, I identify the optimal discrete and continuous choice policies directly from the data, and then, taking these policies as given, I identify the primitives of the structural model. \\
\indent The identification of the optimal choices proceeds in two sub-stages, each handling one of the two unobserved endogenous shocks present in the framework. The framework includes both (i) permanent unobserved types, capturing intrinsic latent differences between individuals and (ii) transitory shocks which only affect the individuals in a given time-period. To pursue the previous example, the types capture intrinsic differences in individuals' preferences for consumption and labor, in addition to individual-specific transitory taste shocks to consumption and labor every periods. 
\noindent In the first sub-stage, provided that the panel is long enough (more than $6$ observations for each individuals), I show how to identify the unobserved types from the complete panel of individual choices. To do so, I extend the dynamic discrete choice identification proof of \cite{kasaharashimotsu2009} to joint discrete and continuous choices with time-dependent joint densities and lagged dependent variables. 
\noindent Then, given the identified types, 
I show that the endogeneity of the discrete choice with respect to the transitory shocks can be handled using the lagged discrete choice as an instrumental variable (IV) to nonparametrically identify the optimal continuous and discrete choices. 
Indeed, provided that there are some switching costs in the discrete choice (e.g., switching costs of changing labor decision), the previous discrete choice affects the current discrete decision. However, in most models, conditional on the current discrete choice (and on the types and other current covariates), the lagged discrete choice has no direct effect on the current continuous choice. Thus, the lagged discrete choice is often a valid instrument, \textit{relevant} for the current discrete choice (treatment), \textit{excluded} from the current continuous choice (outcome), and exogenous with respect to the transitory shock. 
In this way, observable differences in the distribution of the choices due to variations in the instrument can be attributed to unobserved differences in selection, and not to differences in the continuous choices. I show that, paired with restrictions on the effect of unobserved heterogeneity on the continuous choice (monotonicity, rank invariance), the instrument allows us to establish nonparametric identification of the optimal discrete and continuous choices. The proof relates to and extends reduced form results on the nonparametric identification of quantile treatment effects \citep{chernozhukovhansen2005, vuongxu2017} with IVs. Indeed, I show that identifying the optimal choices in each period can be framed as identifying the effect of the discrete choice (endogenous treatment) on the continuous choice (outcome). This link is appealing as it grounds the identification of dynamic structural models in the treatment effect literature, making it less reliant on sometimes arbitrary structural assumptions (e.g., timing of the choices, discretization of the continuous choice, or implicit exogeneity assumptions between the choices). \\
%
%
%
%
%
\indent Once the optimal choices are identified, I show how to use them to identify the primitives of the structural model. The key lies in linking these choices to the first-order conditions of the true structural model (e.g., the Euler equation determines the optimal continuous choices). Thus, one can reverse engineer the identified choices to identify the true primitives that generated them \citep{hm1993, bmm1997, escancianoetal2021}. \\
\indent The second contribution of the paper is in terms of estimation. I build a two-step estimation method, similar to \cite{hm1993} and \cite{am2011}, 
but for discrete and continuous choices. In the first step, one estimates the policies, which I name after Hotz-Miller's CCPs: \textit{conditional continuous choices} (\textit{CCCs}) and \textit{conditional choice probabilities} (\textit{CCPs}). This step builds on the identification arguments. The policies are estimated directly from the data without solving the structural model. To account for unobserved types, I use an expectation-maximization (EM) algorithm, in the spirit of \cite{am2011}. Then, I estimate the CCCs and CCPs building on IV quantile regression (IVQR) literature \citep{chernozhukovhansen2006, kaido2021decentralization}, using the lagged discrete choice as an instrument which is valid conditional on the estimated types (and covariates). 
In the second step, one uses the estimated CCCs and CCPs to estimate the structure of the model. More specifically, I exploit the fact that within my framework, the primitives of the model are related to optimal choices through the first-order conditions. Given the estimated optimal choices, one can estimate the primitives of the model that generated these choices by satisfying these optimality conditions. 
The two-step estimation method is attractive because it yields sizeable computational gains. Typical dynamic discrete \textit{or} continuous choice models are difficult to estimate because they involve solving the theoretical model (either by backward recursion or fixed point algorithms). Dynamic discrete-continuous choice models are even more difficult to estimate because the mixed choices can introduce kinks and non-concavities in the value function \citep{iskhakov2017}.  
Given that I can recover the CCCs and CCPs in the first step, I can exploit them to estimate the rest of the model without having to compute the value function or solve the model.\footnote{Since I do not solve for the CCCs and CCPs using an optimization algorithm, there is also no concerns about kinks and non-concavities in the value function that would make the estimation of these optimal choices more complicated. } This yields computational gains comparable to those obtained by \cite{hm1993} in the dynamic discrete choice literature, achieving estimation times already hundreds of times faster than the best available alternative \citep{iskhakov2017} in a simple toy model (see Section \ref{section_comparison_estimation}), and even greater improvements in more complex settings.    
The gains are so important that they not only reduce the time required to estimate the models, but also make it possible to estimate models that have thus far been deemed computationally intractable. 
In this respect, my method may facilitate the use of simultaneous discrete-continuous choice models, in particular the estimation of single-agent partial equilibrium life-cycle dynamic models. \\ 
\indent Overall, the method builds a bridge between more reduced-form policy estimation and dynamic structural models. By enabling the estimation of structural models directly using reduced-form estimates, the method unlocks the possibility of doing counterfactual policy analysis on the basis of reduced-form results. 
In the leading example of the consumption and labor choice problem, the method described in this paper shows how to estimate standard life-cycle structural models of consumption and labor choices \citep[e.g.,][]{bcms2016} based on reduced-form/semi-structural estimates of optimal labor-specific consumption rules \citep[e.g.,][]{arellano2017earnings}. With the structural model deep parameters (e.g., risk aversion), one can run many counterfactual policy analysis, varying tax/subsidies on labor or consumption for example. \\






\noindent \textbf{Related literature.} \\
\noindent There is a vast empirical literature that uses dynamic discrete choice models, for example, in studies of labor market transition and career choice \citep{keanewolpin1996}, fertility choice \citep{ecksteinwolpin1989} and education choice \citep{arcidiacono2004}. Starting from the bus replacement problem of \cite{rust1987}, developments have been made regarding the estimation and identification of these models, including \cite{hm1993}, \cite{hmss1994}, \cite{rust1994}, \cite{mt2002}, \cite{aguirregabiriamira2002}, \cite{aguirregabiriamira2007}, \cite{kasaharashimotsu2009}, \cite{am2011}, \cite{hushum2012}, \cite{am2019}, \cite{am2020}, \cite{abbringdaljord2020}, and \cite{berry2023instrumental} among others.  For a survey, see \cite{aguirregabiriamira2010} or \cite{arcidiaconoellickson2011}. \\
\indent Similarly, the literature on dynamic continuous choice models is also voluminous, especially concerning consumption/saving \citep{carroll2006} or investment choices \citep{hs2010}. 
There are also methods such as \cite{bbl2007} that can be applied to either dynamic discrete choice models or dynamic continuous choice models (but not both).\footnote{More precisely, \cite{bbl2007} describe problems with discrete or continuous policy functions separately. 
Extending their estimation techniques to more general Framework with discrete and continuous choices and with several unobservables yielding endogeneity of both choices would require identifying the first stage optimal policies following the approach described in this paper first. }  \\ 
%
%
\indent However, many economic problems involve multiple joint decisions, not only one discrete choice or only one continuous choice. For example, labor force participation is very much related to saving decisions. By focusing only on one of these two dimensions and ignoring the other (endogenous) choice, one might be missing something important. 
%
%
%
%
%
%
%
%
%
Unfortunately, empirical applications of the dynamic discrete-continuous choice framework are less common, as there was no general identification result available. For example, \cite{bmm1997} provide identification of such models once the optimal choices are identified but do not directly address the identification of these choices. 
The existing literature employs several tricks to overcome the problem of selection on unobservables. 
The most extreme is to assume away the problem by assuming selection on observables only, i.e., conditional on the observed covariates, assume that there is no other unobservable affecting the optimal choices. This is fairly strong, especially in dynamic models where the number of covariates is typically limited. Without ruling out the existence of these unobservables, another common approach is to have implicit or explicit assumptions about the selection process, through assumptions about the relation between the error terms affecting both choices, e.g., independence, measurement errors or known joint distribution \citep{dubin1984, hanemann1984, bento2009}. 
Another common technique is to discretize the continuous choice so that the discrete-continuous model can be rewritten as a discrete choice model \citep{degrooteverboven2019}. This is appealing, as it allows the application of known techniques in the dynamic discrete choice literature. However, discretizing the continuous choice is implicitly equivalent to making an assumption about the selection process via an assumption on the distribution of the additive discrete error terms. 
Another approach is to resort to timing assumptions which implicitly break the endogeneity of the choices. \cite{blevins2014} shows nonparametric identification of dynamic discrete-continuous choice models assuming a specific timing in which the discrete choice takes place before the realization of the nonseparable shocks affecting the continuous choice: hence the selection (discrete choice) does not depend on the nonseparable shock. \cite{iskhakov2017} and \cite{murphy2018} use similar timing assumptions, which are effectively equivalent to imposing that the discrete choice is exogenous. 
A more convincing alternative is to allow for endogeneity but reduce the level of unobserved heterogeneity, for example, by including only a finite number of unobserved types \citep{bcms2016}.
My approach is more general, as I allow for a more flexible distribution of unobserved heterogeneity with both period-specific transitory shocks and permanent unobserved types. I handle the complex endogeneity of the discrete and continuous choices by extending techniques from both the dynamic discrete choice literature to identify the types \citep{kasaharashimotsu2009, hushum2012}, and from the reduced form quantile treatment effect literature to handle the intra-period endogeneity \citep{chernozhukovhansen2005}. Linking the identification of structural models with nonparametric treatment effect identification results is appealing as these results rely less on sometimes arbitrary structural assumptions (timing, discretization, exogeneity, distribution of the errors, ...). Furthermore, my identification allows to test these assumptions. \\
\indent Most closely related to this paper, contemporaneous work by \cite{levy2024identification} also addresses the identification of simultaneous discrete-continuous dynamic choice models with rich unobserved heterogeneity in two steps: first they identify the optimal choices, then the model primitives. 
The main difference between our papers is the manner in which they handle the endogeneity to identify the optimal policies in the first stage. They focus on infinite horizon setups with a stationary environment and address the selection by requiring the existence of a (sequence of) variable(s) such that the probability of selecting some specific discrete alternatives becomes arbitrarily high (tends to one). In practice, however, the existence of such a variable is hard to satisfy in most applications. To provide an analogy with the treatment effect identification literature, their identification arguments are similar to identification-at-infinity arguments, requiring the existence of a "infinitely relevant" instruments, such that the selection probability tends to one. Instead, I only need a weaker standard relevant instrument to address endogeneity and identify the optimal choices. While even standard IVs may be hard to find in the context of dynamic models, I show that in many structural models, the past discrete choice will be a valid instrument as soon as there are nonzero switching costs (conditional on the covariates and the types). This is a relatively mild condition in many applications, and will be testable with my framework. 
In the special case of \cite{levy2024identification}'s application to retirement and consumption decisions, and more generally in the presence of absorbing states in the discrete choice, our identification arguments coincide. Indeed, retirement is an absorbing state, so the probability of being retired today conditional on being previously retired is one. Consequently, the previous retirement status satisfy their identification-at-infinity condition, and is also an (infinitely) relevant instrument in my case (scenario equivalent to infinitely high switching cost). In fact, I show that when the discrete choice has an absorbing state, my identification arguments are considerably simplified and focussing on individuals who are already in the absorbing state has additional identification power (see Section \ref{subsection_dynamic_identification}). 
In addition to these, the difference with my paper is that they use pairwise differencing for the estimation of their primitives and require separability of the unobservables in the marginal utilities to do so, while I do not need it. I also take into account auto-correlated shocks through permanent types, making a link with the dynamic discrete choice literature.  \\
\indent As already mentioned, this paper builds a general framework that connects the identification of structural models with the reduced form nonparametric identification literature \citep{neweypowell2003, chesher2003, newey2007, matzkin2007, matzkin2008, imbensnewey2009, torgovitsky2015, dhaultfoeuillefevrier2015}. 
By casting the optimal choices in the form of a triangular simultaneous system of equations, I show how their identification can be framed as the identification of quantile treatment effects with an endogenous treatment, i.e., the IV quantile regression (IVQR) Framework \citep{chernozhukovhansen2005, vuongxu2017, feng2024matching}, where the discrete and continuous choices can be understood as the treatment and the outcome, respectively. Moreover, I improve on the existing results of \cite{chernozhukovhansen2005} by weakening their relevance condition: instead of their global full rank condition, I show that the identification can be obtained under weaker, testable, and easier to interpret relevance condition. 
More precisely, I need that the instrument is relevant almost everywhere, except possibly at a finite set of isolated values of the unobservable shocks. Allowing for some isolated points of irrelevance is important, especially in dynamic models where the discrete choice has many alternatives, as these locally irrelevant points may often occur, even in simple models.  
The reason why I can relax the full rank condition of \cite{chernozhukovhansen2005} is that they do not exploit a key property of their quantile model: the fact the continuous choices (outcomes) are strictly increasing in their unobservable shocks (ranks). This monotonicity has extra power in terms of identification. To the best of my knowledge, \cite{vuongxu2017} are the only others who also exploit the power of monotonicity to relax the full rank condition of \cite{chernozhukovhansen2005} and still identify quantile treatment effects, but only in the context of a binary treatment. I further show that this weaker relevance condition can be expressed as an easy-to-interpret conditions on the conditional choice probabilities (depending on the unobservable shock affecting the continuous choices), which is testable. \\
\indent Similarly, I contribute to the literature on the identification of models with unobserved types \citep{kasaharashimotsu2009, hushum2012, higgins2023identification}. In particular, I extend the identification of \cite{kasaharashimotsu2009} to type-dependent joint discrete-continuous choice densities, where the densities are time-dependent and depend on lagged choices. I also show how to account for covariates  for which the transition is deterministic given the choices (e.g., assets), which violates standard assumptions in this literature. \\
\indent For the identification of the primitives of the model given the identified optimal choices, I build upon \cite{escancianoetal2021} and \cite{bmm1997}. I adapt \cite{escancianoetal2021} to my framework to identify the marginal utilities and the discount factor from the Euler equations. Then I adapt \cite{bmm1997} to identify the remaining primitives (value functions) using these marginal utilities.  \\
\indent I also contribute to the literature on fast estimation methods, avoiding the computation of the value function \citep{rust1987, hm1993, hmss1994, carroll2006, am2011, iskhakov2017}. I provide a faster alternative to indirect inference and the most recent developments of endogenous grid methods \citep{iskhakov2017}. A timing comparison of the different estimation methods is given in Section \ref{section_comparison_estimation}. \\
%
%
%
%
%
%
%
%
%
%
%
%

\noindent \textbf{Outline.} 
The Framework contains several building blocks, that we will develop backwards. First, Section \ref{section_framework} describes the \textit{intra-period} simultaneous discrete-continuous choice problem, for any given period $t$, and assuming the types are already identified. It also discusses nonparametric identification of the optimal choices within any period. Then, Section \ref{section_dynamic} shows the general dynamic models that yields these intra-period problems. Section \ref{section_types} shows how to identify the permanent types beforehand. \\
Building on the complete Framework and identification arguments, Section \ref{section_estimation} describes the estimation method and Section \ref{section_comparison_estimation} shows the estimator performances, in terms of precision and computational time, using Monte-Carlo simulations of a dynamic life-cycle model of consumption and labor force participation choices. Section \ref{section_conclusion} concludes.

\section{The intra-period problem}\label{allsection_framework}

This section describes the intra-period problem of a dynamic model and its nonparametric identification for any specific period $t$. This serves as a building block and the identification of the dynamic model will then be described in Section \ref{section_dynamic}. I also proceed conditional on the type, $m$, which should have been identified beforehand (see Section \ref{section_types}). I abstract from the period $t$ and type $m$ to simplify the notation.  The main text describes the framework with a binary discrete choice, extension and identification with more than $2$ discrete alternatives is in Appendix \ref{appendix_discrete}. 


\subsection{Intra-period Framework}\label{section_framework}
Consider an individual's decision problem with the following timing within a period: 
{\setlength{\baselineskip}{15pt}
\vspace{-10pt}
\begin{center}
\begin{tikzpicture}[x=2.5cm]
\draw[->,thick,>=latex]
  (0,0) -- (3,0) node[below right] {};

\draw[thick] (0,-.2) -- (0, .2) node[above] {};
\draw[thick] (1,-.2) -- (1, .2) node[above] {};
\draw[thick] (2.2,-.2) -- (2.2, .2) node[above] {};

\draw (0,0) node[align=center, below=6pt] {State $z$ \\};
\draw (1,0) node[align=center, below=6pt] {Shocks $(\epsilon, \eta)$ \\ occur};
\draw (2.2,0) node[align=center, below=6pt] {Individual \\ picks $(d, c_d)$};
\end{tikzpicture}
\end{center}
}
\vspace{-0.5em}

\noindent The individual simultaneously selects a discrete action $d$ $\in \mathcal{D} = \{0, 1\}$ and accordingly makes one continuous choice $c_d \in \mathcal{C}_d$, where $\mathcal{C}_d$ is a compact subset of $\mathbb{R}$, to maximize his payoff.\footnote{For a more general discrete choice with $\mathcal{D} = \{0, ..., J\}$,  see Appendix \ref{appendix_discrete}.} 
The decision is made given some state $z \in \mathcal{Z}$ observed by the researcher, as well as two transitory period $t-$specific preference shocks, $\epsilon = (\epsilon_0, \epsilon_1) \in \mathcal{E} \subset \mathbb{R}^2$ and $\eta \in \mathcal{H} \subset \mathbb{R}$. The shocks $\epsilon$ and $\eta$ are realizations of the random variables $\Epsilon = (\Epsilon_0, \Epsilon_1)$ and $\Eta$ and are unobserved by the researcher. The shock $\epsilon$ only affects the discrete choice $d$, while $\eta$ impacts the continuous choice $c$ and the discrete choice. The \textit{same} $\eta$ impacts the continuous choice decision in both discrete-choice states ($c_0$ and $c_1$), that is, there is \textit{rank invariance} \citep{hsc1997, chernozhukovhansen2005}.\footnote{The continuous choices could even represent different variables depending on the discrete option selected: for example, if $d$ represents the choice between working and studying, $c$ might represent the amount of time worked and the effort of the student respectively, hence with possibly different supports. The main restriction is that even if they represent two different choices, these two continuous choices are impacted by the \textit{same} unobserved shock $\Eta$. } 

The payoffs of the individual are given by the function $\mathcal{V}_d(c_d, z, \eta, \epsilon_d)$. 
The individual simultaneously selects $d$ and $c_d$ to solve:
\begin{equation}
\underset{d, c_d}{\textrm{max}} \quad \mathcal{V}_d(c_d, z, \eta, \epsilon_d).
\end{equation}


\noindent I require additional assumptions for tractability and identification of the model. 
\begin{assumption}[Additive Separability]\label{additive}
The shock $\epsilon$ enters the payoff additively:
\begin{align*}
\mathcal{V}_d(c_d, z, \eta, \epsilon_d) = \tilde{v}_d(c_d, z, \eta) + \epsilon_d, \quad \text{for } d = 0, 1.
\end{align*}
\end{assumption}

\indent The additive separability assumption is common in the discrete choice model literature \citep{rust1987, am2011}. It applies to $\epsilon$, while $\eta$ can still enter the payoff in a nonseparable manner. 
A consequence of Assumption \ref{additive} is that the optimal conditional policy functions given $d$, $c_d^*(\cdot)$, will not depend on $\epsilon$: 
\begin{align*}
c_d^* = \underset{c_d}{\textrm{argmax}} \  \mathcal{V}_d(c_d, z, \eta, \epsilon_d)  = \underset{c_d}{\textrm{argmax}} \ \tilde{v}_d(c_d, z, \eta), \quad \text{ for } d = 0, 1.
\end{align*}



\begin{assumption}[Instrument]\label{instrument} The state vector contains two kinds of variables, $z = (x, w)$, where $x \in \mathcal{X}$ and $w \in \mathcal{W} = \mathcal{D}$,\footnote{Given the IV I use (past discrete choice), the support of $W$ is the support of $D$. In general the support of $W$ must be larger than the support of $D$. For discrete or continuous $W$, the identification proof follows along the same lines.} and 
\begin{center} 
$\tilde{v}_d(c_d, z, \eta)= v_d(c_d, x, \eta) + m_d(x, w, \eta)$, \quad $\text{ for } d=0,1.$ 
\end{center}

\end{assumption}

Here, $x$ represents general state variables and $w$ is an `instrument' to recover the optimal conditional policies $c_d^*$. On the one hand, $w$ is \textit{excluded} from the optimal policies $c_d^*$ since 
\begin{align*}
c_d^*\  = \  \underset{c_d}{\textrm{argmax}} \  \tilde{v}_d(c_d, z, \eta) \  = \ \underset{c_d}{\textrm{argmax}} \ v_d(c_d, x, \eta), \quad \text{ for } d =0,1.
\end{align*}
On the other hand, $w$ might still be \textit{relevant} and impact the discrete choice $d$. 

\begin{assumption}[Monotonicity]\label{monotone}
The payoff functions $v_d$ are twice continuously differentiable and 
\begin{equation*}
\frac{\partial^2 v_d(c_d, x, \eta)}{\partial c_d \partial \eta} > 0, \quad \text{ for } d=0,1. \\
\end{equation*}
\end{assumption}

\indent Assumption \ref{monotone} implies that, given $D=d$ and $X=x$, the conditional optimal policy function $c_d^*(x, \eta)$ is continuously differentiable and strictly increasing in $\eta$. 
Hence $\eta$ and $c_d^*$ are one-to-one for every $d$ and $x$. 
This kind of monotonicity condition has been widely used for identification \citep{chernozhukovhansen2005, bbl2007, hs2010}. In a sense, it means that I only identify monotone effects of the unobserved nonseparable source of heterogeneity, $\eta$.  
An important limitation of Assumption \ref{monotone} is that it requires a nontrivial continuous choice $c_d$ for each discrete alternative $d$. 
For example, Assumption \ref{monotone} is not satisfied in the case where an investor decides whether to invest ($d=1$) or not ($d=0$) and the corresponding investment conditional on investing ($d=1$) \citep{hs2010}. Indeed, in this case, $c^*_0(x, \eta) = 0$ for all $\eta$ (and $x$), and $c_0^*$ is not strictly increasing in $\eta$. In contrast, Assumption \ref{monotone} holds in the case of a discrete choice between portfolios and the corresponding conditional level of investment. \\
%
%
%
%
%
\indent Under Assumptions \ref{additive}, \ref{instrument} and \ref{monotone} we obtain the following triangular structure for the reduced-form optimal choices: 
\begin{align*}
\left\{
    \begin{array}{l}
        C_d = c_d^*(X, \Eta), \\
        D = d^*(X, W, \Eta, \Epsilon).
    \end{array}
\right. 
\end{align*}
This triangular structure links my  structural model with the literature on (reduced-form) systems of simultaneous equations \citep{chesher2003, matzkin2008, imbensnewey2009} and, more specifically, the related literature on heterogeneous (quantile) treatment effects \citep{chernozhukovhansen2005, vuongxu2017}. 
To identify the structure, one needs to first identify the optimal choice functions. To identify them, I need additional assumptions on the shocks. 
%
%
%

\begin{assumption}[Shocks]\label{ass_shocks} Conditional on $X=x$, 
\begin{enumerate*}[label={\textbf{\upshape{(\roman*)}}}, ref={\theassumption(\roman*)}]
	\item\label{indep_shock} $W$, $\Eta$ and $\Epsilon$ are mutually independent;  
	\item\label{contshock} $\Eta$ is continuously distributed as $\mathcal{U}(0,1)$; 
	\item\label{discshock} $\Epsilon$ is continuously distributed with full support; 
	\item\label{maxu} $\underset{c}{\textrm{max}} \ \tilde{v}_d(c, x, w, \eta) < \infty$ for all $(x, w, \eta, d)$. 
\end{enumerate*}
\end{assumption}

\indent The main independence restriction is that $\Eta$ is independent of $W$ given $X$. The identification of $c_d^*$ requires $\Eta$ to have the same distribution, regardless of the realization of $W$. Other than this, the independence assumption is not as restrictive as it may appear. Indeed, note that the additive term $m_d(x, w, \eta)$ can be interpreted in two ways that cannot be separately identified. In Assumption \ref{instrument}, $m_d$ is an additive part of the payoff $\tilde{v}_d$. However, $m_d$ can also be interpreted as part of a more general additive discrete-choice shock, $\tilde{\epsilon}_d(x, w, \eta) = m_d(x, w, \eta) + \epsilon_d$, in which case $\Epsilon_d$ is the part of the discrete-choice shock that is independent of $\Eta$ and $W$. 
\noindent The continuity of the distribution of $\Eta$ is imposed to obtain smooth conditional distributions of the continuous choices. I cannot identify the distribution of $\Eta$ separately from the utility. Therefore, as is standard in the literature \citep{bmm1997, matzkin2003}, I normalize $\Eta$ to be uniformly distributed (given $X$). This normalization is innocuous. 
Formally, I nonparametrically identify the quantiles of the optimal choices 
and payoffs. 
Similar to the distribution of $\Eta$, the distribution of $\Epsilon$ is not nonparametrically identified in my setup, but this does not affect the nonparametric identification of the optimal choices nor of the payoff function, $v_d$, as long as $\Epsilon$ has full support. 
\noindent Assumption \ref{maxu} is a regularity condition on the functional form ensuring that $0 < \textrm{Pr}(D=d | \Eta=\eta, Z=z) < 1$ for all $(d, \eta, z)$. \\ 
\indent I need one last (testable) condition for identification. 

{ 
\begin{assumption}[Instrument Relevance]\label{relevance}\label{identification2}
For every $x \in \mathcal{X}$,  
\begin{align*}
\textrm{Pr}(D=0 | \Eta = \eta, X=x, W=1) \neq \textrm{Pr}(D=0 | \Eta = \eta, X=x, W=0), 
\end{align*} 
$\text{ for all } \eta \in \mathcal{H} \backslash \mathcal{K}_x,$ 
where $\mathcal{K}_x$ is a (possibly empty) finite set containing $K$ values ($K \geq 0$). 
\end{assumption}
}

\indent Identification of the optimal policies requires that the instrument is \textit{sufficiently relevant}. It needs to be relevant `almost everywhere', but I show that identification still holds even if there is a finite set of values of $\eta$ at which the instrument is not relevant, which could occur if the switching costs vary with $\Eta$.  
Assumption \ref{identification2} yields testable implications for the observed reduced forms distributions of $C$ and $D$. It allows to test whether the structural model is identified, as I discuss in the next section. 
Finally, note that Assumption \ref{identification2}, expressed in terms of the conditional choice probabilities, is equivalent to an assumption on the structural functions $m_d$. Indeed, 
\begin{align*}
&\textrm{Pr}(D = 0 | \Eta=\eta, X=x, W=w) \\
&= \textrm{Pr} \Big(  \Epsilon_0 - \Epsilon_1  > \underset{c}{\textrm{max}} \  v_1 (c, x, \eta) - \underset{c}{\textrm{max}} \ v_{0} (c, x, \eta)  \\
&\quad \quad \quad \quad \quad \quad  + m_1(x, w, \eta) - m_0(x, w, \eta)  \Big| \Eta=\eta, X=x, W=w \Big) .
\end{align*}
Since $\underset{c}{\textrm{max}} \  v_1 (c, x, \eta) - \underset{c}{\textrm{max}} \ v_{0} (c, x, \eta)$ is independent of $W$ and since $\Epsilon_d \perp (W, \Eta) | X = x$, we have that: 
%
\begin{align*}
\textrm{Pr}(D = 0 | \Eta=\eta, X=x, W=0) &\neq \textrm{Pr}(D = 0 | \Eta=\eta, X=x, W=1) \\
\iff \quad m_0(x, 0, \eta) - m_1(x, 0, \eta) \ &\neq \  m_0(x, 1, \eta) - m_1(x, 1, \eta).  \\ 
\end{align*}


\vspace{-1em}

\noindent \textbf{Summary of the setup.} 
I consider a decision problem where an individual selects $(d, c_d)$ to maximize his payoff:
\begin{align*}
\underset{d, c_d}{\textrm{max}}  \quad \Big\{ v_d(c_d, x, \eta) \ + \ m_d(x, w, \eta) \ + \ \epsilon_d \Big\}.
\end{align*}
\noindent The setup applies to a wide range of (static and) dynamic discrete-continuous choice models. I provide an example below that will be developed further in Section \ref{section_dynamic}. \\

\noindent \textbf{Example: Life-cycle model of consumption and labor.} \\
Consider a standard dynamic model where individuals choose how much to consume/save and whether to work or not (or to work part time or full time) every period \citep{altugmiller1998, bcms2016, arellano2017earnings}. 
The individual simultaneously chooses between working ($d=1$) or not ($d=0$), and how much to consume accordingly, $c_d$. The consumption functions can be thought of as `potential consumptions' (potential outcomes), and are completely flexible functions of the labor decision (treatment). 
The vector $x$ contains information about the asset, income, education and other individual characteristics (demographics such as the age, gender, marital status, ...). Note that the asset and income may not affect the current period utility directly, but still affect the conditional value functions, $v_d$, indirectly through their impact on the future (see more discussion in Section \ref{section_dynamic}).  
Implicitly here, I omit the unobserved permanent type $m$, which should have been identified beforehand, and could be thought of as another covariate, affecting both the preferences for work and consumption. The shock $\epsilon_d$ represents individual-specific transitory unobserved preferences for work. The shock $\eta$ represents other unobserved transitory shocks of the individual impacting her preference for consumption, and possibly also her preference for work directly. The higher $\eta$ is, the higher $c_d$ for all $d$. In practice, the greatest challenge is to find a good instrument $w$. Fortunately, the previous labor decision could serve as such an IV. Indeed, in the presence of switching costs, e.g., if $m_d(x, w, \eta) > 0$ when $d \neq w$, and $m_d(x, w, \eta) = 0$ when $d = w$, the previous labor decision is relevant for the current one (Assumption \ref{relevance}). Conditional on the current decision, on the types which capture intrinsic unobserved characteristics of the individuals, and on covariates which capture their wealth and observed characteristics, the previous decision should have no effect on the current consumption choice: it is excluded. Finally, since $\eta$ is purely transitory and occurring in period $t$, the past labor decisions are independent of it (the past labor still depends on the permanent types). 
Thus, the previous labor decision has unique properties that makes it a valid IV (relevant, excluded and exogenous) in dynamic models, because its effect on the current consumption is "subsumed" by the effect of the current labor choice. \\

\noindent \textbf{Discussion of simultaneity.} 
This simultaneous choice framework nests the non-simultaneous timings where either the discrete or the continuous choice is made first and is based on expectations about the other choice (and the corresponding shock). These two timings have testable implications for the optimal choices within the simultaneous choice framework: 
\begin{enumerate}[label=(\roman*)]
\item If the discrete choice is made first (before the realization $\Eta=\eta$ and the continuous choice), then the CCP $\textrm{Pr}(D=d | \Eta=\eta, X=x, W=w)$ does not depend on $\eta$. Indeed, $\eta$ is not yet realized. The discrete choice is only based on expectations about $\eta$ and the corresponding $c^*_d(x, \eta)$. 
\item Conversely, if the continuous choice is made first (before the realization $\Epsilon = \epsilon$ and the discrete choice), then the CCCs $c_d^*(x, \eta)$ do not depend on $d$, i.e., $c_0^*(x, \eta) = c_1^*(x, \eta)$  for all $\eta$ and $x$.
\end{enumerate}
Since I identify the policy functions $c_d^*$ and $\textrm{Pr}(D=d | \Eta=\eta, X=x, W=w)$ in the simultaneous choice framework, I can test the timing of the decisions. 

\subsection{Identification}\label{section_identification}

The unobserved shocks $(\Eta, \Epsilon)$ are independent and identically distributed across individuals. 
I observe data on the variables $(D, C, X, W)$. 
I only observe $C=C_0$ if $D=0$ and $C=C_1$ if $D=1$, where $C_0$ and $C_1$ are the potential choices.  
For all $(x, w, \eta)$ in $\mathcal{X} \times \mathcal{W} \times \mathcal{H}$, I study nonparametric identification of the following objects for $d=0, 1$: the optimal \textit{conditional continuous choices} (CCCs) $c_d^*(x, \eta)$, the optimal \textit{conditional choice probabilities} (CCPs) $\textrm{Pr}(D=d | \Eta=\eta, W=w, X=x)$, the indirect payoff functions (taken at the optimal $c$) $\underset{c}{\textrm{max}} \ v_d(c, x, \eta)$ and $m_d(x, w, \eta)$. 
In this section, I focus on any given value $X=x$ and omit $x$ from the notation in what follows. This is without loss of generality since my assumptions about the distribution of the shocks hold conditional on $X=x$. 
%
%
First, I characterize the reduced forms and constraints imposed by the structural assumptions. Then, I discuss the identification of the optimal policies (CCCs and CCPs) and of the payoffs. \\
\indent In the main text I focus on the case where $D$ is binary. Appendix \ref{appendix_discrete} discusses identification in the case where $D$ is discrete and takes more than two values. 
%
%
%
%
%
%
%

\subsubsection{Reduced forms and constraints}

\noindent In the data, I observe $(D, C, W)$, where $W$ is exogenous while $C$ and $D$ are endogenous choices.  
There is a fundamental observability problem, as I only observe one of the two potential choices $C_0$ and $C_1$ depending on the discrete choice selected: 
\begin{align*}
C = C_0 ( 1 - D ) + C_1 D .
\end{align*}
Therefore, from the data, I only recover the distribution of the potential choice $C_d$ \textit{conditional on} $D=d$ (and $W=w$), that is, $F_{C_d | d, w}(c_d) = \textrm{Pr}(C_d \leq c_d | D=d, W=w)$ for all $d$ and $w$. The functions $F_{C_0 | 1, w}(c_0)$ and $F_{C_1 | 0, w}(c_1)$ are not observed in the data.  
I also recover the conditional probability of selecting $d$ given $W=w$, that is, $p_{D|w}(d) = \textrm{Pr}(D=d | W=w)$. 
The data provide the following reduced-form functions, which exhaust all relevant information:
\begin{align*}
R = \Big\{ F_{C_d|d,w}(\cdot), \  (d,w) \in \mathcal{D}^2\ ; \ p_{D|w}(\cdot), w \in \mathcal{D} \Big\}. 
\end{align*}

A remark on terminology: in this paper, $\textrm{Pr}(D=d | W=w)$ is part of the \textit{reduced form}, while $\textrm{Pr}(D=d | \Eta=\eta, W=w)$ is what I call the \textit{conditional choice probabilities} (CCPs) or selection on unobservables process that I want to identify. 
This differs from the dynamic discrete choice literature, where $\textrm{Pr}(D=d | W=w)$ are actually called CCPs \citep{hm1993, am2011}. Here, however, I have simultaneous choices and a nonseparable shock $\eta$ that affects both choices. Thus, the counterparts to the usual CCPs are $\textrm{Pr}(D=d | \Eta=\eta, W=w)$ for all $d$, hence the different terminology. \\ 
\indent The structural assumptions imply the following constraints on the reduced form. 
\begin{Lemma}\label{distrib_c1}
Under Assumptions \ref{monotone} and \ref{ass_shocks}, $F_{C_d | d, w} (\cdot)$: $\mathcal{C}_d \rightarrow [0,1]$ is continuously differentiable and strictly increasing, for $d=0, 1$. 
\end{Lemma}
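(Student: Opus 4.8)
The plan is to transport the conditional law of $C_d$ to that of $\Eta$ through the monotone policy $c_d^*$, and then read off a density. First I would note that, by Assumption \ref{monotone}, $c_d^*(\cdot)$ is $C^1$ and strictly increasing on $\mathcal{H}$, hence a bijection onto $\mathcal{C}_d$ whose inverse $\eta_d^* = (c_d^*)^{-1}$ is, by the inverse function theorem, itself $C^1$ with derivative $(\eta_d^*)'(c_d) = 1/(c_d^*)'\big(\eta_d^*(c_d)\big) > 0$. Because $c_d^*$ is increasing, $\{C_d \le c_d\} = \{\Eta \le \eta_d^*(c_d)\}$, so
\begin{align*}
F_{C_d|d,w}(c_d) = \textrm{Pr}\big(\Eta \le \eta_d^*(c_d) \,\big|\, D=d,\, W=w\big),
\end{align*}
and it suffices to understand the conditional distribution of $\Eta$ given $(D=d,W=w)$.

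Next I would compute its density. Writing $P_d(\eta,w)$ for the CCP $\textrm{Pr}(D=d\mid\Eta=\eta,W=w)$ and combining Bayes' rule with Assumption \ref{ass_shocks} — $\Eta\sim\mathcal{U}(0,1)$ (unit marginal density) and $\Eta\perp W$ — gives, for $\eta\in\mathcal{H}$,
\begin{align*}
\textrm{Pr}(\Eta\le\eta\mid D=d,W=w) = \frac{1}{p_{D|w}(d)}\int_0^{\eta} P_d(s,w)\,ds,
\end{align*}
so $\Eta$ has conditional density $f_{\Eta|d,w}(\eta) = P_d(\eta,w)/p_{D|w}(d)$, where the normalizing constant $p_{D|w}(d)=\int_0^1 P_d(s,w)\,ds$ lies in $(0,1)$ because Assumption \ref{maxu} forces $0<P_d(\eta,w)<1$ for every $\eta$.

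Combining the two displays and differentiating through the change of variables then yields
\begin{align*}
f_{C_d|d,w}(c_d) = \frac{P_d\big(\eta_d^*(c_d),w\big)}{p_{D|w}(d)}\,(\eta_d^*)'(c_d).
\end{align*}
Strict monotonicity of $F_{C_d|d,w}$ is immediate: each factor on the right is strictly positive — $P_d\in(0,1)$ by Assumption \ref{maxu}, $p_{D|w}(d)>0$, and $(\eta_d^*)'>0$ from the first step — so $f_{C_d|d,w}>0$ on all of $\mathcal{C}_d$. Continuous differentiability then reduces to continuity of this density, i.e. to continuity in $\eta$ of the CCP $P_d(\cdot,w)$, since $\eta_d^*$ and $(\eta_d^*)'$ are already continuous.

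The last point is where the real work sits, and I expect it to be the main obstacle. I would establish it from the structural form of the CCP: since choosing $d=0$ is equivalent to $\Epsilon_0-\Epsilon_1$ exceeding $\max_c v_1(c,\eta)-\max_c v_0(c,\eta)+m_1(w,\eta)-m_0(w,\eta)$, and since $\Epsilon\perp(\Eta,W)$, one has $P_0(\eta,w)=\bar{G}\big(\cdot\big)$ with $\bar{G}$ the survival function of $\Epsilon_0-\Epsilon_1$, which is continuous by Assumption \ref{discshock}. The indirect payoffs $\max_c v_d(c,\eta)=v_d\big(c_d^*(\eta),\eta\big)$ are continuous in $\eta$ because $v_d$ is $C^2$ and $c_d^*$ is continuous, so continuity of $P_d(\cdot,w)$ — and hence the $C^1$ conclusion — follows once $m_d(w,\cdot)$ is continuous in $\eta$. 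Thus strict positivity, and therefore strict monotonicity, comes for free from the regularity condition in Assumption \ref{maxu}, whereas the continuous-differentiability claim genuinely relies on the selection probability varying smoothly with the rank $\eta$, which I would trace back to the smoothness maintained on the payoff primitives together with the continuity of the distribution of $\Epsilon$.
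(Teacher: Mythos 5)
Your proof is correct and takes essentially the same route as the paper's: both transport the conditional law of $\Eta$ through the strictly monotone policy $c_d^*$ (i.e., $F_{C_d|d,w}(c_d^*(\eta)) = F_{\Eta|d,w}(\eta)$) and obtain strict monotonicity from $0 < \textrm{Pr}(D=d \mid \Eta=\eta, W=w) < 1$, which makes the conditional density of $\Eta$ strictly positive. The only difference is one of rigor in your favor: you spell out, via the survival function of $\Epsilon_0-\Epsilon_1$ and the continuity of the indirect payoffs and of $m_d(w,\cdot)$ in $\eta$, why the CCP is continuous in $\eta$ — a step the paper's proof asserts implicitly when it claims $F_{\Eta|d,w}$ is continuously differentiable.
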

\begin{proof} Appendix \ref{appendix_proof_lemma_distrib_c1} \hfill \end{proof}

\begin{Lemma}\label{difference_instru}
Under Assumption \ref{identification2}, 
\begin{align*}
\frac{\partial F_{C_d |d, 1}(c_d) p_{D|1}(d)}{\partial c_d} \neq \frac{\partial F_{C_d |d, 0}(c_d) p_{D|0}(d)}{\partial c_d}  \quad \text{ for all } c_d \in \mathcal{C}_d \backslash \mathcal{K}^{c_d}, \text{ for } d=0,1, 
\end{align*} 
where $\mathcal{K}^{c_d}$ is a (possibly empty) finite set containing $K$ values. 
\end{Lemma}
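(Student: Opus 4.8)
The plan is to recognize the product $F_{C_d|d,w}(c_d)\,p_{D|w}(d)$ as a joint sub-distribution function, rewrite it as an integral over the rank variable $\Eta$ using the strict monotonicity of $c_d^*$ from Assumption~\ref{monotone}, differentiate, and show that the $w$-dependence of the resulting derivative is carried entirely by the CCP $\textrm{Pr}(D=d\mid\Eta=\eta,W=w)$, whose sensitivity to $w$ is exactly what Assumption~\ref{identification2} controls.

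Concretely, I would first note that
\[
F_{C_d|d,w}(c_d)\,p_{D|w}(d)=\textrm{Pr}(C_d\le c_d,\,D=d\mid W=w).
\]
Let $\eta_d(\cdot)=(c_d^*)^{-1}(\cdot)$ denote the inverse of the optimal policy $c_d^*$, which is well defined, strictly increasing and continuously differentiable by Assumption~\ref{monotone} (inverse function theorem). Since $C_d=c_d^*(\Eta)$ with $c_d^*$ strictly increasing, the event $\{C_d\le c_d\}$ coincides with $\{\Eta\le\eta_d(c_d)\}$. Using $\Eta\sim\mathcal{U}(0,1)$ and $\Eta\perp W$ given $X$ (Assumptions~\ref{contshock} and \ref{indep_shock}),
\[
F_{C_d|d,w}(c_d)\,p_{D|w}(d)=\int_0^{\eta_d(c_d)}\textrm{Pr}(D=d\mid\Eta=\eta,W=w)\,d\eta .
\]

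Next I would differentiate in $c_d$. By the fundamental theorem of calculus and the chain rule,
\[
\frac{\partial\,F_{C_d|d,w}(c_d)\,p_{D|w}(d)}{\partial c_d}
=\textrm{Pr}\!\big(D=d\mid\Eta=\eta_d(c_d),W=w\big)\,\eta_d'(c_d),
\]
where $\eta_d'(c_d)=1/(c_d^*)'(\eta_d(c_d))>0$ does not depend on $w$. Subtracting the $w=0$ expression from the $w=1$ expression, the common positive Jacobian factor $\eta_d'(c_d)$ factors out, so the two derivatives differ if and only if
\[
\textrm{Pr}\!\big(D=d\mid\Eta=\eta_d(c_d),W=1\big)\neq\textrm{Pr}\!\big(D=d\mid\Eta=\eta_d(c_d),W=0\big).
\]
For binary $D$ this holds for $d=1$ exactly when it holds for $d=0$ (the two probabilities sum to one), and both are the content of Assumption~\ref{identification2}, which holds for every $\eta\in\mathcal{H}\setminus\mathcal{K}_x$. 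Finally I would set $\mathcal{K}^{c_d}=\{c_d^*(\eta):\eta\in\mathcal{K}_x\}$; as the image of the $K$-point set $\mathcal{K}_x$ under the injection $c_d^*$, it is finite with $K$ values, and $c_d\notin\mathcal{K}^{c_d}$ is equivalent to $\eta_d(c_d)\notin\mathcal{K}_x$, which delivers the claim for all $c_d\in\mathcal{C}_d\setminus\mathcal{K}^{c_d}$.

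The main obstacle is the differentiation step: applying the fundamental theorem of calculus pointwise (rather than merely almost everywhere) requires that $\eta\mapsto\textrm{Pr}(D=d\mid\Eta=\eta,W=w)$ be continuous. I would justify this from the structural representation of the CCP given in the text, since $\max_{c} v_d(c,\eta)$ is continuous in $\eta$ (the payoffs are $C^2$ by Assumption~\ref{monotone}) and $\Epsilon$ is continuously distributed with full support (Assumption~\ref{discshock}), so the selection probability is a continuous, indeed smooth, function of $\eta$. The remaining care is purely bookkeeping: tracking the exceptional set through the change of variables and invoking the inverse function theorem to ensure $\eta_d$ is itself $C^1$ with strictly positive derivative.
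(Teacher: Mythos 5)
Your proof is correct and follows essentially the same route as the paper's: both rest on the integral representation $F_{C_d|d,w}(c_d)\,p_{D|w}(d)=\int_0^{(c_d^*)^{-1}(c_d)}\textrm{Pr}(D=d\mid\Eta=\eta,W=w)\,d\eta$ (which the paper reaches via Bayes' rule and the uniform normalization in $\eta$-space before transferring to $c$-space, while you reach it directly through the joint sub-distribution and a change of variables), followed by differentiation, factoring out the strictly positive Jacobian, and the complementarity of the two CCPs for binary $D$. Your explicit attention to the pointwise continuity needed for the fundamental theorem of calculus is a point the paper handles only implicitly through Lemma \ref{distrib_c1}.
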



\begin{proof} Appendix \ref{appendixnonflat} \hfill \end{proof}

\indent Lemmas \ref{distrib_c1} and \ref{difference_instru} fully characterize the impact of the structural assumptions on the reduced-form functions. 
Lemma \ref{distrib_c1} is a regularity result on the distributions implied by the structural form. 
Lemma \ref{difference_instru} provides observable and testable implications of the structural model, specifically of Assumption \ref{relevance}, on the reduced-form functions. 
Indeed, in Assumption \ref{identification2}, $\textrm{Pr}(D=d | \Eta=\eta, W=w)$ is unobserved since $\eta$ is unobserved. 
However, by monotonicity of the optimal continuous choices, the observed conditional distributions of $C_d$ given $D=d$ are transformations of the unobserved conditional distributions of $\Eta$ given $D=d$. Define the difference 
\begin{align*}
	\Delta F_{C_d}(c_d) = F_{C_d |d, 1}(c_d) p_{D|1}(d) - F_{C_d |d, 0}(c_d) p_{D|0}(d) \quad  \text{ for } d=0,1.
\end{align*}
I show that when the instrument is relevant, i.e., when $\textrm{Pr}(D=d | \Eta=\eta, W=1) \neq \textrm{Pr}(D=d | \Eta=\eta, W=0)$, we have $\partial (\textrm{Pr}(\Eta \leq \eta |D=d, W=1) - \textrm{Pr}(\Eta \leq \eta |D=d, W=0))/\partial \eta \neq 0$ and $\partial \Delta F_{C_d}(c_d^*(\eta))/\partial c_d \neq 0$. 
Now, the functions $\Delta F_{C_d}(c_d)$ and $\partial \Delta F_{C_d}(c_d)/\partial c_d$ are well defined (according to Lemma \ref{distrib_c1}) and are directly observable. Therefore, even if we do not observe the conditional distribution of $\Eta$ given $D=d$, we know that if the instrument is sufficiently relevant (Assumption \ref{relevance}), Lemma \ref{difference_instru} holds. I use this to test the relevance of the instrument: if the function $\Delta F_{C_d}(c_d)$ is \textit{flat} over an interval of values $c_d$, then there is a corresponding interval of values $\eta$ where the instrument is not relevant. In this case, the instrument has no impact on the conditional choice probabilities, so the optimal continuous choices are not point identified on this interval of $\eta$. 

\subsubsection{Identification of conditional continuous choices (CCCs)}\label{subsection_ccc}

As in the literature on continuous choice \citep{matzkin2003, bbl2007, hs2010}, I would like to exploit the monotonicity assumption to identify the optimal continuous choices. 
By monotonicity, we have for $d=0,1$, 
\begin{flalign*}
\textrm{Pr}(\Eta \leq \ \eta \ | D=d) &= \textrm{Pr}(C_d \leq \ c_d^*(\eta) \ | D=d)  &\\
\text{and, hence, by Lemma \ref{distrib_c1},} \quad & &\\ 
 c_d^*(\eta) &= F_{C_d | d}^{-1}(\textrm{Pr}(\Eta \leq \eta | D=d)). &
\end{flalign*}
Thus, \textit{if} we knew the distribution of $\Eta$ given $D=d$, we could recover the optimal conditional continuous choices $c_d^*(\eta)$. 
However, here we only know (by normalization) the unconditional distribution of $\Eta$. 
The conditional distributions of $\Eta$ given $D=d$ are unobserved. They depend on a selection on unobservables: $\textrm{Pr}(\Eta \leq \eta | D=d) = \textrm{Pr}(D=d | \Eta \leq \eta) \textrm{Pr}(\Eta \leq \eta)/\textrm{Pr}(D=d)$, which precludes the use of inversion. \\ 
%
%
%
%
%
%
%
%
%
%
%
%
\indent Another way to see the problem is as follows. 
Since $\Eta$ is exogenous, 
\begin{align*}
\textrm{Pr}(\Eta \leq \ \eta) &= \overbrace{\textrm{Pr}( C_d \leq c_d^*(\eta) )}^{\text{unobserved}} \quad \text{ for } d=0,1 \\
&= \textrm{Pr}( C_d \leq c_d^*(\eta); D=0 \text{ or } D=1)  \\
&= \underbrace{\textrm{Pr}( C_0 \leq c_0^*(\eta); \ D=0)}_{\text{observed}} + \underbrace{\textrm{Pr}( C_0 \leq c_0^*(\eta); \ D=1)}_{\text{unobserved}} \\
&= \underbrace{\textrm{Pr}( C_1 \leq c_1^*(\eta); \ D=0)}_{\text{unobserved}} + \underbrace{\textrm{Pr}( C_1 \leq c_1^*(\eta); \ D=1)}_{\text{observed}} .
\end{align*}
\textit{If} we observed both potential choices $C_0$ and $C_1$ for every individual, irrespective of the discrete choice $d$, 
then the unconditional distribution of $C_d$, $F_{C_d}$, would be observed for $d=0,1$. 
Then, knowing that $\Eta$ is uniform, one could exploit monotonicity to recover $c_d^*(\eta)$ by inverting its unconditional distribution:  $c_d^*(\eta) = F_{C_d}^{-1}(\textrm{Pr}(\Eta \leq \eta))$. 
However, we only observe $C_0$ if $D=0$ and $C_1$ if $D=1$. Because of this selection, only the conditional distributions of $C_d$ given $D=d$ are observed, and $c_d^*(\eta)$ is not identified by standard inversion. \\

\noindent \textbf{Identification with the instrument.} \\
\noindent Instead, to identify $c_d^*(h)$, I use the properties of the instrument (Assumption \ref{instrument}) to obtain structural restrictions. We have,  for $\eta \in [0,1]$ and $w=0,1$, 
\vspace{-5pt}
\begin{align}\label{bayes21}
\eta &= \quad \textrm{Pr}(\Eta \leq \eta) \nonumber \\
&= \quad \textrm{Pr}(\Eta \leq \eta| W=w) \nonumber \\
&= \quad \textrm{Pr}(\Eta \leq \eta \ | \  D = 0, W=w) \textrm{Pr}(D=0 | W=w)  \nonumber\\ 
&\quad + \ \textrm{Pr}(\Eta \leq \eta \ | \  D = 1, W=w) \textrm{Pr}(D=1 | W=w) \nonumber \\
&= \quad \textrm{Pr}(C_0 \leq c^*_0(\eta) \ | \  D = 0, W=w) \textrm{Pr}(D=0 | W=w) \nonumber \\
&\quad + \ \textrm{Pr}(C_1 \leq c^*_1(\eta) \ | \  D = 1, W=w) \textrm{Pr}(D=1 | W=w) \nonumber \\ 
&= F_{C_0 |0, w}(c^*_0(\eta)) p_{D|w}(0) + F_{C_1 |1, w}(c^*_1(\eta)) p_{D|w}(1).  
\end{align}
%
\noindent Now, take equation (\ref{bayes21}) at $w=0$ and $w=1$ to obtain a system of two equations to identify two unknown increasing functions, $c^*_0(\cdot)$ and $c^*_1(\cdot)$. The role of the instrument and Assumption \ref{instrument} appears clearly here. First, the exclusion of $w$ from $c_d^*$ is necessary to avoid having four unknown functions $c_d^*(\eta, w)$ (where $d=0,1$ and $w=0,1$) which would not be identified with two equations. 
Similarly, without a relevant instrument (e.g., if $D \perp W$), 
$p_{D|0}(d) = p_{D|1}(d)$ and $F_{C_d | d, 0}(c) = F_{C_d | d, 1}(c)$ for $d=0,1$, so the two equations would coincide, giving one equation for two unknown functions.  \\

\vspace{-10pt}

\begin{theorem}[Identification]\label{identification_theorem}
For every reduced form compatible with the structural model, there exist unique conditional continuous choice (CCC) functions $c_d(h)$ ($d=0, 1$) that are strictly increasing and satisfy
{\leqnomode
\begin{equation}\label{bayes2}
\eta = F_{C_0 |0, w}(c_0(\eta)) p_{D|w}(0) + F_{C_1 |1, w}(c_1(\eta)) p_{D|w}(1)  \quad \text{ for } \eta \in [0, 1], \  w=0,1. 
\end{equation} 
}
\end{theorem}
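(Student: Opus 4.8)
The plan is to treat existence and uniqueness separately, using that~(\ref{bayes2}) is exactly the identity~(\ref{bayes21}) satisfied by the true structural policies. Existence is immediate: a reduced form compatible with the structural model is generated by primitives obeying Assumptions~\ref{additive}--\ref{relevance}, whose optimal conditional continuous choices $c_0^*,c_1^*$ are strictly increasing in $\eta$ (Assumption~\ref{monotone}) and satisfy~(\ref{bayes2}) at $w=0,1$ by the law-of-total-probability computation in~(\ref{bayes21}); Lemma~\ref{distrib_c1} makes the right-hand side a well-defined strictly increasing $C^1$ object. Before turning to uniqueness I would pin the boundary values shared by every solution: since the weights $p_{D|w}(d)$ are strictly positive (Assumption~\ref{maxu}) and the CDFs are nonnegative, evaluating~(\ref{bayes2}) at $\eta=0$ forces each summand to vanish, so $c_d(0)=\inf\mathcal{C}_d$; symmetrically $\eta=1$ gives $c_d(1)=\sup\mathcal{C}_d$.

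For uniqueness, suppose $(c_0,c_1)$ and $(\tilde c_0,\tilde c_1)$ both solve~(\ref{bayes2}), and write $G_w(c_0,c_1)$ for the right-hand side, which is strictly increasing in each argument. Two observations structure the argument. First, the coordinates move together: if the solutions agree in one coordinate at some $\eta$, strict monotonicity of the relevant CDF forces agreement in the other, and they can never satisfy $c_0>\tilde c_0$ together with $c_1>\tilde c_1$, since that would make $G_w$ strictly larger than $\eta$. Hence on any maximal interval of disagreement the sign pattern is fixed, say $c_0>\tilde c_0,\ c_1<\tilde c_1$. Second, differentiating~(\ref{bayes2}) in $\eta$ yields, for each solution, the linear system $M(\eta)(c_0',c_1')^\top=(1,1)^\top$ with $M=\begin{pmatrix} a_0 & b_0\\ a_1 & b_1\end{pmatrix}$, $a_w=f_{C_0|0,w}(c_0)p_{D|w}(0)$, $b_w=f_{C_1|1,w}(c_1)p_{D|w}(1)$. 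The crucial algebraic fact is that Lemma~\ref{difference_instru} is precisely $a_0\neq a_1$ and $b_0\neq b_1$ off the finite set, and $\det M=0$ would make the rows proportional, which together with the right-hand side $(1,1)$ forces $a_0=a_1,\ b_0=b_1$ --- contradicting relevance. Thus $\det M\neq 0$ wherever Assumption~\ref{identification2} holds, and Cramer's rule determines $(c_0',c_1')$ as a vector field in $(c_0,c_1)$.

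Equivalently and more transparently, subtracting the $w=0$ and $w=1$ equations eliminates $\eta$ and gives the curve relation $\Delta F_{C_0}(c_0)+\Delta F_{C_1}(c_1)=0$ that every solution traces; reparametrizing by $c_0$ (valid since $c_0$ is strictly increasing in $\eta$) and writing $c_1=h(c_0)$ gives $h'=(a_0-a_1)/(b_1-b_0)$, with nonzero numerator and denominator by Lemma~\ref{difference_instru} and endpoints pinned by the boundary values above, after which the single monotone equation recovers $\eta\mapsto(c_0,c_1)$.

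The hard part is converting a determined derivative into a determined function, and two obstacles must be handled. Lemma~\ref{difference_instru} allows relevance to fail at a finite set, where $\Delta F_{C_d}$ may have a turning point and $h$ ceases to be locally a graph; these points are isolated, so I would bridge them by continuity, using strict monotonicity of the solutions to rule out any branch-switching across an isolated bad point. The regularity is also delicate: Lemma~\ref{distrib_c1} only provides continuous densities, so the vector field is merely continuous and Picard uniqueness does not apply. I would therefore close by monotone comparison: at any $\eta$ in a disagreement interval the two solution points are two distinct intersections of the level curves $G_0=\eta$ and $G_1=\eta$, so it suffices to establish single crossing; since the slope difference of these curves at an intersection equals $-\det M/(b_0 b_1)$, this reduces to controlling the sign of $\det M$ between the intersections, which is exactly where the relevance condition does the work. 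Anchoring at the common boundary values then yields global uniqueness.
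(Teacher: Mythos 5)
Your skeleton overlaps substantially with the paper's proof: existence by construction from (\ref{bayes21}), then elimination of $\eta$ across $w=0,1$ to get the curve relation $\Delta F_{C_0}(c_0)=-\Delta F_{C_1}(c_1)$, with strict monotonicity selecting the branch (your ``version B''). But the paper completes this step by \emph{monotone inversion}, not by differential equations: for $K=0$ it writes $\tilde{c}_0(c_1)=\Delta F_{C_0}^{-1}\bigl(-\Delta F_{C_1}(c_1)\bigr)$, and for $K>0$ it inverts segment by segment (Appendix \ref{appendix_proof_identification}). No derivative of the unknown solution is ever taken, so the ``Picard'' obstacle you flag is an artifact of your framing rather than of the problem; indeed, differentiating (\ref{bayes2}) along a candidate solution is itself unjustified, since a competing solution is a priori only strictly increasing, not differentiable.

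The two places where your argument genuinely fails are the ones you rely on to close uniqueness. First, the single-crossing step: Lemma \ref{difference_instru} gives $a_0\neq a_1$ and $b_0\neq b_1$ off finite sets, but this does \emph{not} control $\det M=a_0b_1-a_1b_0$ at an arbitrary intersection of the level curves (take $a_0=1,a_1=2,b_0=2,b_1=4$: both inequalities hold, yet $\det M=0$). Your rows-proportionality argument that $\det M=0$ forces $a_0=a_1,\,b_0=b_1$ uses consistency of $M x=(1,1)^\top$, which you only know along a differentiable solution path --- precisely what cannot be assumed at a putative second intersection point. Worse, along the \emph{true} path one can show $\det M$ is a positive multiple of $\textrm{Pr}(D=0\mid \Eta=\eta,W=0)-\textrm{Pr}(D=0\mid \Eta=\eta,W=1)$, which vanishes (and, in the situation of Figure \ref{fig:identification2}, changes sign) at the $K$ exceptional points; so the constant-orientation crossing argument breaks down exactly in the $K>0$ case it is meant to handle. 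Second, your bridging of the isolated bad points by ``continuity plus monotonicity'' is missing the structural fact that makes it work: as established in the proof of Lemma \ref{difference_instru}, $\Delta F_{C_0}(\cdot)$ and $-\Delta F_{C_1}(\cdot)$ are both reparametrizations, through the strictly increasing CCCs, of the same function $\eta\mapsto\int_0^\eta\bigl[\textrm{Pr}(D=0\mid\Eta=\tilde{\eta},W=1)-\textrm{Pr}(D=0\mid\Eta=\tilde{\eta},W=0)\bigr]d\tilde{\eta}$, hence hit the same critical values in the same order with matching ranges on corresponding monotone segments. That is what licenses the ``$k$-th crossing maps to $k$-th crossing'' pairing and the piecewise inversion; continuity and monotonicity alone do not exclude an increasing path that pairs segments incorrectly. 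Your boundary-value observation and the ``coordinates move together'' lemma are correct and useful, but they cannot substitute for these two missing pieces.
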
 
\vspace{0.5\baselineskip}
\indent The CCC functions are identified if and only if there exist unique functions $c_d(\eta)$, $d=0,1$, that are strictly increasing in $\eta$, satisfy equation (\ref{bayes2}), and are compatible with the reduced form $R$. \\ 
%
%
\vspace{-10pt}

%
%

\begin{proof}
The existence of a solution is trivial: the reduced form is compatible with the structural model, so, by construction following equation (\ref{bayes21}), $c^*_d(\cdot)$ ($d=0,1$) solve (\ref{bayes2}). 
%
%

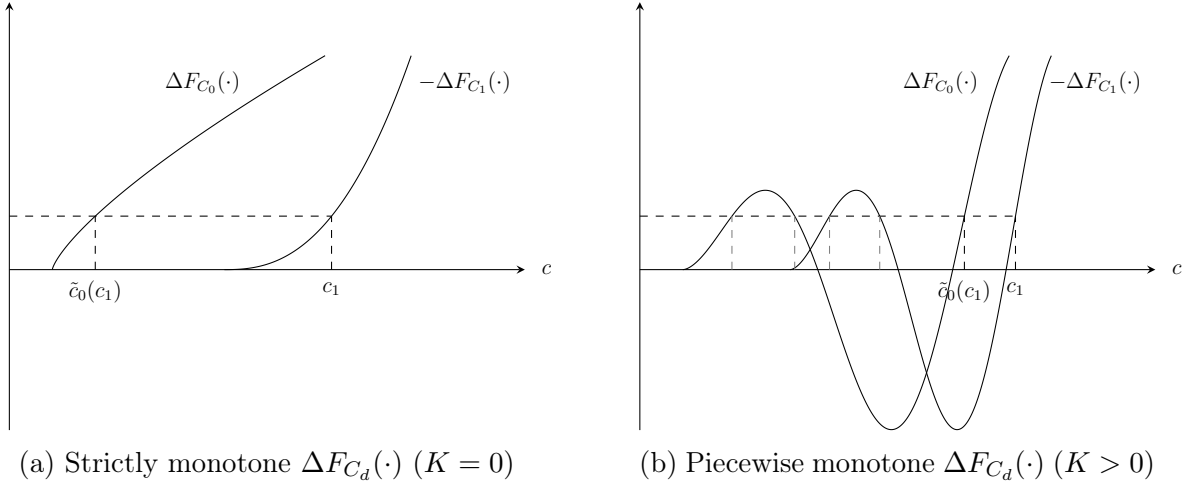
\begin{figure}[!t]
  \centering
  \begin{subfigure}[b]{0.45\textwidth}
    \begin{tikzpicture}[scale=1]
\begin{axis}[
	clip=false,
    axis lines = middle,
    ytick=\empty, xtick=\empty,
    ymin=-3, ymax=5,
    xmin=0, xmax=1.2, 
]
\addplot [domain=0.1:0.735, samples=1000, color=black,]{(10*(x-0.1))^0.75};
\addplot [domain=0.35:0.9353, samples=1000, color=black,]{(4*(x-0.5))^2.5};
\node[scale=0.7] at (axis cs:1.25,0) {$c$};
\node[scale=0.7] at (axis cs:1.06,3.5) {$-\Delta F_{C_1}(\cdot)$};
\node[scale=0.7] at (axis cs:0.45,3.5) {$\Delta F_{C_0}(\cdot)$};
\draw[dashed] (axis cs:0.75,0) -- (axis cs:0.75,1);
\node[below, scale=0.7] at (axis cs:0.75,-0.10) {$c_1$};
\draw[dashed] (axis cs:0.2,0) -- (axis cs:0.2,1);
\node[below, scale=0.7] at (axis cs:0.2,-0.05) {$\tilde{c}_0(c_1)$};
\draw[dashed] (axis cs:0,1) -- (axis cs:0.75,1);
\end{axis}
\end{tikzpicture}
\caption{Strictly monotone $\Delta F_{C_d}(\cdot)$ $(K = 0)$}\label{fig:identification1}
\end{subfigure}%
   \hfill
  \begin{subfigure}[b]{0.45\textwidth}
\begin{tikzpicture}[scale=1]
\begin{axis}[
	clip=false,
    axis lines = middle,
    ytick=\empty, xtick=\empty,
    ymin=-3, ymax=5,
    xmin=0, xmax=1.2, 
]
\addplot [domain=0.1:0.859773, samples=1000, color=black,]{(10*(x-0.1))^0.7*sin(deg(10*(x-0.1)))};
\addplot [domain=0.35:0.957818, samples=1000, color=black,]{(12.5*(x-0.35))^0.7*sin(deg(12.5*(x-0.35)))};
\node[scale=0.7] at (axis cs:1.25,0) {$c$};
\node[scale=0.7] at (axis cs:1.06,3.5) {$-\Delta F_{C_1}(\cdot)$};
\node[scale=0.7] at (axis cs:0.7,3.5) {$\Delta F_{C_0}(\cdot)$};
\draw[dashed] (axis cs:0.87437446,0) -- (axis cs:0.87437446,1);
\node[below, scale=0.7] at (axis cs:0.87437446,-0.10) {$c_1$};
\draw[dashed] (axis cs:0.75546807,0) -- (axis cs:0.75546807,1);
\node[below, scale=0.7] at (axis cs:0.75546807,-0.05) {$\tilde{c}_0(c_1)$};
\draw[dashed] (axis cs:0,1) -- (axis cs:0.87437446,1);
\draw[gray, dashed] (axis cs:0.5583563,0) -- (axis cs:0.5583563,1);
\draw[gray, dashed] (axis cs:0.441506,0) -- (axis cs:0.441506,1);
\draw[gray, dashed] (axis cs:0.2143825,0) -- (axis cs:0.2143825,1);
\draw[gray, dashed] (axis cs:0.36044535,0) -- (axis cs:0.36044535,1);
\end{axis}
\end{tikzpicture}
\caption{Piecewise monotone $\Delta F_{C_d}(\cdot)$ $(K > 0)$}\label{fig:identification2}
\end{subfigure}%
\caption{Intuition behind identification}\label{fig:identification}
\vspace{1em}
\end{figure}

\noindent To prove uniqueness, combine the two equations of (\ref{bayes2}) to give, for $\eta \in \mathcal{H}$, 
\begin{flalign*}
  F_{C_0 |0, 0}(c^*_0(\eta)) p_{D|0}(0) + F_{C_1 |1, 0}(c^*_1(\eta)) p_{D|0}(1) &= F_{C_0 |0, 1}(c^*_0(\eta)) p_{D|1}(0) + F_{C_1 |1, 1}(c^*_1(\eta)) p_{D|1}(1) & \\ 
\text{and hence, after rearranging, } \quad \quad \quad \quad \quad \quad \quad & & \\
\quad   \Delta F_{C_0}(c^*_0(\eta)) &= - \Delta F_{C_1}(c^*_1(\eta)),  &
\end{flalign*}
where the functions $\Delta F_{C_d}(\cdot)$ are directly observed from the data. Now, even without observing $\eta$, if two conditional choices $\tilde{c_0}$ and $\tilde{c_1}$ correspond to the same unobserved $\eta$, we have $\Delta F_{C_0}(\tilde{c_0}) = - \Delta F_{C_1}(\tilde{c_1})$. That is, 
\begin{equation}\label{mapping_sketch}
\Delta F_{C_0}(\tilde{c_0}(c_1)) = - \Delta F_{C_1}(c_1) \text{ for all } c_1 \in c_1^*(\mathcal{H}) = \mathcal{C}_1, 
\end{equation}
where $\tilde{c_0} = c_0^*\circ {c_1^*}^{-1}$ is strictly increasing. 
The mapping $\tilde{c_0}$ is identified if and only if there exists a unique strictly increasing function solving (\ref{mapping_sketch}). 
Notice that the functions $\Delta F_{C_d}(\cdot)$ are transformations (through $c_d^*(\cdot)$) of the same underlying object based on the difference between $\textrm{Pr}(D=0 | \Eta=\eta, W=1) - \textrm{Pr}(D=0 | \Eta = \eta, W=0)$.\footnote{Specifically, as shown in Appendix \ref{appendixnonflat}, for $d=0, 1$, \begin{align*} \Delta F_{C_d}(c) = (-1)^d \int^{(c_d^*)^{-1}(c)}_{0} \Big( \textrm{Pr}(D=0 | \Eta=\eta, W=1) - \textrm{Pr}(D=0|\Eta = \eta, W=0) \Big) d\eta. \end{align*} } 
Thus $- \Delta F_{C_1}(\cdot)$ is a non-constant shift of $\Delta F_{C_0}(\cdot)$, i.e., both functions go through the same values in the same order. 
So, in the simple case where the instrument is relevant everywhere (Figure \ref{fig:identification1}), the functions $\Delta F_{C_0}(\cdot)$ and $-\Delta F_{C_1}(\cdot)$ are strictly monotone and continuously differentiable, they have the same range and we can invert (\ref{mapping_sketch}) to get the unique solution: 
\begin{align*}
\tilde{c_0}(c_1) = \Delta F_{C_0}^{-1}\Big(-\Delta F_{C_1}(c_1)\Big) \text{ for all } c_1 \in \mathcal{C}_1.
\end{align*}

Now, if the instrument is not relevant at $K$ isolated values of $\eta$ (Figure \ref{fig:identification2}), the functions $\Delta F_{C_d}(\cdot)$ are not strictly monotone and not invertible. However, they are piecewise monotone and piecewise invertible. So, using the monotonicity constraint on $\tilde{c_0}(c_1)$, the solution to (\ref{mapping_sketch}) is also unique in this case. 
Indeed, for any value $y$ in the range of $\Delta F_{C_0}(\cdot)$, the $k^{th}$ value $c_0$, denoted $c_0^k$, such that $\Delta F_{C_0}(c_0^k) = y$ for the $k^{th}$ time, is the image of the $k^{th}$ value of $c_1$, denoted $c_1^k$, such that $ - \Delta F_{C_1}(c_1^k) = y$ for the $k^{th}$ time. 
Even though there may exist several solutions (at most $K+1$) such that $y = \Delta F_{C_0}(c_0) = \Delta F_{C_1}(c_1)$ for any given $y$, there is only one $\tilde{c_0}(c_1)$ that is strictly increasing and satisfies (\ref{mapping_sketch}) for all $c_1$.\footnote{For more details on the proof when $K > 0$, see Appendix \ref{appendix_proof_identification}.} \\ 
%
%
\indent The only case in which uniqueness does not hold is when the functions $\Delta F_{C_d}(\cdot)$ are flat on  some interval, i.e., when the instrument is not relevant on an interval of $\eta$. 
In this case, $\tilde{c_0}(c_1)$ is partially identified: it is point identified everywhere except on the flat part where there exists an infinite number of solutions satisfying (\ref{mapping_sketch}). \\
\indent Once we identify $\tilde{c}_0(c_1)$, we recover the corresponding unobserved $\eta$ as  
\begin{align*}
\eta(c_1) =  F_{C_0 |0, w}(\tilde{c}_0(c_1)) p_{D|w}(0) + F_{C_1 |1, w}(c_1) p_{D|w}(1) \quad \text{ for all } c_1 \in \mathcal{C}_1, \text{ for any } w = 0,1. 
\end{align*}
Thus we have a unique increasing solution $(\eta(c_1), \tilde{c}_0(c_1))$ for all $c_1 \in \mathcal{C}_1$. Finally, using strict monotonicity of these functions, we obtain $c_1^*(\eta) = \eta^{-1}(\eta(c_1))$ and $c_0^*(\eta) = \tilde{c_0}(c_1(\eta))$ for all $\eta$. 
Thus the optimal continuous policy functions $(c^*_0(\eta), c^*_1(\eta))$ for all $\eta \in [0,1]$ are identified as the unique solution to system (\ref{bayes2}).  \end{proof}

\indent One key point in this proof is that identifying assumptions can be relaxed by exploiting the strict monotonicity of $c_d^*(\eta)$. Indeed, even though (\ref{bayes2}) may have multiple non-monotone solutions, only one solution is strictly monotone.  
Without monotonicity, I would not obtain general point identification unless I assumed a stronger version of Assumption \ref{relevance}, for example, assuming $\textrm{Pr}(D=0 | \Eta=\eta, W=1) - \textrm{Pr}(D=0 | \Eta=\eta, W=0) > 0$ for all $\eta$. This would be close to the full rank assumption on the effect of the instrument on the selection in order to identify quantile treatment effects \citep{neweypowell2003, chernozhukovhansen2005, chernozhukovhansen2006, chernozhukovhansen2008}, which is in fact stronger than necessary. 
\cite{vuongxu2017} also exploit the power of monotonicity to relax \cite{chernozhukovhansen2005}'s full rank condition and still identify binary treatment effects. Their weaker condition remains at a "high level", while I show that it can be easily expressed in terms of the CCPs (Assumption \ref{relevance}). 
\subsubsection{Identification of conditional choice probabilities (CCPs)}\label{subsection_ccp}

Now that the CCCs, 
$c_d^*(\cdot)$ for $d=0,1$, are identified, identification of the conditional choice probabilities follows readily. 
Since $\Eta \sim \mathcal{U}(0, 1)$, we have $F_{C_d}(c_d) = {c_d^{*}}^{-1}(c_d)$, with corresponding density $f_{C_d}(c_d) = \partial {c_d^{*}}^{-1}(c_d) / \partial c_d$ for $d=0,1$. 
The CCPs are identified from 
\begin{align}\label{eq_identification_ccp}
	\textrm{Pr}(D=d | \Eta = \eta, W=w) &= \textrm{Pr}(D=d | C_d = c_d^*(\eta), W=w)  \notag \\
	&=	f_{D, C | w}(d, c_d^*(\eta)) / f_{C_d | w}(c_d^*(\eta)) \notag \\
	&=	f_{D, C | w}(d, c_d^*(\eta)) / f_{C_d}(c_d^*(\eta)),
\end{align}
where $f_{D, C | w}(\cdot)$ is the joint conditional density of $D$ and $C$ given $W=w$, and $f_{C_d | w}(\cdot)$ is the conditional density of $C_d$ given $W=w$. \\
%
%
%
%
\indent Alternatively, to identify the CCPs notice that, since $c^*_d(.)$ is strictly monotone, one can recover $\Eta$ from observing $(D, C)$ as $\Eta = (c^*_D)^{-1} \big( C \big).$
From there, it is as if $(D, C, W, \Eta)$ was observed from the data. Thus, the CCPs are identified once $\Eta$ is recovered from inverting the CCCs. 

\subsubsection{Identification of the payoffs}\label{subsection_static_payoffs}

Once the optimal policies are nonparametrically identified, we can use them to identify the remaining primitives of the model. 
For example, the differences in payoffs between $D=1$ and $D=0$ at the corresponding optimal continuous choice are semi-parametrically identified by the CCPs using \citeauthor{hm1993}'s (\citeyear{hm1993}): the result depends on the distribution of $\Epsilon$.
Generally, the payoffs can be nonparametrically identified via the CCPs and CCCs, but the identification depends on the model itself, in particular, on how the CCCs relate to the marginal utilities. Typically, in the case of dynamic problems, the marginal utility can be nonparametrically identified using the first order conditions, Euler equations and the CCCs and CCPs. 

\section{Dynamic models}\label{section_dynamic}

In this section, I show how general single-agent (possibly non-stationary) dynamic models can be nonparametrically identified. The main idea is that these dynamic models yield intra-period problems as described in Section \ref{allsection_framework}, thus the optimal choices (CCCs and CCPs) are identified period by period following Section \ref{section_identification}, and we can then use these optimal choices to identify the primitives of the model \citep[in the spirit of][]{bmm1997}. Note that all this Framework is implicitly conditional on unobserved types $m$, which are identified beforehand, following Section \ref{section_types}, and which I abstract from in the notation.  
The framework is general and nests many life-cycle empirical applications of interest \citep[e.g., ][]{bcms2016, iskhakov2017}. I focus on the leading example of a dynamic model of labor and consumption choices.

\subsection{Dynamic life-cycle model of labor and consumption}\label{section_dynamic_example}

Focus on a general dynamic model of labor and consumption choices with a finite horizon ($T < \infty$) (but the arguments also apply when the horizon is infinite).\footnote{In fact, in the case of an infinite horizon with a stationary environment the identification is considerably simplified because the optimal choices are time-independent.} \\ 
\indent Each period $t$ until $T$, the timing of the individual's problem is the following:

\setlength{\baselineskip}{15pt}
\begin{center}
\begin{tikzpicture}[x=2.5cm]
\draw[->,thick,>=latex]
  (0,0) -- (4.5,0) node[below right] {$\scriptstyle time$};

\draw[thick] (0,-.2) -- (0, .2) node[above] {$t$};
\draw[thick] (1,-.2) -- (1, .2) node[above] {$t^+$};
\draw[thick] (2.2,-.2) -- (2.2, .2) node[above] {$t^{++}$};
\draw[thick] (3.2,-.2) -- (3.2, .2) node[above] {$t+1$};

\draw (0,0) node[align=center, below=6pt] {State \\ $z_t=(x_t, w_t)$ \\};
\draw (1,0) node[align=center, below=6pt] {Shocks $(\epsilon_t, \eta_t)$ \\ occur};
\draw (2.2,0) node[align=center, below=6pt] {Individual \\ picks $(d_{t}, c_{dt})$};
\draw (3.2,0) node[align=left, below=15pt] {...};

\draw[decorate, decoration={brace, amplitude=10pt}, thick] 
  (-0.1,0.8) -- (2.3,0.8) node[midway, above=10pt] {intra period $t$}; 
  
\end{tikzpicture}
\end{center}
\setlength{\baselineskip}{21pt}
\vspace{-1.5em}

\indent The current period conditional utility for action $(d_t, c_{dt})$ at time $t$ is given by
\begin{equation}
\mathcal{U}_{dt}(c_{dt}, x_t, w_t, \eta_t, \epsilon_{t}).
\end{equation}
In this example, as explained earlier, $c_t$ is consumption and $c_{dt}$ are potential consumption choices, with $c_t = c_{0t} (1-d_t) + c_{1t} d_t$, $d_t$ is the labor decision to work or not, 
$x_t$ represents all the covariates and $w_t$ is the instrument. The covariates include variables such as age, education and other demographics, impacting current utility. For notational convenience, $x_t$ also includes variables such as assets or income which do not necessarily directly impact preferences but still have an impact on the consumption choice (and labor choice), notably through their transitions. \\ 
%
\indent I impose additional assumptions on the current utilities which are necessary (but not sufficient) such that the intra-period problems of this dynamic setup fit into the framework of Section \ref{section_framework}. 

\begin{assumptionbis}{additive}[Additive Separability]\label{additivebis}
The shock $\epsilon_t$ enters the payoff additively
\begin{align*}
\mathcal{U}_{dt}(c_{dt}, x_t, w_t, \eta_t, \epsilon_t) = \tilde{u}_{dt}(c_{dt}, x_t, w_t, \eta_t) + \epsilon_{dt}, \quad \text{ for } d_t=0, 1.
\end{align*}
\end{assumptionbis}
\begin{assumptionbis}{instrument}[Instrument]\label{instrumentbis} The instrument $w_t \in \mathcal{W} = \mathcal{D}$ is such that
\begin{align*}
\tilde{u}_{dt}(c_{dt}, x_t, w_t, \eta_t) = u_{dt}(c_{dt}, x_t, \eta_t) +  m_{dt}(x_t, w_t, \eta_t), \quad \text{ for } d_t=0, 1.
\end{align*}
\end{assumptionbis}
\begin{assumptionbis}{monotone}[Monotonicity]\label{monotonebis}
The conditional current utility functions $u_{dt}$ are twice continuously differentiable and
\begin{equation*}
\frac{\partial^2 u_{dt}(c_{dt}, x_t, \eta_t)}{\partial c_{dt} \partial \eta_t} > 0, \quad \text{ for } d_t=0, 1. 
\end{equation*}
\end{assumptionbis}

\vspace{10pt}

\indent In a dynamic context, the individual chooses $(d_t, c_{dt})$ to maximize her expected discounted sum of current and future payoffs. She discounts the future utilities at a rate $\beta$ and forms rational expectations about the transition probabilities. The transitions from $(x_t, w_t, \epsilon_t, \eta_t)$ and the current choices $(c_t, d_t)$ to $(x_{t+1}, w_{t+1}, \epsilon_{t+1}, \eta_{t+1})$ matter for the choices. In particular, how the current choices impact these transitions is especially important for optimal choice: for example, individuals do not consume all their wealth in a given period because they are forward-looking and want to save for the future. 
The impacts of the choices on the transitions are often expressed through a budget constraint like $a_{t+1} = (1+r_t) a_t - c_t + y_t d_t,$ where $a_t$ is the value of assets, $r_t$ the return on assets, and $y_t$ is labor income.  
For now let us be more general and only assume the existence of a general transition density of states and errors, which depend on the choices:
\begin{align*}
f_{X_{t+1}, W_{t+1}, \Epsilon_{t+1}, \Eta_{t+1} | x_{t}, w_t, c_t, d_t, \epsilon_t, \eta_t}(x_{t+1}, w_{t+1}, \epsilon_{t+1}, \eta_{t+1}).
\end{align*}
I make additional assumptions on this density for the model to be identified and to fit into the general framework. 

\begin{assumption}[Conditional independence]\label{cond_indep}
For all $x_t \in \mathcal{X}$, $w_t \in \mathcal{W}$, $\epsilon_t \in \mathcal{E}$, $\eta_t \in \mathcal{H}$, 
\begin{align*}
&f_{X_{t+1}, W_{t+1}, \Epsilon_{t+1}, \Eta_{t+1} | x_{t}, w_t, c_t, d_t, \epsilon_t, \eta_t}(x_{t+1}, w_{t+1}, \epsilon_{t+1}, \eta_{t+1}) \\
&= f_{X_{t+1}, W_{t+1} | x_t, w_t, c_t, d_t}(x_{t+1}, w_{t+1}) \ f_{\Epsilon} (\epsilon_{t+1}) \ f_{\Eta} (\eta_{t+1}). 
\end{align*}
\end{assumption}

\begin{assumption}[Instrument transition exclusion]\label{instru_transi}
For all $x_t \in \mathcal{X}$ and $w_t \in \mathcal{W}$, the current instrument is excluded from the transition density:
\vspace{-10pt}
\begin{align*}
f_{X_{t+1}, W_{t+1} | x_t, w_t, c_t, d_t}(x_{t+1}, w_{t+1}) =  f_{X_{t+1}, W_{t+1} | x_t, c_t, d_t}(x_{t+1}, w_{t+1}). 
\end{align*}
\end{assumption}


\indent I also impose the Assumption \ref{ass_shocks} and \ref{relevance} contemporaneously (i.e., adapted with index $t$). I do not rewrite them for simplicity of exposition. 
First, let me show how the intra-period problem of this dynamic model fits into the intra-period problem described in Section \ref{allsection_framework}, and then discuss further the role of these assumptions.  \\
\indent Knowing the transition densities, the individual chooses $(d_t, c_{dt})$ to sequentially maximize her expected discounted sum of payoffs. 
Let $V_t(z_t) = V_t(x_t, w_t)$ be the (ex ante) value function of this discounted sum of payoffs at the beginning of $t$, just before the shocks $(\epsilon_t, \eta_t)$ are revealed and conditional on behaving according to the optimal decision rule. We have
\begin{align*}
V_t(z_t) \equiv \mathbb{E}\Big[ \sum^T_{\tau = t} \beta^{\tau-t} \underset{D, C_{D\tau}}{\textrm{max}} [ \ u_{D\tau} (C_{D\tau}, X_\tau,  \Eta_{\tau} ) + m_{D}(X_\tau, W_\tau, \Eta_{\tau} ) + \Epsilon_{D\tau} ] \Big]. 
\end{align*}
%
%
%
%
Given the state variable $z_t$ and choice $(d, c_{dt})$ in period $t$, the expected value function in period $t+1$ is
\begin{align*}
\mathbb{E}_{Z_{t+1}}[V_{t+1}(Z_{t+1}) | Z_t=z_t, C_t=c_{t}, D_t=d_t]  =  \int_{\mathcal{Z}_{t+1}} V_{t+1} (z_{t+1}) f_{Z_{t+1} | z_t, c_t, d_t}(z_{t+1}) dz_{t+1}.
\end{align*}
By the conditional independence (Assumption \ref{cond_indep}) and instrument exclusion from the transition (Assumption \ref{instru_transi}), we can remove $W_t$ from the conditioning variables,  
\begin{align*}
\mathbb{E}_{Z_{t+1}}[V_{t+1}(Z_{t+1}) | Z_t=z_t, C_t=c_{t}, D_t=d_t ]  &= \mathbb{E}_{Z_{t+1}}[V_{t+1}(z_{t+1}) | X_t=x_t, C_t=c_{t}, D_t=d_t ] 
\end{align*}

\noindent The ex ante value function can be written recursively: 
\begin{align*}
V_t(z_t) = \mathbb{E}_{\Epsilon, \Eta} \bigg[  \underset{d_t, c_{dt}}{\textrm{max}}  \Big[& \ u_{dt} (c_{dt}, x_t,  \Eta_{t} ) + m_{dt}(x_t, w_t, \Eta_{t} ) + \Epsilon_{dt}   \\
& + \beta \mathbb{E}_{Z_{t+1}}[V_{t+1}(Z_{t+1}) |X_t= x_t, C_t=c_{dt}, D_t=d_t ] \Big] \  \bigg]. 
\end{align*}

\noindent Thus, in each period, after observing $(\Epsilon_t, \Eta_t) = (\epsilon_t, \eta_t)$, the individual chooses $d_t$ and $c_{dt}$ to maximize her expected payoff: 
\begin{align*}
\underset{d_t, c_{dt}}{\textrm{max}} \ \left\{ u_{dt} (c_{dt}, x_t,  \eta_{t} ) + \beta \mathbb{E}_{Z_{t+1}}[V_{t+1}(Z_{t+1}) | X_t=x_t, C_t=c_{dt}, D_t=d_t ] \ + m_{dt}(x_t, w_t, \eta_{t} ) + \epsilon_{dt} \right\}.
\end{align*}
Define the \textit{conditional value functions} $v_{dt}(\cdot)$ as
\begin{align}\label{vd_def}
v_{dt}(c_{dt}, x_t, \eta_t) \ = \  u_{dt} (c_{dt}, x_t,  \eta_{t} ) + \beta \mathbb{E}_{Z_{t+1}}[V_{t+1}(Z_{t+1}) | X_t=x_t, C_t=c_{dt}, D_t=d_t ].
\end{align}
\noindent Now the dynamic model yields the same maximization problem as in the general framework of Section \ref{section_framework}. Every period, the individual selects $d_t$ and $c_{dt}$ to solve: \\
\vspace{-24pt}
\begin{align*}
\underset{d_t, c_{dt}}{\textrm{max}} \quad \left\{ v_{dt}(c_{dt}, x_t, \eta_t) \ +  m_{dt}(x_t, w_t, \eta_{t} ) + \epsilon_{dt} \right\}. \\ 
\end{align*}


\vspace{-24pt}
\begin{Lemma}[Dynamic framework]\label{Lemma_dynamic_match} 
Under Assumptions \ref{additivebis}, \ref{instrumentbis}, \ref{monotonebis}, \ref{cond_indep} and \ref{instru_transi}, Assumptions \ref{additive}, \ref{instrument} and \ref{monotone} are satisfied for the conditional value functions defined in equation (\ref{vd_def}) in the dynamic setup. 
\end{Lemma}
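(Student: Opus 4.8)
The plan is to treat the lemma as a verification: the derivation immediately preceding the statement has already rewritten the dynamic per-period decision as $\max_{d_t,c_{dt}}\{v_{dt}(c_{dt},x_t,\eta_t)+m_{dt}(x_t,w_t,\eta_t)+\epsilon_{dt}\}$, which has exactly the form of the intra-period problem of Section \ref{section_framework}. So all that remains is to check that the three structural restrictions of that framework hold for $v_{dt}$ as defined in (\ref{vd_def}), under the natural identification $v_d\leftrightarrow v_{dt}$, $m_d\leftrightarrow m_{dt}$, $\epsilon_d\leftrightarrow\epsilon_{dt}$ and $\eta\leftrightarrow\eta_t$. I would verify them one assumption at a time, and the common engine throughout is that the conditional independence (Assumption \ref{cond_indep}) strips the continuation value of any dependence on the current shocks $(\epsilon_t,\eta_t)$, while the instrument-transition-exclusion (Assumption \ref{instru_transi}) strips it of any dependence on $w_t$.

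For Assumption \ref{additive}, I would note that $\epsilon_{dt}$ enters the per-period payoff additively and that neither $v_{dt}$ nor $m_{dt}$ involves $\epsilon_{dt}$: the term $m_{dt}$ by Assumption \ref{instrumentbis}, and $v_{dt}$ because, by Assumption \ref{cond_indep}, the transition density of $(X_{t+1},W_{t+1})$—hence the continuation value $\mathbb{E}_{Z_{t+1}}[V_{t+1}(Z_{t+1})\mid X_t,C_t,D_t]$ entering (\ref{vd_def})—does not depend on $\epsilon_t$. Thus additive separability holds with $\tilde v_{dt}=v_{dt}+m_{dt}$. For Assumption \ref{instrument}, the real content is to show that $w_t$ is excluded from $v_{dt}$. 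The current-utility term $u_{dt}$ depends only on $(c_{dt},x_t,\eta_t)$ by Assumption \ref{instrumentbis}; the continuation value is where the work lies, and combining Assumption \ref{cond_indep} with Assumption \ref{instru_transi} yields precisely the reduction $\mathbb{E}_{Z_{t+1}}[V_{t+1}\mid Z_t=z_t,C_t,D_t]=\mathbb{E}_{Z_{t+1}}[V_{t+1}\mid X_t=x_t,C_t,D_t]$ derived above, so $w_t$ drops out. Hence $v_{dt}$ is a function of $(c_{dt},x_t,\eta_t)$ alone and $m_{dt}(x_t,w_t,\eta_t)$ carries all the $w_t$-dependence, which is exactly the decomposition Assumption \ref{instrument} demands.

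For Assumption \ref{monotone}, I would again use that the continuation value in (\ref{vd_def}) is independent of $\eta_t$ (by Assumption \ref{cond_indep}, since the future transition does not depend on the current $\eta_t$). Consequently $\partial v_{dt}/\partial\eta_t=\partial u_{dt}/\partial\eta_t$, and the cross-partial collapses to $\partial^2 v_{dt}/\partial c_{dt}\partial\eta_t=\partial^2 u_{dt}/\partial c_{dt}\partial\eta_t>0$ by Assumption \ref{monotonebis}, giving the required strict monotonicity of $c_{dt}^\ast$ in $\eta_t$.

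The step I expect to be the main obstacle is the remaining regularity requirement of Assumption \ref{monotone}, namely that $v_{dt}$ be twice continuously differentiable; this needs the continuation value to be $C^2$ in $c_{dt}$, which is not automatic because $V_{t+1}$ embeds a maximization and could in principle carry kinks. I would handle this by invoking the smoothing effect of integrating the ex ante value function over the continuously distributed, full-support shocks $(\Epsilon,\Eta)$ (Assumption \ref{ass_shocks}), as in additive random-utility models, so that together with regularity of the transition density the continuation value inherits the needed smoothness in $c_{dt}$. This is exactly the sense in which Assumptions \ref{additivebis}--\ref{monotonebis} are stated to be necessary but not sufficient: establishing this smoothness requires additional regularity on $V_{t+1}$ and on the transition densities beyond the listed conditions, and it is the only point of the proof that is not a direct consequence of the two exclusion/independence assumptions.
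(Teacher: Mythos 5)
Your proof is correct and follows essentially the same route as the paper: the paper's argument (given in the derivation surrounding the lemma rather than in a separate proof environment) likewise verifies Assumption \ref{additive} by construction, obtains the exclusion in Assumption \ref{instrument} by combining Assumptions \ref{cond_indep} and \ref{instru_transi} to drop $w_t$ from the conditioning in the continuation value, and obtains Assumption \ref{monotone} by noting that the cross-partial of the continuation value with respect to $(c_{dt},\eta_t)$ vanishes since it does not depend on $\eta_t$. Your closing observation about the twice-continuous-differentiability requirement is a point the paper glosses over (it implicitly treats the continuation value as smooth in $c_{dt}$), so flagging the need for a smoothing argument via integration over the continuously distributed shocks is a legitimate refinement rather than a deviation.
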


Thus, under Assumptions \ref{additivebis}, \ref{instrumentbis}, \ref{monotonebis}, \ref{cond_indep} and \ref{instru_transi}, all the assumptions of the general framework of Section \ref{section_framework} hold (with the shocks and relevance assumptions directly adapted to the dynamic setup).  
If Assumption \ref{additivebis} holds for the current utility function, then, by construction, Assumption \ref{additive} will hold for the conditional value functions in Equation (\ref{vd_def}). Assumptions \ref{instrumentbis} and \ref{monotonebis} on the current utility do not translate directly into Assumptions \ref{instrument} and \ref{monotone} for the conditional value function. One needs additional assumptions about the transitions, i.e., Assumptions \ref{cond_indep} and \ref{instru_transi}. \\ 
\indent Conditional independence assumptions are standard for the identification and empirical tractability of dynamic discrete choice models \citep{rust1987, blevins2014}. Here, Assumption \ref{cond_indep} implies that the transitions of the state variables are independent of the shocks $(\Epsilon_t, \Eta_t)$. Similarly, the shock transitions are independent of the variables here. There is \textit{no time dependence} on the shocks, which are iid across periods. 
\noindent Crucially, here, in addition to the standard conditional independence, Assumption \ref{instru_transi} also implies that conditional on $(D_t, C_t, X_t)$, the \textit{transitions} are \textit{independent} of the current instrument value $w_t$. In particular, the instrument is excluded from its own transition to future values, conditional on $(D_t, C_t, X_t)$, i.e., 
\begin{flalign*}
&W_{t+1} \perp W_t \  | \ C_t=c_t, D_t=d_t, X_t=x_t &\\ 
\text{ or equivalently}  \quad \quad \  &f_{W_{t+1} | x_t, w_t, c_t, d_t}(w_{t+1}) = f_{W_{t+1} | x_t, c_t, d_t}(w_{t+1}). &
\end{flalign*}
This excludes the possibility of having time-independent instrument (e.g., $w_t = w$ for all $t$). 
%
%
Assumption \ref{instrumentbis} combined with Assumption \ref{instru_transi} will satisfy Assumption \ref{instrument} on the conditional value $v_{dt}$ as shown in the computation above. Without the exclusion of the instrument from the transition, $w_t$ could affect the expected future value function and thus enter the conditional value functions, in which case the exclusion restriction of $w_t$ from the payoff (Assumption \ref{instrument}) would not be satisfied. \\ 
%
%
%
%
\indent Similarly, under Assumption \ref{monotonebis} and the conditional independence of the future from current $\eta_t$ (Assumption \ref{cond_indep}), we have
\begin{align*}
\frac{\partial^2  v_{dt}(c_{dt}, x_t, \eta_t)}{\partial c_{dt} \partial \eta_t } = \frac{\partial^2 u_{dt} (c_{dt}, x_t,  \eta_{t} )}{\partial c_{dt} \partial \eta_t} + \underbrace{\frac{\partial^2 \mathbb{E}_{Z_{t+1}}[V_{t+1}(Z_{t+1}) | X_t=x_t, C_t=c_{dt}, D_t=d_t ]}{\partial c_{dt} \partial \eta_t}}_{=0}, 
\end{align*}
and the monotonicity of the conditional value functions $v_{dt}(\cdot)$, Assumption \ref{monotone}, holds. \\ 
\noindent \textbf{Discussion about the instrument.} 
In many dynamic setups, a convenient instrument that satisfies all the assumptions could be the previous discrete choice, $W_t = D_{t-1}$. In this case, the exclusion from the transition (Assumption \ref{instru_transi}) is likely to be satisfied, because (i) $w_{t+1} = d_t$ in this case, thus conditional on the current $d_t$ choice, $w_{t+1}$ is known irrespective of the value of $w_t$, and (ii) conditional on the current $d_t$ and $x_t$, it is unlikely that $d_{t-1}$ impacts $x_{t+1}$. Moreover, the current exclusion restriction (Assumption \ref{instrumentbis}) is also satisfied because conditional on $x_t$, which may, for example, include work experience, it is unlikely that $d_{t-1}$ impacts the current utility $u_{dt}(\cdot)$. Overall, the effect of $d_{t-1}$ on $c_t$ is subsumed by the effect of $d_t$ on $c_t$. Finally, $d_{t-1}$ is a relevant instrument (Assumption \ref{relevance}) if there exists an utility switching cost from exiting or entering the workforce, for example.\footnote{$W_t = D_{t-1}$ is relevant if there is some `autocorrelation' (that I interpret as switching costs) in the discrete choice (conditional on the types). This could be driven by autocorrelation in a general $\tilde{\epsilon}_{dt}(x_t, w_t, \eta_t) = m_{dt}(x_t, w_t, \eta_t) + \epsilon_{dt}$ error term. Thus, the assumption about no correlation in $\epsilon_t$ is less restrictive than it seems. } \\
\indent The reason why I do not allow for autocorrelation in $\Eta_t$ is that I recommend to use the past discrete choice as the default instrument. Indeed, with autocorrelated $\Eta_t$, if $W_t = D_{t-1}$, then in the first period $W_1$ and $\Eta_1$ are not independent as they are both correlated with the unobserved $\Eta_{0}$. However, if one can find another instrument satisfying Assumptions \ref{indep_shock}, \ref{instrumentbis} and \ref{instru_transi} and which do not suffer from this initial period problem, we could include and identify autocorrelation in $\Eta_t$: $f_{\Eta_{t+1} | \eta_t}(\eta_{t+1})$. Because of the exclusion from its own transition (Assumption \ref{instru_transi}), such an instrument is hard to find in dynamic models, and would need to be a purely transitory and unexpected event.  
Now, with $W_t = D_{t-1}$ I can still allow for autocorrelation in the unobservables by including permanent unobserved types \citep{am2011} in the setup (Section \ref{section_types}). The types reintroduce time-dependence in the model and attenuate the effect of the conditional independence assumption.

\subsection{Identification of the dynamic model}\label{subsection_dynamic_identification}

First, I show how the transitions, the CCCs and CCPs are identified in the dynamic model. Then, I show how to use them to nonparametrically identify the marginal utility, the discount factor and the conditional payoffs under additional assumptions.

\subsubsection{Optimal choices: CCCs and CCPs}

Under Lemma \ref{Lemma_dynamic_match}, the dynamic framework described in Section \ref{section_dynamic} fits into the general framework described in Section \ref{section_framework}. Therefore the CCCs and CCPs are identified period by period from $t \geq 2$ onwards, following the proof developed in Section \ref{section_identification}. 
The data $\{ D_t, C_{t}, X_t, W_t, t\}_{t=1}^T$ provides the following reduced-form functions:
\begin{align*}
R = \Big\{ &\textrm{Pr}(D=d_t|X_t=x_t, W_t=w_t), (d_t, x_t, w_t, t) \in \mathcal{D}\times\mathcal{X}_t\times\mathcal{W}\times \{2,..., T\} , \\
 &F_{C_{dt} | d_t, x_t, w_t}(\cdot), (d_t, x_t, w_t, t) \in \mathcal{D}\times\mathcal{X}_t\times\mathcal{W}\times \{2,..., T\} \Big\}. 
\end{align*}
From these reduced forms, following Section \ref{section_identification}, I identify the CCCs and CCPs
\begin{align*}
c^*_{dt}(\eta_t, x_t) \text{ and } \textrm{Pr}(D_t = d_t | \Eta_t=\eta_t, X_t = x_t, W_t=w_t),  
\end{align*}
for all $(d_t, \eta_t, x_t, w_t, t) \in \mathcal{D}\times [0, 1] \times \mathcal{X}_t\times\mathcal{W}\times \{2,..., T\}$. 
\noindent Note that identification does not hold for $t=1$ because $W_t=D_{t-1}$ is not available for $t=1$. \\

\noindent \textbf{Special case: Identification of the choices with terminal/absorbing actions.} \\ 
Suppose $D_t = 1$ is a \textit{terminal action} or an \textit{absorbing state}. For example, $D_t = 1$ if the individual retires, $D_t=0$ if she stays active. Assuming that an individual cannot go back to working life, the retirement choice is absorbing \citep{iskhakov2017, levy2024identification}. 
Now, identification is greatly simplified. Indeed, use $W_t = D_{t-1}$ as the instrument. When $D_{t-1} = 1$, the instrument is `infinitely' relevant: the probability of staying retired is one. Thus by focussing on previously retired individuals ($W_t=1$), equation (\ref{bayes2}) gives
\begin{align*}
	\eta_t = F_{C_{1t} | 1, x_t, 1}(c^*_{1t}(\eta_t, x_t)) \quad \text{ for all } \eta_t \in [0, 1], \ x_t \in \mathcal{X}_t.
\end{align*}
Since $F_{C_1 | 1, X_t, 1, t}(c)$ is invertible (Lemma \ref{distrib_c1}), we recover the continuous choices conditional on being retired as: 
\begin{align*}
c^*_{1t}(\eta_t, x_t) = F^{-1}_{C_{1t} | 1, x_t, 1}(\eta_t) \quad \text{ for all } \eta_t \in [0, 1], x_t \in \mathcal{X}_t.
\end{align*}
It remains to identify the other conditional continuous policy. Take equation (\ref{bayes2}) at $W_t = 0$, i.e., for individuals who did not select the absorbing state yet. It yields
{\small
\begin{flalign*}
\eta_t =&\quad F_{C_{0t} | 0, x_t, 0}(c^*_{0t}(\eta_t, x_t)) \textrm{Pr}(D_t=0|X_t=x_t, W_t=0)  \\
&+ \quad F_{C_{1t} | 1, x_t, 0}(c^*_{1t}(\eta_t, x_t)) \textrm{Pr}(D_t=1|X_t=x_t, W_t=0) &\\
\text{and hence, } &  &\\
c^*_{0t}(\eta_t, x_t) =& F^{-1}_{C_{0t} | 0, x_t, 0} \left( \frac{\eta_t - F_{C_{1t} | 1, x_t, 0}(c^*_{1t}(\eta_t, x_t)) \textrm{Pr}(D_t=1|X_t=x_t, W_t=0)}{\textrm{Pr}(D_t=0|X_t=x_t, W_t=0)} \right). &
\end{flalign*}
}
This identifies the optimal continuous choice of active individuals ($D_t=0$), since all the terms on the right hand side are already known. Once the CCCs are identified, we proceed as previously to identify the CCPs.

\subsubsection{Transitions} 
The transition density $f_{X_{t+1} | X_t=x_t, C_t=c_t, D_t=d_t}(x_{t+1})$ is identified directly from the data by observing the conditional transitions of the variables between consecutive periods $t$ and $t+1$. 
The transition of the instrument is known by construction if $w_{t+1} = d_t$. With other instruments, it can also be recovered from the data.  
\noindent As is standard in the dynamic choice literature, I assume individuals are rational, so that the observed transition density coincides with the transition densities expected by the individuals. Then, the transitions recovered from the data can be used to build the individual's expectations at each time $t$, and help recover the primitives. 

\subsubsection{Primitives}\label{section_structure}

Once the CCCs, CCPs and transitions are identified, I build upon existing literature to identify the primitives of the model \citep{hm1993, bmm1997, mt2002, escancianoetal2021}. I need to introduce additional structure on the covariates' transition and on the current utility function for nonparametric identification of the utility and the discount factor. \\
\indent \textbf{Budget constraint.} The asset transitions are given by the budget constraint\footnote{The budget constraint could be more sophisticated and include taxes, benefits... The effect of income could be a general function $\kappa_{d_t}(y_t)$, where non-working individuals would still receive a part of their income. See the simulations for another example with part-time versus full-time work.} 
\begin{equation}\label{eq_budget}
a_{t+1} = (1+r_t) a_t - c_t + y_t d_t. 
\end{equation} 
The asset here plays a different role than the other covariates. Indeed, the transition from $a_t$ to $a_{t+1}$ is directly impacted by the choice $c_t$ through the budget constraint (\ref{eq_budget}). 
Denote the covariates as $x_t = (\tilde{x}_t, a_t)$ to emphasize the distinct role of the asset.\footnote{Note that $y_t$ and $r_t$ are included in $\tilde{x}_t$, even though, in most applications, they will also be excluded from the current period utility. 
For notational simplicity and generality, I include them in $\tilde{x}_t$, which enters the current utility and represents all covariates other than $a_t$, i.e., all covariates whose transitions are not impacted by $c_t$ (Assumption \ref{cov_transi}). \vspace{-12pt}}
\begin{assumption}[Asset exclusion]\label{asset_exclusion}
The asset is excluded from the current period utility, i.e., $\partial u_{dt}(c_{dt}, \tilde{x}_t, a_t, \eta_t)/\partial a_t = 0$. 
\end{assumption}

\begin{assumption}[General covariates transitions]\label{cov_transi}
For all $\tilde{x}_t \in \tilde{\mathcal{X}}_t, d_t \in \mathcal{D},$ and $c_t \in \mathcal{C}$, $c_t$ does not impact the transitions of $\tilde{x}_t$ and $w_t$, i.e., 
\begin{align*}
f_{\tilde{X}_{t+1}, W_{t+1} | \tilde{x}_{t}, c_t, d_t}(\tilde{x}_{t+1}, w_{t+1}) = f_{\tilde{X}_{t+1}, W_{t+1} | \tilde{x}_{t}, d_t}(\tilde{x}_{t+1}, w_{t+1}).
\end{align*}
\end{assumption}
%
%
%
%
%
%

\begin{assumption}[Stationary utility]\label{statio_uti}
The current period utility is independent of time, i.e., 
$u_{dt}(c_{dt}, x_t, \eta_t) = u_d (c_{dt}, x_t, \eta_t)$. 
\end{assumption}
\begin{assumption}[Monotone utility]\label{monotone_u_c}
The utility is strictly increasing in $c$, i.e.,
\begin{align*}
\frac{\partial u_{dt}(c_{dt}, x_t, \eta_t)}{\partial c_{dt}} > 0  \quad  \text{ for all } (d, c_{dt}, x_t, \eta_t) \in \mathcal{D} \times \mathcal{C}_{dt} \times \mathcal{X}_t \times [0, 1]. \\ 
\end{align*}
\end{assumption}

\vspace{-24pt}

\begin{Lemma}[Marginal utilities and discount factor]\label{Lemma_escanciano}
Following \cite{escancianoetal2021}, under Assumptions \ref{additivebis}-\ref{monotonebis} and \ref{ass_shocks}-\ref{monotone_u_c}, the conditional marginal utilities at the optimal continuous choices, 
\begin{align*}
u'^*_{d}(x_t, \eta_t) = \frac{\partial}{\partial c_{dt}} u_{d}(c_{dt}, x_t, \eta_t) |_{c_{dt} = c^*_{dt}(\eta_t, x_t)},
\end{align*}
and the discount factor $\beta$ are nonparametrically point identified by the Euler equation for all $(d, x_t, \eta_t) \in \mathcal{D} \times \mathcal{X}_t \times [0, 1]$.
\end{Lemma}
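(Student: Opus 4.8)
The plan is to turn the intertemporal first-order condition for the continuous choice into an Euler equation that is linear in the unknown marginal-utility function, and then to identify that function together with $\beta$ from the already-identified CCCs, CCPs and transitions by means of the operator argument of \cite{escancianoetal2021}.

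First I would differentiate the conditional value function (\ref{vd_def}) in $c_{dt}$. Since $m_{dt}$ and $\epsilon_{dt}$ do not depend on $c_{dt}$, the optimal choice $c^*_{dt}$ satisfies $\partial v_{dt}/\partial c_{dt}=0$. The only channel through which $c_t$ enters the continuation value is the asset, because the budget constraint (\ref{eq_budget}) gives $\partial a_{t+1}/\partial c_t=-1$ while Assumption \ref{cov_transi} rules out any effect of $c_t$ on the transitions of $\tilde{x}_t$ and $w_t$. Hence the first-order condition reads
\begin{equation*}
u'^*_{d}(x_t,\eta_t)=\beta\,\mathbb{E}\!\left[\frac{\partial V_{t+1}}{\partial a_{t+1}}\,\Big|\,X_t=x_t,\,C_t=c^*_{dt},\,D_t=d_t\right].
\end{equation*}
Next I would apply the envelope theorem to $V_{t+1}$ with respect to $a_{t+1}$. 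By Assumption \ref{asset_exclusion} the asset is absent from the current utility, so $a_{t+1}$ enters $V_{t+1}$ only through the next period's budget constraint, where $\partial a_{t+2}/\partial a_{t+1}=1+r_{t+1}$. Combining the envelope identity with the period-$(t+1)$ first-order condition gives $\partial V_{t+1}/\partial a_{t+1}=(1+r_{t+1})\,u'^*_{d_{t+1}}(x_{t+1},\eta_{t+1})$, and substitution yields the Euler equation
\begin{equation}\label{euler_plan}
u'^*_{d}(x_t,\eta_t)=\beta\,\mathbb{E}\!\left[(1+r_{t+1})\,u'^*_{D_{t+1}}(X_{t+1},\eta_{t+1})\,\Big|\,X_t=x_t,\,C_t=c^*_{dt},\,D_t=d_t\right].
\end{equation}
Every object in (\ref{euler_plan}) other than $\beta$ and the function $u'^*$ is already identified: the state transitions come directly from the data, the future rank $\eta_{t+1}$ is recovered by inverting the identified CCC $c^*_{d_{t+1},t+1}$, and the law of the future discrete choice $D_{t+1}$ is given by the identified CCPs.

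Finally, using the stationarity of the utility (Assumption \ref{statio_uti}), (\ref{euler_plan}) becomes a relation of the marginal-utility function to itself: writing the right-hand side as a known positive linear operator $\mathcal{A}$ acting on $u'^*$, the Euler equation reads $\tfrac{1}{\beta}u'^*=\mathcal{A}u'^*$, so that $u'^*$ is a positive eigenfunction of $\mathcal{A}$ and $1/\beta$ the associated eigenvalue. I would then invoke \cite{escancianoetal2021}: under their regularity conditions a positive operator of this form admits a unique (up to scale) positive eigenfunction with a unique eigenvalue, and positivity of $u'^*$ is guaranteed here by Assumption \ref{monotone_u_c}. The scale is pinned down by the innocuous normalization of the utility, which point-identifies both the conditional marginal utilities $u'^*_{d}(x_t,\eta_t)$ and the discount factor $\beta$.

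I expect the main obstacle to be verifying that the operator-theoretic hypotheses of \cite{escancianoetal2021} (positivity together with the compactness/completeness conditions that deliver a \emph{unique} positive eigenfunction) genuinely carry over to the discrete-continuous setting, where $\mathcal{A}$ additionally integrates over the future discrete choice through the CCPs and over the future rank recovered by CCC inversion. A secondary difficulty is the finite-horizon, non-stationary case: there the clean eigenvalue argument is unavailable, and $\beta$ must instead be disentangled from the scale of $u'^*$ through a backward recursion anchored at a terminal condition, which is where the separate identification of $\beta$ is most delicate.
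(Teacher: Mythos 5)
Your proposal takes essentially the same route as the paper's proof: the Euler equation is derived from the first-order condition plus the envelope theorem (using the asset exclusion, Assumption \ref{asset_exclusion}, the budget constraint, and Assumption \ref{cov_transi}), stationarity (Assumption \ref{statio_uti}) puts the same unknown marginal-utility functions on both sides, positivity follows from Assumptions \ref{monotonebis} and \ref{monotone_u_c}, and point identification of $(u'^*_d,\beta)$ is then delegated to Theorem 2 of \cite{escancianoetal2021}, whose positive-eigenfunction/eigenvalue structure you correctly unpack. The only differences are cosmetic: the paper writes the return as $(1+r_t)$ outside the expectation rather than $(1+r_{t+1})$ inside (immaterial under its convention and constant-$r$ toy model), and your worry about the finite horizon is moot, since the eigenvalue argument is applied to the Euler equation linking any two adjacent periods --- exactly the paper's finite-horizon setting --- rather than to an infinite-horizon fixed point.
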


\begin{proof}
Since $a_t$ is excluded from $u_d(\cdot)$ by Assumption \ref{asset_exclusion}, $u'^*(\cdot)$ only depends on $a_t$ through the optimal CCCs, . 
Then, given the budget constraint and since $c_t$ only affects the asset transition (Assumption \ref{cov_transi}), the \textit{Euler equations} are,  for  $d_t=0,1$, 
\begin{equation}\label{euler_marg}
u'^*_{d_t}(x_t, \eta_t) = \ \beta (1+r_t) \mathbb{E}_t \Big[  u'^*_{D_{t+1}}(X_{t+1}, \Eta_{t+1}) \ \Big| X_t=x_t, C_{t}=c^*_{dt}(\eta_t, x_t), D_t=d_t \Big]. 
\end{equation}
\noindent We have a system  of two equations with two unknown functions $u'^*_0(\cdot)$ and $u'^*_1(\cdot)$ (and the unknown discount factor $\beta$). Hence the importance of stationarity (Assumption \ref{statio_uti}), since otherwise we would have a different unknown function on each side of the equation. 
Now, under Assumptions \ref{monotonebis} and \ref{monotone_u_c}, the optimal marginal utilities are positive, 
\begin{align*}
u'^*_d(x_t, \eta_t)  > 0 \quad \text{ for all } d, x_t, \eta_t.
\end{align*}
Now, Theorem $2$ of \cite{escancianoetal2021} shows that the discount factor $\beta$ and the marginal utility functions are nonparametrically globally point identified by the system of Euler equations (\ref{euler_marg}). 
\end{proof}

\indent Once the marginal utilities are identified, I follow \cite{bmm1997} to identify the conditional value functions. Note that even though the marginal utilities are stationary, we still have a non-stationary problem because the conditional value functions are time-dependent with finite horizon.  

\begin{Lemma}[\cite{bmm1997}]\label{Lemma_bmm}
Under Assumptions \ref{additivebis}-\ref{monotonebis} and \ref{ass_shocks}-\ref{monotone_u_c}, the conditional value functions at optimal choices, $v_{dt}(c^*_{dt}(\eta_t, x_t), x_t, \eta_t)$, are identified up to an unknown constant of integration $O_{dt}$ independent from the asset, i.e., 
\begin{align*}
v_{dt}(c^*_{dt}(\eta_t, x_t), x_t, \eta_t) = G_{dt}(\tilde{x}_t, a_t, \eta_t) + O_{dt}(\tilde{x}_t, \eta_t) \quad \text{ for all } d, x_t, \eta_t,
\end{align*}
where $G_{dt}$ and $K_{dt}$ are defined in the proof.
\end{Lemma}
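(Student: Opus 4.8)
Following \cite{bmm1997}, the plan is to reconstruct the conditional value function at the optimum by integrating its derivative along the asset dimension, exploiting the fact that this derivative coincides with the marginal utility already recovered in Lemma \ref{Lemma_escanciano}. Because the continuation value $\mathbb{E}[V_{t+1} \mid \cdot]$ in (\ref{vd_def}) is not directly observed, $v_{dt}$ itself cannot be read off the data; but its total derivative with respect to $a_t$, evaluated at the optimal continuous choice, turns out to be an identified object. I would therefore differentiate the map $a_t \mapsto v_{dt}(c^*_{dt}(\eta_t, x_t), x_t, \eta_t)$ totally in $a_t$, holding $(\tilde{x}_t, \eta_t)$ fixed, and then integrate back.

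The derivative identity is obtained in four steps. First, an envelope argument: since $c^*_{dt}$ satisfies the intra-period first-order condition $\partial v_{dt} / \partial c_{dt} |_{c^*_{dt}} = 0$, the indirect channel through $\partial c^*_{dt} / \partial a_t$ drops out, leaving only the partial derivative of $v_{dt}$ in $a_t$ at $c^*_{dt}$. Second, Assumption \ref{asset_exclusion} removes the direct utility channel, $\partial u_{dt} / \partial a_t = 0$, so that this partial derivative equals $\beta\, \partial / \partial a_t\, \mathbb{E}[V_{t+1} \mid X_t = x_t, C_t = c^*_{dt}, D_t = d_t]$. Third, the budget constraint (\ref{eq_budget}) gives $\partial a_{t+1} / \partial a_t = 1 + r_t$, while Assumption \ref{cov_transi} guarantees that $a_t$ enters the continuation value only through $a_{t+1}$ (the transition of $(\tilde{x}_{t+1}, w_{t+1})$ depends on $(\tilde{x}_t, d_t)$ alone); hence the partial derivative becomes $\beta (1 + r_t)\, \mathbb{E}[\partial V_{t+1} / \partial a_{t+1} \mid \cdot]$. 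Fourth, the first-order condition for $c_{dt}$, which uses $\partial a_{t+1} / \partial c_t = -1$ and reads $u'^*_d(x_t, \eta_t) = \beta\, \mathbb{E}[\partial V_{t+1} / \partial a_{t+1} \mid \cdot]$, lets me substitute out the continuation term. Combining these yields the key identity
\[
\frac{d}{d a_t}\, v_{dt}(c^*_{dt}(\eta_t, x_t), x_t, \eta_t) = (1 + r_t)\, u'^*_d(x_t, \eta_t),
\]
whose right-hand side is identified by Lemma \ref{Lemma_escanciano}, with $r_t$ a component of the observed $\tilde{x}_t$.

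It then remains to integrate this identity in $a_t$. Fixing $(\tilde{x}_t, \eta_t)$ and a baseline asset level $\underline{a}$, I would define
\[
G_{dt}(\tilde{x}_t, a_t, \eta_t) = \int_{\underline{a}}^{a_t} (1 + r_t)\, u'^*_d(\tilde{x}_t, a, \eta_t)\, d a,
\]
so that the fundamental theorem of calculus delivers $v_{dt}(c^*_{dt}(\eta_t, x_t), x_t, \eta_t) = G_{dt}(\tilde{x}_t, a_t, \eta_t) + O_{dt}(\tilde{x}_t, \eta_t)$, where the constant of integration $O_{dt}(\tilde{x}_t, \eta_t) = v_{dt}(c^*_{dt}(\eta_t, (\tilde{x}_t, \underline{a})), (\tilde{x}_t, \underline{a}), \eta_t)$ is the value at the baseline asset and is independent of $a_t$, exactly as claimed.

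The main obstacle --- and the reason identification holds only up to $O_{dt}$ --- is that this argument traces $v_{dt}$ solely along the asset dimension. The derivative identity rests on asset exclusion together with the budget-constraint link between $c_t$ and $a_{t+1}$, and there is no analogous identified derivative of $v_{dt}$ in the $\tilde{x}_t$ or $\eta_t$ directions, because the utility depends on those arguments through functions that are not separately recovered; hence $O_{dt}(\tilde{x}_t, \eta_t)$ cannot be pinned down at this stage. I would also verify the regularity that the steps presume: differentiability of $a_t \mapsto c^*_{dt}$ (from Assumption \ref{monotonebis} and smoothness of the conditional value function), continuity and integrability of $u'^*_d$ over the asset support (Assumptions \ref{monotonebis} and \ref{monotone_u_c}), and the legitimacy of interchanging differentiation with the continuation-value expectation (justified by the conditional-independence structure of Assumption \ref{cond_indep}).
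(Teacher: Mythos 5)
Your proposal is correct and takes essentially the same approach as the paper: both rest on the identity $\frac{\partial}{\partial a_t} v^*_{dt}(\tilde{x}_t, a_t, \eta_t) = (1+r_t)\, u'^*_{d}(\tilde{x}_t, a_t, \eta_t)$, whose right-hand side is identified by Lemma \ref{Lemma_escanciano}, and then integrate it over the asset dimension with an arbitrary lower bound, yielding $G_{dt}$ plus the unidentified constant of integration $O_{dt}(\tilde{x}_t, \eta_t)$. The only difference is one of detail: the paper asserts this identity directly as its first-order condition (\ref{foc_opti2}), whereas you derive it explicitly via the envelope theorem, Assumption \ref{asset_exclusion}, the budget constraint (\ref{eq_budget}) together with Assumption \ref{cov_transi}, and the intra-period optimality condition for $c_{dt}$ --- a derivation that is fully consistent with what the paper leaves implicit.
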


\begin{proof}
We have the first order conditions, holding at optimal CCCs for all $d$: 
\begin{equation}\label{foc_opti2}
 \frac{\partial}{\partial a_t} v_{dt}(c_{dt}, \tilde{x}_t, a_t, \eta_t)  = (1+r_t) \frac{\partial}{\partial c_{dt}} u_{d}(c_{dt}, x_t, \eta_t) \ |_{c_{dt} =c^*_{dt}(\eta_t, \tilde{x}_t, a_t)}.
\end{equation}
With $v^*_d(\cdot)$ as the conditional value function taken at the optimal continuous choice, we can rewrite the FOC as  
\begin{align}
\forall a_t: \quad \quad \frac{\partial}{\partial a_t} v^*_{dt}(\tilde{x}_t, a_t, \eta_t)  &= (1+r_t) \  u'^*_{d}(\tilde{x}_t, a_t, \eta_t).
\end{align}
Crucially, following Assumption \ref{asset_exclusion}, the asset is excluded from the current period utilities and marginal utilities. The identification strategy relies on this exclusion. Now, integration gives 
\begin{align*}
v^*_{dt}(\tilde{x}_t, a_t, \eta_t)  &= \int^{a_t}_0 \ (1+r_t) \  u'^*_{d}(\tilde{x}_t, a, \eta_t) \ da,
\end{align*}
where the lower bound $0$ is taken arbitrarily. 
Since $u'^*_d$ is identified, we can identify the optimal conditional value functions nonparametrically as
\begin{align*}
v^*_{dt}(\tilde{x}_t, a_t, \eta_t) = G_{dt}(\tilde{x}_t, a_t, \eta_t) + O_{dt}(\tilde{x}_t, \eta_t),
\end{align*}
up to unknown constant of integration $O_{dt}(\tilde{x}_t, \eta_t)$, independent from $a_t$ and depending on the arbitrary lower bound of integration. 
\end{proof}

\indent Finally, by specifying a distribution for $\Epsilon$ (e.g., generalized extreme value), the differences in the additive terms of the utility, $\Delta m_t(\cdot) = m_{1t}(\cdot) - m_{0t}(\cdot)$, are \textit{semi-parametrically} identified. Indeed, the difference in total conditional values, $\Delta v_t^*(\cdot) + \Delta m_t(\cdot)$, are identified by the CCPs through \citeauthor{hm1993}'s (\citeyear{hm1993}) inversion. Thus, if I impose a normalization of the constant, e.g. $O_{dt} = 0$, $\Delta m_t(x_t, w_t, \eta_t)$ is identified. 

\section{Unobserved types}\label{section_types}

So far, I assumed no autocorrelation in purely transitory $\Eta_t$ (Assumption \ref{cond_indep}), in order to be able to use the previous discrete choice as a relevant instrument ($W_t=D_{t-1}$) without violating the independence between the instrument and $\Eta_t$. 
However, including only iid transitory period-specific shocks is fairly restrictive in dynamic models, where we often observe serial correlation in the choices. I handle this by including permanent \textit{unobserved types} into the model, following the standard approach in the dynamic discrete choice literature \citep{am2011}. 
These types capture intrinsic latent differences between individuals, while $\Eta$ and $\Epsilon$ are transitory shocks affecting the decisions.
In this section, I show how to adapt the identification arguments with unobserved types, by identifying the unobserved types beforehand. 

\subsection{Identification with unobserved types}\label{subsection_identification_types}

I assume throughout this section that $W_t = D_{t-1}$.\footnote{If $W_t$ is not $D_{t-1}$ and is a period-$t$ variable, then the identification of unobserved types still holds. It is simplified and only requires $T \geq 3$ time periods in the panel, as in Section 3.1 of \cite{kasaharashimotsu2009}. }  
I observe panel data $\{D_t, C_{t}, X_t\}_{t=1}^{T}$ with $T\geq 6$ and $C_t = C_{0t} (1-D_t) + C_{1t} D_t$. The instrument $W_t=D_{t-1}$ is included in the observations of $\{D_t\}_{t=1}^{T}$.
Each individual has a time-invariant/permanent type $\mu$ with finite values $m \in \{1, ..., M\}$. 
The type is unobserved by the researcher. The probability of belonging to type $m$ is $\text{Pr}(\mu=m) = \pi^m$ and is time-invariant and independent of the covariates.\footnote{The setup can be extended to allow for time-varying types (e.g., first-order Markov), time-varying type probabilities, as well as type probabilities that depend on the covariates, using \cite{kasaharashimotsu2009} and \cite{hushum2012}.} 

\subsubsection{Adaptation of the framework with types} 

The adjustments to include types are fairly straightforward. Types act similarly to a covariate in $X$, except that it is unobserved by the researcher.  
The functions $u_{dt}(c_{dt}, x_t, m, \eta_t)$, $m_{dt}(x_t, m,  w_t, \eta_t)$, $V_t(x_t, m, w_t)$, and $v_{dt}(c_{dt}, x_t, m, \eta_t)$ are all \textit{type-dependent}, and I now make this dependence explicit by writing them with an $m$ supperscript as $u_{dt}^m(\cdot), m_{dt}^m(\cdot)$, $V_t^m(\cdot)$ and $v_{dt}^m(\cdot)$. Assumption \ref{ass_shocks} now conditions on $X=x$ and $\mu = m$. For simplicity, the covariate transition densities are assumed to remain type-independent: $f_{X_t | x_{t-1}, c_{t-1}, d_{t-1}}(x_t)$ for all $m$.\footnote{Again, this can be relaxed and we can identify type specific transitions $f_t^m(x_t | X_{t-1}=x_{t-1}, C_{t-1}=c_{t-1}, D_{t-1}=d_{t-1})$, following Section $3.2$ of \cite{kasaharashimotsu2009}.} 
\noindent I only add one assumption on how types enter the model. 
\begin{assumption}[Type-independent Support]\label{type_support}
Types enter the utilities of the model in a way such that, for all $t, c_t, d_t, x_t$, 
\begin{align*}
f_{D_t, C_t | x_t, d_{t-1}}(d_t, c_{t}) > 0 \iff f^m_{D_t, C_t | x_t, d_{t-1}}(d_t, c_{t}) > 0, \text{ for all } m=1,...,M.
\end{align*}
\end{assumption}
Here, $f^m_{D_t, C_t | x_t, d_{t-1}}(\cdot)$ is the type-dependent conditional joint density. 
Assumption \ref{type_support} restricts how types enter the utility functions: it must not affect the support of the optimal choices, especially the continuous one. 
%
%
In terms of identification, it means that any possible observation $(d_t, c_{t}, x_t)$ can come from any type $m \in \{1, ..., M\}$.   

\subsubsection{Identification of the type-dependent conditional joint densities}

\indent Given the framework, the joint densities of the choices depend on $X_t=x_t, W_t=d_{t-1}$ and now also depend on the type $\mu=m$. So the reduced form to identify the optimal choices are now type-specific and not directly observable from the data: I need to identify them first to identify the optimal choices following Section \ref{section_identification} afterwards. \\
\vspace{-12pt}

\noindent \textbf{Result}:
	 If $T \geq 6$, the \textit{type probabilities}, $\pi^m$, and the \textit{type-dependent conditional joint densities}, $f^m_{D_t, C_t | x_t, d_{t-1}}(d_t, c_t)$,  
are identified from observed serial data $\{D_t, C_{t}, X_t\}_{t=1}^{T}$ for all $m, t, d_t, c_{t}, x_t, d_{t-1}$. \\
%
%
%
%
%
%
%
\vspace{-12pt}

\indent The idea is to identify the unobserved types by using the identification power of the observed serial correlations of $\{D_t, C_{t}, X_t\}_{t=1}^{T}$. 
Notice that, except for the first-order autocorrelation between $d_t$ and $d_{t-1}$ (relevance condition), I did not use the observed autocorrelations of the choices to identify the dynamic model before. This is the reason why I can exploit them to identify the type-specific conditional joint densities in a first step, independent of the rest of the identification (which proceeds period by period). 
Formally, to show the identification of the type-dependent conditional joint densities, I extend the identification proof of \cite{kasaharashimotsu2009} to joint choices with both time-dependent conditional choice probabilities and a lagged dependent variable. I also make specific adjustments because my covariates include the value of assets which has a deterministic transition given the choices, violating Assumption $1 (c)$ in \cite{kasaharashimotsu2009}. 
The identification proof is given in Appendix \ref{appendix_types}.

%


\subsection{Identification of the dynamic model with unobserved types}

I obtain the type-dependent reduced-form functions from the type-dependent joint choices densities for all $m \in \{1, ..., M\}$
\begin{align*}
R^m = \Big\{ &\textrm{Pr}^m(D=d_t|X_t=x_t, W_t=w_t), (d_t, x_t, w_t, t) \in \mathcal{D}\times\mathcal{X}_t\times\mathcal{W}\times \{2,..., T\} , \\
 &F^m_{C_{dt} | d_t, x_t, w_t}(\cdot), (d_t, x_t, w_t, t) \in \mathcal{D}\times\mathcal{X}_t\times\mathcal{W}\times \{2,..., T\} \Big\}. 
\end{align*}
\noindent From these $R^m$, following Section \ref{section_identification}, I identify type-dependent CCCs and CCPs
\begin{align*}
c^{m*}_{dt}(\eta_t, x_t) \text{ and } \textrm{Pr}^m(D_t = d_t | \Eta_t=\eta_t, X_t = x_t, W_t=w_t),  
\end{align*}
for all $m \in \{1, ..., M\}$ and $(d_t, \eta_t, x_t, w_t, t) \in \mathcal{D}\times [0, 1] \times \mathcal{X}_t\times\mathcal{W}\times \{2,..., T\}$. \\ 
\indent The transitions are type-independent by assumption, so I identify them directly from the data as before. 
Then, 
the identification of the primitives of the dynamic model follows Section \ref{section_structure}, replacing the optimal choices by their type-dependent counterparts, and conditioning everything on the type $m$. 


\section{Estimation}\label{section_estimation}

I build a two-step estimation procedure.  
In the first stage, I estimate type-dependent conditional continuous choices (CCCs) and conditional choice probabilities (CCPs). First, I estimate the type probabilities using an expectation-maximization (EM) algorithm \citep{aj2003, am2011}. 
Then, I estimate the type-dependent optimal choices given these type probabilities. This step is data-driven and is independent of the structural model specification. 
In the second stage, I use these estimated optimal policies to estimate the primitives (structural parameters) of the model. To do so, I use the fact that the optimal choices are obtained via the optimality conditions of the model taken at the true parameters: the true parameters are the only parameters that generate these optimal policies, and satisfy the optimality conditions taken at these true policies. 
\noindent The estimation is analogous to that of \cite{hm1993}, \cite{am2011} and \cite{hmss1994} but extended to discrete-continuous choice models. \\
\indent The main appeal of this estimation is computational gains. By estimating the optimal choices only once, directly from the data, and taking them as given in the next stage, the computational burden of the estimation is significantly reduced. Indeed, one does not need to solve for the value function or the likelihood for each new set of selected parameters. 
This allows us to estimate models that were previously computationally intractable. 
I describe the estimation method in this section, and show the estimator's performance using Monte Carlo simulations in Section \ref{section_comparison_estimation}. 

\subsection{1st step: conditional choices}

Use $W_t = D_{t-1}$ as suggested before. 
First, I estimate the type-independent covariates transitions directly from the data. Asset transition is known by the budget constraint, instrument transition is known since $w_t = d_{t-1}$, and the other covariate transitions can be estimated using auto-regressive processes of order $1$. This yields the transitions  $\hat{f}_{X_t | x_{t-1}, c_{t-1}, d_{t-1}}(x_t)$, where $c_{t-1}$ only affects the asset transition (Assumption \ref{cov_transi}). 
Then, I estimate the type-dependent reduced forms using an expectation-mazimization (EM) algorithm in the spirit of \cite{am2011} (Section \ref{subsection_EM}). Using these type-specific probabilities for each individuals, I estimate type-specific CCCs and CCPs building upon the identification arguments (Section \ref{subsection_cccestimate}). 
%
%

\subsubsection{EM algorithm for type-dependent reduced forms}\label{subsection_EM}

Suppose the type-dependent joint densities $f^m_{D_t, C_t | x_t, d_{t-1}}(d_t, c_t, \theta_r)$ 
are fully parametrized by $\theta_r$.\footnote{Note that this includes the initial period joint density $f_{D_1, C_1, X_1}^m(d_1, c_1, x_1, \theta_{\text{init}})$ when the instrument $D_{0}$ is unobserved.} 
We want a nonparametric sieve-estimator where the number of parameters in $\theta_r$ increases with the sample size. To estimate $\theta_r$ and $\pi^m$ (the type probabilities) we proceed by iteration, starting from an initial guess $(\theta_r^{(k)}, \pi^{(k)})$, with $k=1$, where $\pi^{(k)} = \{{\pi^{m}}^{(k)}\}_{m=1}^M$. \\
%
%
%
%

\noindent \textbf{Expectation step.} 
Given the $k^{th}$ guess, the likelihood of observing $\{ d_{it}, c_{it}, x_{it}\}_{t=1}^T$ given the type $m$ for individual $i$ is
\begin{align*}
	L^m_i(\theta_r^{(k)}) = f_{D_1, C_1, X_1}^m(d_{i1}, c_{i1}, x_{i1}, \theta_{\text{init}}^{(k)}) \prod^T_{t=2} f^m_{D_t, C_t | x_{it}, d_{it-1}}(d_{it}, c_{it}, \theta_r^{(k)}) \ 
	\hat{f}_{X_t | x_{it-1}, c_{it-1}, d_{it-1}}(x_{it})
\end{align*}
%
%
%
\noindent Then, the likelihood of observing $\{ d_{it}, c_{it}, x_{it}\}_{t=1}^T$ for $i$, unconditional on type, is
\begin{align*}
	L_i(\theta_r^{(k)}, \pi^{(k)}) &= \sum^M_{m=1} {\pi^m}^{(k)} L^m_i(\theta_r^{(k)}). 
\end{align*}
The updated likelihood that individual $i$ belongs to type $m$, denoted $q_i(m)$, is 
\begin{align*}
	q_i^{(k+1)}(m) = \frac{L^m_i(\theta_r^{(k)})}{L_i(\theta_r^{(k)}, \pi^{(k)})}.
\end{align*}
Given a sample of $N$ individuals, we update $\pi^{(k)}$ to $\pi^{(k+1)}$ for each type $m$ as 
\begin{align*}
	\pi^{m(k+1)}= \frac{1}{N} \sum^N_{i=1} q_i^{(k+1)}(m). 
\end{align*}

\noindent \textbf{Maximization step.} 
Given $q^{(k+1)}= \{q_i^{(k+1)}(m)$  for all $m\}_{i=1}^N$, we can compute the sample likelihood for any $\theta_r$:
\begin{align*}
	L(\theta_r, q^{(k+1)}) = \prod^N_{i=1} \sum^M_{m=1} q_i^{(k+1)}(m) L^m_i(\theta_r).
\end{align*}
We update $\theta_r^{(k)}$ to $\theta_r^{(k+1)}$ by finding the $\theta_r$ which maximizes the log-likelihood 
\begin{align*}
	\theta^{(k+1)} = \underset{\theta_r}{\text{argmax}} \ \textrm{log} \ L(\theta_r, q^{(k+1)}).
\end{align*}
Notice that the empirical conditional joint densities weighted by $q^{(k+1)}$ maximize the log-likelihood \citep[as in (5.9) of][]{am2011}. Thus we can directly nonparametrically estimate it, without running any numerical optimization algorithm. \\

\noindent \textbf{EM estimation.} 
Select initial values $(\theta_r^{(1)}, \pi^{(1)})$. For example, randomly assign a type to every individual and estimate the initial joint densities given this guess  to obtain the initial values.  
Starting from these initial values and iterating the expectation and maximization steps, the EM algorithm converges to $(\hat{\theta}_r, \hat{\pi})$, which maximizes the likelihood of the sample. The estimates $(\hat{\theta}_r, \hat{\pi})$ give estimates of the \textit{type-dependent reduced forms} $\widehat{R}^m$ and provide estimates of the \textit{type probabilities} of each individual, $\hat{q_i}(m)$, that we use in the next steps. 

\subsubsection{Type-dependent CCCs and CCPs}\label{subsection_cccestimate}

Once the type-dependent probabilities $\hat{q_i}(m)$ are estimated, we can use them to estimate the type-dependent optimal choices. \\ 

\noindent \textbf{Conditional continuous choices (CCCs).} 
We build upon the link between the intra-period problem and the IV-Quantile model of \cite{chernozhukovhansen2005} established in Section \ref{section_framework}, and adapt existing IVQR estimation procedures \citep{chernozhukovhansen2006, kaido2021decentralization} to estimate the CCCs. More precisely, since $W_t=D_{t-1}$ is a valid instrument, we estimate $c_{0t}^m(\Eta_t, x_t)$ and $c_{1t}^m(\Eta_t, x_t)$ for any rank $\Eta_t$ and for all period $t$, covariates $x_t$, and type $m$, by running the weighted IV-Quantile regression of $C_t$ on $D_t$ at each quantile $\Eta_t$, conditional on $X_t=x_t$ and weighted by the estimated type-$m$ probabilities $\hat{q_i}(m)$.\footnote{There are several manners to condition on the covariates $X_t$. Typically, for discrete covariates, we can run separate IV-quantile regressions on each subsamples with $X_t=x_t$, provided that these subsamples contain enough observations. Otherwise, continuous covariates (e.g., the assets) enter additively in the IVQR specification, which effectively restricts the heterogeneity of the effect of $D$ on $C$ with respect to these covariates at each quantile. In theory, we could also split the continuous covariate in subgroups with sufficiently enough observations. Or we could do kernel-based IV quantile regressions to account for these continuous covariates nonparametrically with weights, but this requires a large sample. } 
This IVQR approach allows to flexibly estimate heterogenous effects of $D$ on $C$ at each quantiles $\Eta$. \\

\noindent \textbf{Conditional choice probabilities (CCPs).} Once the CCCs are estimated, one can invert them to estimate the unobserved shock $\eta_{it}$ for every individual, i.e., 
\begin{align*}
	\hat{\eta}_{it} = \left({c_{d_{it}t}^m}\right)^{-1}\Big( c_{it}, x_{it}\Big). 
\end{align*} 
Using these estimated unobserved individual shocks $\hat{\Eta}_{it}$ as a generated covariate, we can directly estimate the type-dependent CCPs, $\textrm{Pr}^m(D_t | \hat{\Eta}_{t}, X_t, W_t)$, using weighted nonparametric kernels or flexible weighted logit/probit regressions of $D_t$ on $X_t, W_t$ and $\hat{\Eta}_t$, weighted by type-$m$ probabilities, $q_i(m)$. 

\subsubsection{Alternative estimation methods}
There are many alternative ways to estimate the optimal choices.  
For example, the CCCs can be estimated nonparametrically or semi-parametrically by building upon the identification arguments using empirical counterparts of the functions $\Delta F_{C_d}^m(\cdot)$, obtained using the estimated type-dependent reduced forms.  This alternative approach has the advantage of working under weaker relevance conditions than the ones imposed by \cite{chernozhukovhansen2005}, i.e., even with piecewise monotone $\Delta F_{C_d}^m$ functions, as in Figure \ref{fig:identification2}.  

\subsection{2nd step: structural model}

Suppose the primitives can be fully parametrized by $\theta = (\beta, \theta_C, \theta_P)$, where $\theta_C$ characterize the marginal utility with respect to the continuous choice and $\theta_P$ does not.\footnote{More precisely, assume there is a one-to-one mapping between the parameters $\theta$ and the primitives of the model, i.e., each different value of $\theta$ generates different primitives.} 
In other words, $u^m_d(\cdot, \theta_C)$ is parametrized by $\theta_C$, while $\theta_P$ only impacts the difference $\Delta m_{t}^m(\cdot, \theta_P)$.  Denote $\theta^0$ the true parameters that generated the data.  \\ 
%
%
\indent Using the nonparametric identification arguments developed previously (Section \ref{subsection_dynamic_identification}), there is a one-to-one mapping between the primitives of the model and the optimal choices (CCCs and CCPs). Each set of parameters $\theta$ characterizing the primitives of the model is associated with distinct optimality conditions (e.g., Euler equations and differences of conditional value functions) which, in turn, yield distinct optimal choices. Consequently, the true CCCs and CCPs which have been consistently estimated directly from the data in the first stage, can only be rationalized by the true value of the parameters, $\theta^0$. 
In theory, we could estimate the model using standard methods of simulated moments with these CCCs and CCPs as the moments.\footnote{An even more standard approach would be to use moments directly available in the data to estimate the model, e.g., observed quantiles of $C$ given $D$, $X$, and $W$, and estimated probability of selecting $D$ given $X$ and $W$. The key take-away from the identification being that one needs to use moments which depends on the instrument $W$, otherwise the model would not be identified. } A typical method of simulated moment estimator would be as follows: for each value of the parameter $\theta$, one would compute the optimal value function and the corresponding theoretical optimal choices. Then, the estimated $\theta$ would be the set of parameters which make these theoretical optimal choices the closest to the true observed optimal choices moments estimated in the first stage. 
\noindent While theoretically simple, this standard simulated method of moments is impractical for dynamic models. Indeed, even the fastest methods to compute the value functions, namely the endogenous grid method \citep[EGM, ][]{carroll2006, iskhakov2017}, is still long, even for only dynamic discrete choice models, and even more so for dynamic discrete \textit{and} continuous choice models which require additional numerical optimization to solve for the optimal continuous choices. \\
\indent Fortunately, extending what \cite{hm1993, hmss1994, am2011} have proposed to estimate dynamic discrete choice models, I propose a faster alternative estimation method that does not require to compute the value function and numerically solve for the optimal choices for each evaluated set of parameters. The key intuition is to directly use the link between the optimality conditions of the model and the optimal choices. Given the known (estimated in the first stage) optimal choices, the first order conditions are only satisfied for the true value of the parameters, $\theta^0$. So, we estimate these true parameters by minimizing the error in the first order conditions where we plugged-in the known optimal choices. The computational difficulty is that these first order conditions involve expectations about the future. In order to compute these, we use forward simulations, as  \cite{hmss1994} did for dynamic discrete choice models. I split the estimation into two types of first order conditions (i) the Euler equation which determines the CCCs and will allow to estimate $\theta_C^0$ and $\beta^0$, and (ii) the conditional value function comparison which determines the CCPs. The full estimation is described below. \\ 


\noindent \textbf{Moment selection.} Select a set of $S$ moments corresponding to $S$ covariates values: $\{ t^s, \eta^s, x^s, w^s, m^s\}_{s=1}^S$. The CCCs and CCPs have been consistently estimated in the first stage, so these moments corresponds to moments expressed in terms of $C$ and $D$, i.e., 
$c_d^s = c_{dt^s}^{m^s}(\eta^s, x^s)$ for each $d \in \mathcal{D}$ and $p_d^s = \textrm{Pr}^{m^s}(D_{t^s}=d | \Eta_t = \eta^s, X_t=x^s, W_t = w^s)$. The set of moments needs to be large enough such that there is a one-to-one mapping between the model parameters $\theta$ and all the moments.\footnote{It is possible that two different set of parameters are observationally equivalent locally, for some CCCs and CCPs taken at specific values of the covariates and type. However, if the model is properly parametrized (with no "redundant" parameters), there do not exist two distinct sets of parameters that yield observationally equivalent CCCs and CCPs for every values of $t, X, W,$ and $m$. One cannot test every value of these covariates as moments, but one needs to take sufficiently many different moments such that there is only one optimal set of parameters that generate them.}  If the model specification is correct, the optimal choices estimated in the first stage are generated by the true parameters, $\theta^0$. Furthermore, there is a one-to-one mapping between the model and the optimal choices, and these observed moments can only be rationalized by the true $\theta^0=(\beta^0, \theta_C^0, \theta_P^0)$, and no other value of $\theta$. \\ 


%
%
%

\noindent \textbf{Euler objective, estimation of $(\theta_C, \beta)$.}  
Recall that under the true model $\theta^0$, Euler Equation \eqref{euler_marg} holds, i.e., for any moment $s$ with $\{t^s, \eta^s, x^s, w^s, m^s\}$,\footnote{For the Euler equation, the value of the instrument, $w^s$, in the list of moment does not matter  because it is excluded from the optimal CCC.} and each $d \in \mathcal{D}$, \\
\vspace{-3em}
\begin{adjustwidth}{-0.5cm}{-0.5cm}  
\begin{align}\label{eq:euler_est}
& u^{{m^s}'^{**}}_{d}(x^s, \eta^s, \theta_C^0) = \  \beta (1+r_{t^s}) \mathbb{E}_{t^s} \Big[  u^{{m^s}'^{**}}_{D_{t^s+1}}(X_{t^s+1}, \Eta_{t^s+1}, \theta_C^0)  \ \Big| X_{t^s}=x^s, C_{dt^s}=c^{m^s}_{dt}(\eta^s, x^s), D_{t^s}=d_t \Big] \nonumber \\
&\overset{def}{\iff} \quad \quad b_{1}(d, t^s, \eta^s, x^s, m^s, \theta_C^0)\  = \ b_{2}(d, t^s, \eta^s, x^s, m^s, \theta_C^0, \beta^0).
\end{align}
\end{adjustwidth}
The functions $u^{{m^s}'^{**}}_{d}(\cdot, \theta_C)$ are the marginal utility taken at the optimal choices, \textit{taking the optimal choices as estimated in the first stage}, $c^{m}_{dt}(\eta_t, x_t)$, i.e., 
\begin{align*}
u^{{m}'^{**}}_{d}(x_t, \eta_t, \theta_C) = \frac{\partial}{\partial c_{dt}} u^m_{d}(c_{dt}, x_t, \eta_t, \theta_C) \Big|_{c_{dt} = c^{m}_{dt}(\eta_t, x_t)}.
\end{align*}
Regardless of the parameters $\theta_C$, the function is evaluated at the true $c^{m}_{dt}(\eta_t, x_t)$ which correspond to the true parameters, $(\theta_C^0, \beta^0)$.
This Euler equation uniquely determines the CCCs. Given the nonparametric identification, and the uniqueness of the mapping between the optimal choices and the primitives of the model, there is no alternative set of parameters $\tilde{\theta}_C \neq \theta_C^0$ such that this equation \eqref{eq:euler_est} would hold for all moments $s$. This is because, here we plugged-in the optimal choices of the first stage which correspond to $\theta^0$, and any distinct set of parameters $\theta$ would require different CCCs and CCPs in order to hold for all $s$, due to the uniqueness of the optimum. 
As a consequence, the idea behind the estimation is to minimize the difference between both sides of the Euler equation, $b_{1}$ and $b_{2}$, taken at the optimal choices estimated in the first stage.\footnote{Equivalently, recall that the marginal utilities are strictly increasing in $C$ by monotonicity. So one could express the Euler equation not in terms of the marginal utilities, but in terms of the optimal $C$ they determine. Then, the objective of the estimator is to find the true values of $\theta_C$ and $\beta$ which minimize the difference between the estimated CCCs in the first stage, and the corresponding theoretical CCCs pinned down by the Euler equation for any moment $s$. } While the left hand side of the Euler equation, $b_{1}(\cdot)$, can be directly estimated consistently for any $\theta_C$ by plugging in the first stage optimal CCCs, the right hand side $b_2(\cdot)$ contains an expectation over the next period optimal marginal utilities. To compute it, we use \textit{one-period ahead forward simulations}, using the estimated covariates transitions and the next-period type-dependent optimal choices (CCCs and CCPs) estimated in the first stage. 
Then $(\theta_C, \beta)$ can be estimated by minimizing the sum of squared differences $\widehat{b_{1}}(\cdot) - \widehat{b_{2}}(\cdot)$  over all moments $s$ and alternative $d$, i.e.,   
\begin{align}\label{eq:euler_criterium}
(\hat{\theta}_C, \hat{\beta}) = &\ \underset{\theta_C, \beta}{\textrm{argmin}} \sum_{s=1}^S \sum_{d\in \mathcal{D}} \Big( \widehat{b_{1}}(d, t^s, \eta^s, x^s, m^s, \theta_C) - \widehat{b_{2}}(d, t^s, \eta^s, x^s, m^s, \theta_C, \beta)  \Big)^2. 
\end{align}
\vspace{-1em}

\noindent This `Euler objective' \eqref{eq:euler_criterium} consistently estimates $\theta_C$ and $\beta$. Indeed, since the optimal choices are consistently estimated in the first stage, the minimum should only be reached at the true values of the primitive parameters $\theta_C^0$ and $\beta^0$ according to the Euler equation \eqref{eq:euler_est}.  \\

\noindent \textbf{Probability objective, estimation of $\theta_P$.} 
Following a similar intuition, we can semi-parametrically estimate the remaining parameters $\theta_P$ impacting the differences of the additive term, $m_d(\cdot)$ but not the marginal utility with respect to the continuous choice. Given a known distribution of $\Epsilon$, there is a one-to-one mapping between the CCPs and the conditional value functions which are determined by the primitives of the model, $\theta^0$. This is  the \citeauthor{hm1993}'s (\citeyear{hm1993}) inversion. The only adjustment with respect to \cite{hm1993} is that the mapping is with respect to the conditional value functions taken at the optimal continuous choice, denoted $v_{dt}^{m*}(\cdot)$. For example, if $\Epsilon$ is extreme-value type I, for any moment $s$ with with $\{t^s, \eta^s, x^s, w^s, m^s\}$, we know that the CCPs estimated in the first stage satisfy 
\begin{align}\label{eq:formulaccp}
	\textrm{Pr}^{m^s}(D_{t^s} = d &| \Eta_t = \eta^s, X_t = x^s, W_t = w^s) \nonumber \\
	&= \frac{\textrm{exp}\big( v^{{m^s}*}_{d{t^s}}(x^s, \eta^s, \theta^0) + m^m_{d{t^s}}(x^s, w^s, \eta^s, \theta^0) \big)}{\sum^{J}_{j=0} \textrm{exp}\big( v^{{m^s}*}_{j{t^s}}(x^s, \eta^s, \theta^0) + m^m_{j{t^s}}(x^s, w^s, \eta^s, \theta^0) \big) }  \nonumber \\
	& := p^{model}(d, t^s, \eta^s, x^s, w^s, m^s, \theta^0) \quad  \text{ for all } d \in \mathcal{D},
\end{align}
where $v_{dt}^{m*}(\cdot, \theta)$ are the \textit{conditional values taken at the true optimal choices estimated in the first stage} but with parameter $\theta$.\footnote{Attention, as for the optimal marginal utilities, except if $\theta = \theta^0$, these are \textit{not} the traditional `conditional values functions'. This is because, we plug-in the optimal choices estimated in the first stage (which correspond to the true parameters $\theta^0$) in all the future periods, and not the optimal choices corresponding to $\theta$. The choice of $\theta$ only affects how much utility is derived from these already given optimal choices every period. } Since \citeauthor{hm1993}'s (\citeyear{hm1993}) mapping is unique, with the choices estimated in the first stage, \eqref{eq:formulaccp} only holds at the optimal value of the parameters, $\theta^0$ for all moments $s$. We use this known link between the CCPs and the primitives of the model to estimate $\theta=(\beta, \theta_C, \theta_P)$. In fact, we take $\widehat{\theta}_C$ and $\widehat{\beta}$ estimated via the Euler equation as given, and use \eqref{eq:formulaccp} to estimate the remaining parameters, $\theta_P$.\footnote{In theory, one could estimate all the parameters $\theta$ using only this probability criterium. I recommend to split in two separate estimation steps because the parameters $\theta_C$ and $\beta$ have a larger impact on the Euler equation, and are more precisely estimated using the Euler-criterium.  Moreover, the Euler equation estimation is faster because it involves only one-period ahead simulations, while the probability estimation requires forward simulations of the complete remaining life-cycle. }  \\
\indent In order to estimate $p^{model}(\cdot, \hat{\beta}, \hat{\theta}_C, \theta_P)$ for any $\theta_P$, we need to compute these conditional value functions at the true optimal choices for any moment $s$. These conditional values contain expectations about the next period value function, and thus about the entire future life-cycle of individuals from time $t^s$ onwards, taking the true optimal choices (estimated in the first stage) as given. In order to estimate these expectations without solving for the value function, I follow the insights of \cite{hmss1994} and use forward simulations of the entire remaining life-cycle of individuals.\footnote{\textit{Forward simulation details.} 
For any $d \in \mathcal{D}$, recall that the conditional value functions at the optimal choices estimated in the first stage are defined as
\begin{align}\label{eq:esti_optivd}
	&v_{dt^s}^{m^s*}(x^s, \eta^s, \theta) \\ 
	&= u_{dt^s}^{m^s}(c_{dt^s}^{m^s}(\eta^s, x^s), x^s, \eta^s, \theta_C) + \beta \mathbb{E}_{t^s}[V^m_{t^s+1}(X_{t+1}, W_{t+1}, \theta) | X_{t^s}=x^s, C_{t^s}=c_{dt^s}^{m^s}(\eta^s, x^s), D_{t^s}=d ], \nonumber
\end{align}
where for any $(t, m, x_t, \eta_t)$, the value functions given the optimal first stage choices are given by
\vspace{-1em}
\begin{align}\label{eq:esti_value}
&V_t^{m}(x_t, w_t, \theta) = \mathbb{E}_t\Big[ \sum^T_{\tau = t} \beta^{\tau-t} [ \ u^m_{D^*_\tau\tau} (c^{*m}_{d^*\tau}(\Eta_\tau, X_\tau), X_\tau,  \Eta_{\tau}, \theta_C ) + m^m_{D^*_\tau}(X_\tau, W_\tau, \Eta_{\tau}, \theta_P) + \Epsilon_{D^*_\tau\tau} ] \Big],  \\ 
&\text{where } \quad D^*_\tau =  d \text{ with first stage probability } \textrm{Pr}^{m}(D_{\tau} = d | \Eta_\tau, X_\tau, W_\tau). \nonumber 
\end{align}
These value functions at the optimal choices can be estimated by forward simulating the entire life-cycle of individuals from $t$ onwards. 
For a large number of simulations, $N_S$, we draw new state variables and unobserved $\Eta_\tau$ and $\Epsilon_\tau$ each period, given the previous state variables and choices. Given these new state variables, we use the true optimal choices which have been consistently estimated in the first stage, to draw the discrete and continuous choices of the period. We repeat this process every period until period $T$ is reached (or, if the horizon is infinite, until the discount is so large that the additional period has negligeable impact on the value). Then, given the entire pre-simulated histories of $\{ \{ C^k_\tau, D^k_\tau, X^k_\tau, W^k_\tau, \Eta^k_\tau, \Epsilon^k_\tau \}_{\tau = t+1}^T \}_{k=1}^{N_S}$ for $N_S$ different simulations, for any value of $\theta$ we can compute the corresponding period utility every period, and thus, the value of every simulated life-cycle. Taking the average of these values over the $N_S$ simulations provides an estimate of the value function given $\theta$, described in Equation \eqref{eq:esti_value}. Once the value functions are estimated, we can recover the conditional value functions \eqref{eq:esti_optivd}, and then the theoretical CCP, $p^{model}(\cdot, \theta)$ for any parameter $\theta$ and moment $s$. 
}
Thanks to these forward simulation, we can estimate $\widehat{p^{model}}(\cdot, \widehat{\beta}, \widehat{\theta}_C, \theta_P)$ for any $\theta_P$ and for all moments $s$. Then, we estimate $\theta_P$ as the unique set of parameters satisfying the theoretical property \eqref{eq:formulaccp}. In practice, we estimate $\theta_P$ by minimizing the sum of squared differences over all moments $s$, between the optimal CCPs which have been consistently estimated in the first stage, and their theoretical counterpart in the model, $\widehat{p^{model}}(\cdot, \theta_P)$, i.e.,\footnote{Note that we take the sum over all $d \in \mathcal{D}$ except the reference $d=0$, since all probabilities sum to one so one of the alternative is redundant.} 
\begin{align}\label{eq:probacriterium}
\hat{\theta}_P = \underset{\theta_P}{\textrm{argmin}} 
\sum_{s=1}^S \sum_{d \in \mathcal{D} \setminus \{0\}} \Big( &\widehat{\textrm{Pr}}^m(D_{t^s}=d | \Eta_{t^s} = \eta^s, X_{t^s} = x^s, W_{t^s} = w^s) \nonumber \\
&- \quad \widehat{p^{model}}(d, t^s, \eta^s, x^s, w^s, m^s, \widehat{\beta}, \widehat{\theta}_C, \theta_P) \Big)^2. 
\end{align}
%
%
This probability objective \eqref{eq:probacriterium} consistently estimates the remaining parameters $\theta_P$ as the minimum should only be reached at the true values of $\theta_P^0$ according to \eqref{eq:formulaccp}. \\


\noindent \textbf{Additional remarks.} This second stage estimation is completely independent of the data, the data only affected the estimation of the optimal policies in the first stage. In this sense, it is similar to the method of simulated moments where the data only affects the estimation of the moments. Given the first stage estimated policies, the second stage estimates only depend on the selected set of moments and on the number of forward simulations used to estimate the expectations in the Euler equation and in the conditional value functions. Increasing the number of simulations increases the precision at the cost of increased computational time. \\
\indent Notice also that the CCCs and CCPs estimated for all $t, \Eta_t, X_t, W_t, m$, affect the estimation, and not only the ones at the selected moments. This is because, in the forward simulations, the entire range of covariates and $\Eta$ can be drawn. So the first stage optimal choices need to be well estimated at any $\Eta_t$, even at the tails. 




\section{Estimator performance}\label{section_comparison_estimation}

I illustrate the estimator's performance with Monte Carlo simulations of the estimation of a parametric toy model of simultaneous labor and consumption choices. 
\subsection{Toy model} 
\noindent \textbf{Period utility.} 
Each period from age $t=1$ to $T$, 
individuals choose to work full time ($D_t=1$) or part time ($D_t=0$) and to consume ($C_t$). Their period-$t$ utilities are 
\begin{align*}
	u_{dt}^m(c_t, \eta_t) + m_{dt}^m(w_t, \eta_t) + \epsilon_{dt},  
\end{align*}
where $u_{dt}$ is the `main utility' function which depends on the consumption choice, $m_{dt}$ is the `additive part of the utility', which does not depend on the consumption but depends on the instrument (lagged labor choice, $W_t=D_{t-1}$ here), and $\Epsilon_{dt}$ (taking values $\epsilon_{dt}$) are extreme-value type I additive idiosyncratic shocks impacting the preferences for full time work. 
Both parts of the utility depend on the individuals' time-invariant types $\mu$ taking values $m \in \{1, 2\}$, with $\text{Pr}(\mu=1) = \pi^1$. These types are unobserved by the econometrician, but known by the individuals. \\
\indent We parametrize the main utility, $u(\cdot)$ as a CES utility function 
\begin{align*}
	u_{dt}^m(c_t, x_t, \eta_t) = \left(\frac{c_t}{exp(\eta_t)}\right)^{1-\sigma_d^m}/(1-\sigma_d^m), 
\end{align*}
where the $\sigma_d^m$ represents type $m$ and discrete choice $d$-specific risk aversion or intertemporal elasticity of substitution. We allow this main risk aversion to vary with the labor tenure decision. In this model, contrary to standard CES models, the marginal utilities of consumption are heterogenous (even at fixed covariates, $X$, and type, $m$) because they depend on idiosyncratic preference shocks, $\Eta_t$, taking values $\eta_t \in [0, 1]$. As in the general model, $m$ captures the permanent differences (types) between individuals, while $\Eta_t$ captures period-specific transitory idiosyncratic preference shocks. $m$ and $\Eta_t$ are iid for every individuals. Both $m$ and $\Eta_t$ are unobserved by the econometrician.    \\
\indent The additive part of the payoff is given by
\begin{align*}
	m_{0t}^m(w_t, \eta_t) &= 0, 
	\quad \text{ and } \quad m_{1t}^m(w_t, \eta_t) = \gamma^m + \alpha^m \eta_t + \omega^m (1-w_t), 
\end{align*}
such that $m_{1t}^m - m_{0t}^m = m_{1t}^m$ represents the additive utility gains (or cost) of choosing to work full time ($D_t=1$) compared to working part time ($D_t=0$). 
More precisely, we model this cost as a linear function of the unobserved idiosyncratic preference for consumption, $\Eta_t$, where $\gamma^m$ represent the intercept/constant gain of working full time for individuals with $\Eta_t = 0$, while $\alpha^m$ represent the slope of how this gain changes with $\Eta_t$. The parameters $\alpha^m$ captures additional complementarity/substitutability between consumption and labor, in addition to the ones implicitly present in the main utility $u(\cdot)$. 
Finally, $\omega^m$ represent the utility switching cost (if $\omega^m$ is negative) endured by previously part time workers $(w_t = d_{t-1} = 0$) who switch to full time ($d_t=1$). Again, all the parameters $\gamma^m, \alpha^m, \omega^m$ are type-specific. \\
\indent The model is dynamic and the individuals discount their future utility with a factor $\beta$.\footnote{Instead, we could have specified a type-specific discount factor, $\beta^m$. It would also be identified and precisely estimated.} Thus, the main parameters entering the Euler equations and affecting the consumption choices are $\beta$ and the parameters entering $u_{dt}^m(\cdot)$, i.e., $\theta_C = (\sigma_0^1, \sigma_0^2, \sigma_1^1, \sigma_1^2)$, while the parameters $\theta_P = (\gamma^1, \gamma^2, \alpha^1, \alpha^2, \omega^1, \omega^2)$ only affect the labor supply choices. The parameters $\theta = (\beta, \theta_C, \theta_P)$ describe the primitives of the models and represent the main parameters we want to estimate. \\ 

\noindent \textbf{Dynamics and covariates transition.} 
\noindent The asset, $a_t$, evolves according to the budget constraint 
\begin{align*}
	a_{t+1} = (1+r) a_t - c_t + (d_t + 0.5(1-d_t))y_t,
\end{align*}
where $y_t$ represents the full-time equivalent yearly income of individuals. Individuals who work full time ($d_t=1$) obtain $y_t$, while individuals who work part time ($d_t=0$) obtain half of it. The income $y_t$ is a random variable, taking only two values for simplicity: $y_L$ and $y_H$, for low and high income, respectively. The income transition is given by \begin{align*}
\textrm{Pr}(y_1 = y_H | D_0 = d_0, Y_0=y_0) = \Pi(d_0, y_0) = 
\begin{pmatrix} 
\pi_{0L} & \pi_{0H} \\
\pi_{1L} & \pi_{1H} 
\end{pmatrix},
\end{align*}
and is directly estimated from the observed data on income. \\
\indent Asset and income are the only two observable covariates, i.e., $X_t = (a_t, y_t)$.\footnote{One could easily complexify this model, by adding more individual characteristics, a more complex income process, type-dependent budget constraint, more realistic pension plans for the retirees... I choose to model only the key features of a standard life-cycle model, as it is sufficient to illustrate the performance of the estimator in terms of precision and computation time. } 
Even though the utility does not directly depend on asset and income, the optimal consumption and labor choices depend on these through the dynamics of the problem. \\

\noindent \textbf{Retirement.} At age $T+1$, the individuals retire for $T^{\text{retire}}$ periods, and then dies. During retirement, they only consume and can no longer work. 
For simplicity, we specify that every period they obtain the period utility of a part-time working individual with a median $\Eta_t = 0.5$ and without the additive shock $\Epsilon$. 
They obtain a pension set to $50\%$ of their last full-time equivalent income, $y_T$.  There is no bequest motive. The retirement problem has a closed form solution, easily solved for any parameters. \\

\noindent \textbf{Instrument validity.} Since this model enter the more general Framework, the previous labor choice, $D_{t-1}$, can be used as an instrument $W_t$ for identification here. Indeed, $W_t=D_{t-1}$ is excluded conditional on $D_t, X_t$ and $m$, and it will also be relevant provided that the switching costs $\omega^m \neq 0$. 


\begin{figure}[!p]

	\vspace{-0.5cm}
    \centering
    \captionsetup{justification=centering}
    \caption{Estimated policies, CCCs and CCPs}
    \label{fig:optipolicy}
    \vspace{-1em}

    \begin{subfigure}[b]{0.7\textwidth}
		\centering
        \includegraphics[width=1\textwidth]{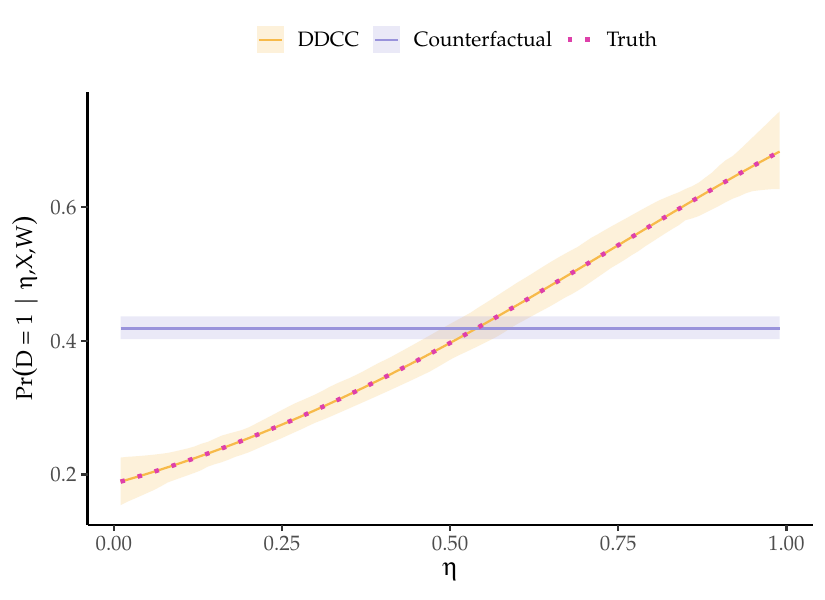}
        \caption{CCP, $\textrm{Pr}^m(D_t=1|\Eta_t, X_t, W_t)$}
        \label{fig:ccp}	
    \end{subfigure}


	\begin{adjustbox}{max width=1\textwidth,center}
    \begin{subfigure}[b]{0.5\textwidth}
        \centering
        \includegraphics[width=1\textwidth]{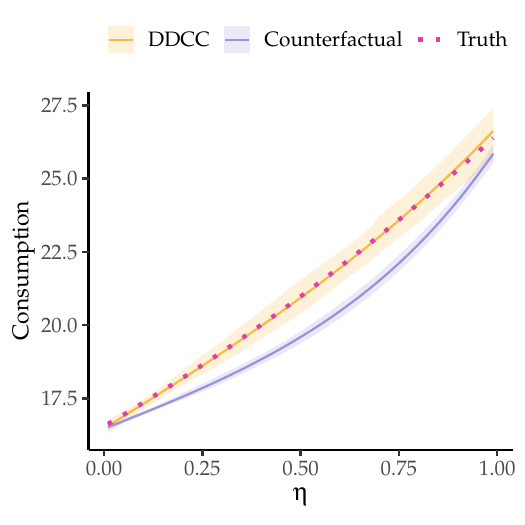}
        \caption{CCC, $c_{0t}^m(\Eta_t, X_t)$}
        \label{fig:c0}
    \end{subfigure}
    \hfill 
    \begin{subfigure}[b]{0.5\textwidth}
        \centering
        \includegraphics[width=1\textwidth]{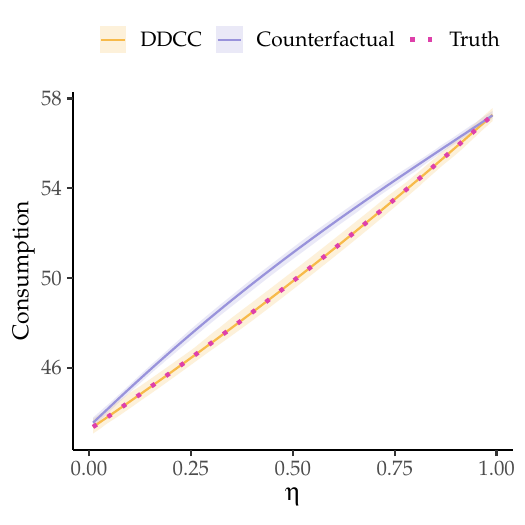}
        \caption{CCC, $c_{1t}^m(\Eta_t, X_t)$}
        \label{fig:c1}
    \end{subfigure}
    \end{adjustbox}

    \vspace{1em} 

   \captionsetup{font=footnotesize, justification=justified}
	\caption*{\textit{Notes:} The figures report the estimated policies (CCCs and CCPs) over $100$ Monte Carlo simulations of the model, with the specification described in Table \ref{table_comparison_results}. The DDCC estimates correspond to my estimator, the counterfactual estimates correspond to an estimation assuming that $D$ is exogenous (or equivalently, sequential choice of $D$ before $\Eta$ is realized). 
	We report the average estimate as well as the $[10\%, 90\%]$ interval of the estimates at each values of $\Eta$. For simplicity, we only report the results for a specific individual, with an arbitrary value of the type, age, covariates, and instrument. Here the results are reported for type $m=2$, $t=2$, and the covariates $X_t$ set to an income $=y_L = 30$ and asset holding of $30$ (thousands of dollars). For the CCPs, they are also reported for individuals who were working part time in the previous period, i.e., with $W_t = D_{t-1} = 0$. The estimated CCCs and CCPs may be more or less noisy for other choices of $X_t, W_t$, and $m$. 
	 }
\end{figure}

\newgeometry{top=0.5in, bottom=0.8in, left=1in, right=1in} 

\begin{table}[!p] \centering 
  \caption{Monte Carlo simulations of the life-cycle model}\label{table_comparison_results}
\scalebox{0.8}{
\begin{tabular}{lC{2.9cm}C{2.9cm}C{2.9cm}C{2.9cm}}	
\\[-1.8ex]\hline 
\hline \\[-1.8ex] 
 & & \multicolumn{3}{c}{\textit{Method}}  \\ 
\cline{3-5} 
& & & & \\ 
& Truth & \multicolumn{1}{c}{DDCC} &  \multicolumn{1}{c}{Counterfactual} & Known policies\\ 
& & & ($D$ exogenous) & \\
\\[-1.8ex] 
\hline \\[-1.8ex] 
\textbf{Euler parameters, $\theta_C$ and $\beta$.} \\
Risk aversion, $\sigma_d^m$: \\
$\sigma_0^1$ & 1.700 & 1.691 & 1.680 & 1.700 \\
& & (0.0325) & (0.0076) & (0.0005) \\
$\sigma_0^2$ & 2.000 & 2.027 & 2.061 & 2.000 \\
& & (0.0193) & (0.0129) & (0.0007) \\
$\sigma_1^1$ & 1.600 & 1.599 & 1.541 & 1.600 \\
& & (0.0334) & (0.0073) & (0.0004) \\
$\sigma_1^2$ & 1.500 & 1.514 & 1.473 & 1.500 \\
& & (0.0147) & (0.0094) & (0.0001) \\
Discount factor: $\beta$ & 0.950 & 0.948 & 0.961 & 0.950 \\
& & (0.0075) & (0.0025) & (0.0004) \\ 

\textbf{Probability parameters, $\theta_P$.} \\ 
Utility cost of labor, $\gamma^m$: \\
$\gamma^1$ & 0.000 & -0.026 & 0.960 & 0.007 \\
& & (0.1363) & (0.0283) & (0.0281) \\ 
$\gamma^2$ & -0.500 & -0.473 & 0.709 & -0.488 \\
& & (0.1033) & (0.0248) & (0.0210) \\
Effect of $\Eta$ on the cost, $\alpha^m$ \\
$\alpha^1$ & 2.000 & 1.985 & 0.025 & 1.988 \\
& & (0.2829) & (0.0135) & (0.0582) \\
$\alpha^2$ & 2.500 & 2.388 & 0.122 & 2.475 \\
& & (0.2089) & (0.0099) & (0.0443) \\
Tenure switching cost, $\omega^m$ \\
$\omega^1$ & -2.000 & -2.036 & -1.894 & -2.001 \\
& & (0.0660) & (0.0466) & (0.020) \\
$\omega^2$ & -2.200 & -2.199 & -1.997 & -2.198 \\
& & (0.0572) & (0.0361) & (0.0182) \\

\textbf{Type Probability.} \\
$\pi^1$ & 0.600 & 0.601 & 0.601 & Known \\
& & (0.0055) & (0.0055) & \\
  \hline \\[-1.8ex] 
  & \multicolumn{4}{c}{$N=10,000$ and $T=10$} \\
  \textbf{Estimation Time.} \\
  \textit{1st stage time} (Types + Policies) & & 143s (43 + 100) & 77s (43 + 34) & 0s (0+0) \\
  
  \textit{2nd stage time} (Euler + Probability) & & 45s (6+39) & 39s (5+34) & 82s (6+76) \\
  
  \textit{Total} & & 188s & 116s & 82s \\

\hline 
\hline \\[-1.8ex]

\end{tabular} 
}
\captionsetup{font=footnotesize, justification=justified}
    \caption*{\textit{Notes:} The results correspond to averages and standard deviations over $100$ Monte Carlo simulations. The DDCC column corresponds to my estimator, the counterfactual corresponds to an estimation assuming that $D$ is exogenous (or equivalently, sequential choice of $D$ before $\Eta$ is realized), and the last column corresponds to the estimation of the second stage given the true first stage optimal policies. \\
    \textit{Specification details.} 
    The income specification used is one with $y_L = 30$, $y_H = 60$, with income transition probabilities given by $\pi_{0L} = 0$, $\pi_{0H} = 0.3$, $\pi_{1L} = 0.5$, $\pi_{1H} = 0.75$. These transition probabilities are directly estimated from the data in each simulation. 
    For the initialization, in the first period, $\textrm{Pr}(W_1 = 1) = 0.5$, $\textrm{Pr}(\text{income}=y_H) = 0.5$, and the initial asset are drawn from a normal distribution with mean $40$ and standard deviation $10$, and minimum set to $0$ for asset draws below $0$. 
    $T=10$ periods and individuals live for $T^{retire}=5$ periods after retirement. Once they retire, they are left with a pension equal to $50\%$ of their last income every year. The yearly interest rate $r=1.5\%$. 
    In each simulation, we use $1,000$ one-period ahead forward simulation to approximate the next period expectation in the Euler equation, and $500$ forward life-cycle simulations to approximate the conditional value functions for the probability parameters. \\ 
    \textit{Moments used.} Individuals with asset $30$ (average) and all possible combinations of income $y_L$ or $y_H$, starting age $5$ or $8$, $\Eta$ equals $25\%, 50\%,$ or $75\%$, and $W$ equals $0$ or $1$. \\
    \textit{Computation details.} All the Monte Carlo simulations are ran in \textit{R} on a Macbook Pro M3 Max processor without parallelization.  }

\end{table}

\restoregeometry
\newpage

\subsection{Monte Carlo simulation results} 
To assess the performance of the estimator developed in this paper -- denoted DDCC for dynamic discrete-continuous choices -- I run Monte Carlo simulations of the life-cycle model described previously. Each simulation simulates a panel of $N=10,000$ individuals observed for their entire life-cycle of $T=10$ periods. This approximately corresponds to the sample size of real surveys used to estimate life-cycle models (e.g., the PSID), which allows to assess the performance of the estimator in a realistic context. Figure \ref{fig:optipolicy} shows the estimated optimal policies (CCCs and CCPs) in the first stage, for a given type $m$, covariates $X=x$ and instrument $W=w$. Table \ref{table_comparison_results} shows the corresponding estimates of the deep parameters of the model, $\theta$, using the first stage policies previously estimated. \\

\noindent \textbf{Estimation performances.} 
The DDCC estimates (orange in Figure \ref{fig:optipolicy}, column DDCC in Table \ref{table_comparison_results}) correspond to the estimator described in Section \ref{section_estimation}.\footnote{\textit{Specification details.}  For the DDCC method , the CCCs, $c_{0t}^m(\eta, x)$ and $c_{1t}^m(\eta, x)$, are estimated via a weighted (by estimated type-$m$ probabilities) IVQR of $C$ on $D$ given $X$, instrumented by $W$ \citep{chernozhukovhansen2006, kaido2021decentralization}. The continuous covariates (asset) enters linearly in the specification, while we run a separate IVQR on each subsample of income and age. We estimate the model at each $5\%$ percentiles, from $5\%$ to $95\%$. To extrapolate at the tails, we run a supplemental regression of a shape constrained additive model \citep[scam package in R,][]{pya2015shape} imposing monotonicity of $C$ with respect to a flexible spline of order $4$ in $\Eta$. From the estimated CCCs, we can recover $\hat{\Eta}$ for every observations. Then, the CCPs are estimated using a flexible weighted (by type-$m$ probabilities) logit regression of $D$ on $\hat{\Eta}$, $X$, and $W$. More precisely, we regress $D$ on a polynomial of order 3 in $\Eta$, on $W$, age and income dummies, and a polynomial of order 3 in assets, and also interactions of the polynomial in $\Eta$ with $W$. 
Then the primitives of the models are estimated as described in Section \ref{section_estimation} by taking these optimal policies as given in forward simulations of the models to approximate (i) the next period marginal utility expectation (Euler criterion) and (ii) the conditional value functions.  } 
With a balanced panel of $10,000$ individuals observed over $10$ periods, the types are well estimated in the initial EM algorithm and the optimal policies are precisely estimated without bias around the truth (Figure \ref{fig:optipolicy}). Then, the forward simulations approximate well the expected next period marginal utility (Euler criterion) and the conditional value functions (probability criterion), and as a consequence, all the parameters of the models are precisely estimated.\footnote{The probability parameters are more noisy because small changes in these parameters only induce small changes in the observable CCPs.} \\
\indent The `known policies' column in Table \ref{table_comparison_results} corresponds to an estimation of the second stage using the true first stage policies as if they were known. By comparing the DDCC results to these, we can separate the variance in the estimates caused by the second-stage forward simulation' approximations from the variance due to errors in the first-stage policies that carry over to the second stage. One can see that a large part of the variance is driven by the (unbiased) estimation of the first stage, even though some variance remains purely from the simulation process in the second stage (especially for the probability parameters). Note that the precision of the second stage can be increased by increasing the number of forward simulations, at the cost of increasing computation time as well.  \\

\noindent \textbf{Counterfactual exogeneity/sequentiality assumption.} The `Counterfactual' estimation (blue in Figure \ref{fig:optipolicy}) corresponds to an estimation obtained by taking the standard empirical approach of (wrongly) assuming the timing that $D$ is decided before $\Eta$ is realized (sequential choices), or, equivalently, that $D$ is exogenous with respect to $C$.  
As visible in Figure \ref{fig:optipolicy}, this exogeneity assumption leads to severely biased CCCs and CCPs estimates. This is because, by assuming that $D$ is independent of $\Eta$, one assumes that the CCPs are flat in $\Eta$ (see Figure \ref{fig:ccp}). Consequently, the CCCs, $c_{dt}^m(\eta, x)$, are wrongly estimated by the observed quantiles of $C$ in the subsample $D=d$ (given $X=x$ and type $m$), ignoring the fact that the distributions of $\Eta$ differs in the $D=1$ and $D=0$ subsamples. In other words, the main error with this counterfactual approach is that the CCCs are estimated via simple quantile regression (weighted by estimated type-$m$ probabilities), instead of using the proper (weighted) IV-quantile regression approach to correct for the endogeneity.\footnote{\textit{Specification details.} The CCCs are estimated via simple weighted (by estimated type-$m$ probabilities) quantile regression of $C$ on $D$ on each subsamples of income and age, and including the asset as a linear covariate in the regression. For the CCP, since $D$ is assumed exogenous, one simply estimate $\textrm{Pr}(D=1 | W, X, m)$ in the data using a weighted logit of $D$ on $W$, age and income dummies, and a polynomial of order $3$ in assets. } \\
\indent These wrong estimates of the first stage policies induce biased estimates of the structural parameters of the model. The bias is especially severe for parameters related to the probability criteria, but we also obtain wrong estimates of the risk aversion and discount factor. This is problematic and means that taking this wrong (but relatively standard) approach can lead to biased counterfactual policy analysis, and wrong policy recommendations. \\

\noindent \textbf{Computational performances.} As visible in Table \ref{table_comparison_results}, the estimation with the DDCC estimator of Section \ref{section_comparison_estimation} is very fast: it takes about $3$ minutes to estimate the entire model, $143$ seconds to estimate the first stage optimal policies (including the type probabilities) and $45$ seconds for the second stage primitive parameters. Note that these results were obtained using R, a popular language among economists but rarely used for structural estimation due to its slower performance compared to compiled languages like C or Fortran. By enabling rapid estimation in widespread languages like R or Python, the DDCC estimator lowers barriers, making structural modeling accessible for a broader range of applied researchers. \\
\indent To contextualize this performance, I tried to compare the DDCC estimator with the best alternative, i.e., state-of-the-art indirect inference estimation using the endogenous grid method \citep[EGM][]{carroll2006, iskhakov2017} to solve for the value functions and optimal policies. The problem is that the EGM estimation is orders of magnitude (at least 200 times) longer: the computation of a single value function of this life-cycle model takes about 8 minutes, and one needs to compute hundreds (if not thousands) of value functions to find the optimal parameters. As a consequence, one value function computation following \cite{iskhakov2017} is longer than the entire estimation with my method. 
\noindent Therefore, my method yields sizeable computational gains, even to the point where one can estimate models that would otherwise be considered intractable. This is because the two-step method, even though it introduces a fixed computational cost for the first stage optimal choices, drastically reduces the computational burden by avoiding the computation of value functions for each evaluated set of parameters in the second stage.
\footnote{Another advantage of the DDCC estimator with respect to indirect inference with the EGM is that I do not solve numerically for the optimal choices. As a consequence, I do not run into optimization problems and I do not need to smooth potential kinks introduced by the joint discrete-continuous choices, contrary to \cite{iskhakov2017} for example. }  
The more complicated the model, the larger the computational gains.  



%
%
%
%
%
%
%



\section{Conclusion}\label{section_conclusion}

This paper develops a general class of dynamic discrete-continuous choice models including a wide range of unobserved heterogeneity with transitory shocks and permanent unobserved types. I provide a constructive identification proof for this class of models. 
%
Given the identification, I provide a new estimation procedure yielding sizeable computational gains relative to existing alternatives for the estimation of dynamic models. The gains are so large that they should facilitate the practical use of complex dynamic discrete-continuous models in many fields (labor, housing, education, industrial organization, etc.) in the future. \\
\indent This discrete-continuous choice single-agent framework also adapts to stationary infinite horizon (dynamic) games with private information and unobserved market types. The adaptation of the framework is straightforward and similar to how the dynamic discrete choice framework of \cite{am2011} adapts to dynamic discrete games. See Appendix \ref{appendix_dynamic_games} for more details.   


\bibliographystyle{ecta}
{\begin{spacing}{0}
\setlength{\baselineskip}{0pt} 
\footnotesize
\bibliography{references}
\end{spacing}
}

\pagebreak

\pagenumbering{arabic}
\appendix

\noindent {\LARGE \textbf{Supplementary materials: online appendices}} 

\section{Identification with discrete number of alternatives}\label{appendix_discrete}

Assume that the discrete choice $D$ has support $\mathcal{D} = \{0, ..., J\}$, i.e.,  there are $J+1$ alternatives. In the main text I developed the reasoning with binary choice ($J=1$). I focus here on the general case where there are more than $2$ alternatives (i.e., $J \geq 2$). 
Let us assume that the instrument also has a support of (at least) $J+1$ values. This is typically true if $W_t = D_{t-1}$ as suggested. 
\noindent In this section, I only address the identification of the CCCs with $J \geq 2$, because the adjustments of the rest of the main text are straightforward. 
I impose the same assumptions as in Section \ref{section_framework}, with a generalized relevance condition. 
\begin{assumption}[Instrument Relevance $J \geq 2$]\label{relevance_J} The matrix 
\begin{align*}
		\tilde{M}(\eta) = \begin{bmatrix}
		\textrm{Pr}(D=0 | \Eta = \eta, W=0) & \cdots  & \textrm{Pr}(D=J | \Eta = \eta, W=0) \\
		\vdots &  & \vdots\\
		\textrm{Pr}(D=0 | \Eta = \eta, W=J) & \cdots & \textrm{Pr}(D=J | \Eta = \eta, W=J)
	\end{bmatrix} 
\end{align*} 
is such that $\textrm{det} \ \tilde{M}(\eta) \neq 0$ for all $\eta \in \mathcal{H} \backslash \mathcal{K}$, where $\mathcal{K}$ is a (possibly empty) finite set containing $K$ values ($K \geq 0$).
\end{assumption}

Assumption \ref{relevance_J} means that $\tilde{M}(\eta)$ is invertible for $\eta \in [0, 1]$, except possibly on a set of isolated noncritical singular values. In other words, if $\tilde{\eta}$ is a singular value with $\textrm{det}  \ \tilde{M}(\tilde{\eta}) = 0$, it is an isolated/noncritical one, i.e., $\frac{d(\textrm{det}  \ \tilde{M}) (\tilde{\eta})}{d\eta} \neq 0$. 
When $J=1$ (binary choice), Assumption \ref{relevance_J} corresponds to the relevance Assumption \ref{relevance}.\footnote{Indeed, the singular points are characterized by, $\textrm{det} \ \tilde{M}(\eta) = 0$, i.e.,  
\begin{align*}
	\frac{\textrm{Pr}(D=0 | \Eta = \eta, W=1)}{\textrm{Pr}(D=1 | \eta = h, W=1)} &= \frac{\textrm{Pr}(D=0 | \Eta = \eta, W=0)}{\textrm{Pr}(D=1 | \Eta = \eta, W=0)}.
\end{align*}
Since $\textrm{Pr}(D=0 | \Eta = \eta, W=w) = 1 - \textrm{Pr}(D=1 | \Eta = \eta, W=w)$, the condition is equivalent to
\begin{align*}
	\frac{\textrm{Pr}(D=0 | \Eta = \eta, W=1)}{1 - \textrm{Pr}(D=0 | \Eta = \eta, W=1)} &= \frac{\textrm{Pr}(D=0 | \Eta = \eta, W=0)}{1 - \textrm{Pr}(D=0 | \Eta = \eta, W=0)}. 
\end{align*}
Since $f(x) = x/(1-x)$ is strictly increasing on $[0, 1]$, the condition is equivalent to
\begin{align*}
	\textrm{Pr}(D=0 | \Eta = \eta, W=1) = \textrm{Pr}(D=0 | \Eta = \eta, W=0), 
\end{align*}
which is exactly the condition in Assumption \ref{relevance}.}  
Assumption \ref{relevance_J} is a generalization of Assumption \ref{relevance} in the general $J \geq 2$ discrete choice case. 
Notice that it is much weaker than usual full rank identification assumptions \citep[e.g.][]{chernozhukovhansen2005} on the effect of the instrument. $\tilde{M}(\eta)$ does not need  to be invertible for all $\eta \in [0, 1]$, it can have some singularities, as long as they are isolated. 

To identify the CCCs, we have a counterpart to system (\ref{bayes21}) with $J+1$ alternatives 
\begin{align}\label{bayes21_appendix}
\eta &= \sum^{J}_{d=0} \textrm{Pr}(C_d \leq c^*_d(\eta), D=d | W=w)  \quad \text{ for all } \eta \in [0,1] \text{ and } w \in \{0,..., J\}. 
\end{align}
%

\begin{theorem}[Identification with $J \geq 2$]\label{identification_theorem_J}
For every reduced form compatible with the structural model, there exist unique conditional continuous choice (CCC) functions $c_d(h)$ (for $d = 0, ..., J$) mapping $[0, 1]$ into $\mathcal{C}_d$, that are strictly increasing and satisfy
\begin{equation}\label{bayes2_appendix}
\eta = \sum^{J}_{d=0} \textrm{Pr}(C_d \leq c_d(\eta), D=d | W=w)  \quad \text{ for all } \eta \in [0,1] \text{ and } w \in \{0,..., J\}. 
\end{equation} 
\end{theorem}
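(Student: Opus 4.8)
The plan is to follow the logic of the binary proof of Theorem~\ref{identification_theorem}, replacing its one–dimensional ``level-set matching'' argument by a first–order differential characterisation of the entire vector of CCCs. Existence is again immediate: since the reduced form is generated by the structural model, the true policies $c^*_d(\cdot)$ satisfy \eqref{bayes21_appendix} by construction, so they solve \eqref{bayes2_appendix}. For uniqueness, let $(c_0,\dots,c_J)$ be any strictly increasing solution and write the observed sub-distribution functions $G^w_d(c)=\textrm{Pr}(C_d\le c,\,D=d\mid W=w)$, which are continuously differentiable and strictly increasing on $\mathcal{C}_d$ by Lemma~\ref{distrib_c1}; denote $g^w_d=\partial G^w_d/\partial c>0$. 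A preliminary observation is that every strictly increasing solution must share the same, observable, boundary values: evaluating \eqref{bayes2_appendix} at $\eta=0$ and $\eta=1$ and using $\sum_d G^w_d(\sup\mathcal{C}_d)=\sum_d\textrm{Pr}(D=d\mid W=w)=1$ together with strict monotonicity of each $G^w_d$ forces $c_d(0)=\inf\mathcal{C}_d$ and $c_d(1)=\sup\mathcal{C}_d$ for every $d$.

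Next I would differentiate \eqref{bayes2_appendix} in $\eta$. Because the right-hand side is $\eta$ itself, this yields, for each $w\in\{0,\dots,J\}$, the linear relation $\sum_{d=0}^J g^w_d(c_d(\eta))\,c'_d(\eta)=1$, i.e. $B(\eta)\,c'(\eta)=\mathbf{1}$ with $B(\eta)=\big[g^w_d(c_d(\eta))\big]_{w,d}$ and $\mathbf{1}=(1,\dots,1)^{\top}$. The key structural identity is that, along the true trajectory, the change of variables $C_d\le c^*_d(\eta)\Leftrightarrow\Eta\le\eta$ together with $\Eta\sim\mathcal{U}(0,1)\perp W$ give $G^w_d(c^*_d(\eta))=\int_0^\eta\textrm{Pr}(D=d\mid\Eta=s,W=w)\,ds$, hence $g^w_d(c^*_d(\eta))\,{c^*_d}'(\eta)=\textrm{Pr}(D=d\mid\Eta=\eta,W=w)$. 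Writing $P_{wd}(\eta)=\textrm{Pr}(D=d\mid\Eta=\eta,W=w)$, this means $B(\eta)=\tilde{M}(\eta)\,\mathrm{diag}\big(1/{c^*_d}'(\eta)\big)$, so that $\det B(\eta)=\det\tilde{M}(\eta)\cdot\prod_d 1/{c^*_d}'(\eta)$. Since the CCCs are strictly increasing with finite positive derivatives, $B(\eta)$ is invertible precisely when $\det\tilde{M}(\eta)\neq0$, which is exactly Assumption~\ref{relevance_J}. Thus, away from the finite exceptional set $\mathcal{K}$, the system inverts to an ordinary differential equation $c'(\eta)=F(c(\eta))$ whose right-hand side $F(\xi)=[g^w_d(\xi_d)]^{-1}\mathbf{1}$ depends on $\eta$ only through the current value $c(\eta)$.

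I would then close the argument by a continuation/initial-value argument. Both the candidate solution and $c^*$ solve the same problem with the common, observable initial condition $c(0)=(\inf\mathcal{C}_0,\dots,\inf\mathcal{C}_J)$. On any interval of $\eta$ avoiding $\mathcal{K}$ the map $F$ is well defined and, given the differentiability of the observed sub-densities, locally Lipschitz, so by Picard--Lindel\"of the two solutions coincide there. At each isolated singular value in $\mathcal{K}$ the two strictly increasing solutions agree by continuity, and---exactly as in the $K>0$ case of Theorem~\ref{identification_theorem} and Figure~\ref{fig:identification2}, where the monotonicity constraint selects the correct branch through each crossing---the common solution extends uniquely past it. Since $\mathcal{K}$ is finite, induction over its points yields $c_d\equiv c^*_d$ on all of $[0,1]$ for every $d$, establishing uniqueness. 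The equivalence ``$\det\tilde{M}(\eta)\neq0\Leftrightarrow B(\eta)$ invertible'' can, if preferred, also be phrased through row/column operations that reduce $\det\tilde{M}$ to the determinant of the differenced selection matrix $[P_{wd}-P_{0d}]_{w=1..J,\,d=0..J-1}$, generalising the binary footnote.

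The hard part will be the continuation step rather than the linear algebra. The factorisation $B=\tilde{M}\,\mathrm{diag}(1/{c^*_d}')$ and the determinant equivalence are routine, but two points require care: promoting pointwise invertibility of $B(\eta)$ into genuine uniqueness demands enough regularity of the observed sub-densities $g^w_d$ to invoke an ODE uniqueness theorem (Lemma~\ref{distrib_c1} delivers $C^1$ distribution functions, and a mild additional smoothness assumption guarantees the Lipschitz property), and the passage through the $K$ isolated singular points, where $F$ is undefined, must be justified for the full vector-valued problem rather than the scalar monotone matching used in the binary case. This is where the strict monotonicity of every $c^*_d$ does the essential work, just as it does in Theorem~\ref{identification_theorem}.
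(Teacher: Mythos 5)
Your proposal follows the same architecture as the paper's own proof: differentiate the integrated system to get $M(\mathbf{c}(\eta))\,\mathbf{c}'(\eta)=[1\cdots 1]^{T}$ with $M$ built from the observed joint densities, factor $M(\mathbf{c}^*(\eta))=\tilde{M}(\eta)\,\mathrm{diag}\big(\partial {c_d^*}^{-1}/\partial c_d\big)$ so that invertibility of the observable coefficient matrix is equivalent to Assumption~\ref{relevance_J}, and then run an initial-value uniqueness argument on the explicit ODE away from the singular set. Your explicit derivation of the common initial condition $c_d(0)=\inf\mathcal{C}_d$ from the boundary behaviour of the sub-distribution functions is a point the paper leaves implicit (``starting from the known $\mathbf{c}^*(0)$''), and your determinant equivalence is exactly the paper's.

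The genuine gap is at the $K$ isolated singular points, and it is not a detail: it is the step the theorem exists to handle. Your argument there is that the two solutions agree on $[0,\eta_0)$ by Picard--Lindel\"of, hence at $\eta_0$ by continuity, and then ``the common solution extends uniquely past it'' as in the binary case. But uniqueness of the forward continuation is precisely what is at stake: the common value $c(\eta_0)=\mathbf{c}^*(\eta_0)$ lies on the singular manifold $\{\xi:\det B(\xi)=0\}$, where $F(\xi)=B(\xi)^{-1}\mathbf{1}$ is undefined, so Picard--Lindel\"of cannot be restarted at $\eta_0$, and for ODEs with singular initial data several solution branches --- including several strictly increasing ones, a priori --- can emanate from the same point. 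The scalar device of Theorem~\ref{identification_theorem} (piecewise inversion of $\Delta F_{C_d}$ and monotone matching of branches) does not carry over coordinate-by-coordinate, because at the singularity the system couples all $J+1$ components through a rank-deficient matrix. The paper closes this step with machinery you do not supply: it passes to the canonical quasilinear DAE $\det M(\mathbf{c})\,\mathbf{c}'=\mathrm{adj}\,M(\mathbf{c})\,[1\cdots1]^{T}$, notes that singularities on the optimal path are isolated \emph{geometric} singularities (since $[1\cdots1]^{T}$ lies in the image of $M$ there, the true derivative existing, with transversality supplied by the noncriticality in Assumption~\ref{relevance_J}), and invokes Theorems 1 and 2 of \cite{marszalek2005}, which state that exactly one strictly increasing solution passes smoothly through such a singularity. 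Without this (or an equivalent argument) your induction over the points of $\mathcal{K}$ does not go through. The secondary issue you flag yourself --- that Picard--Lindel\"of needs locally Lipschitz densities, i.e.\ slightly more regularity than Lemma~\ref{distrib_c1} delivers --- is real but minor by comparison.
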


\vspace{0.5\baselineskip}
\indent The conditional continuous choice (CCC) functions are identified if and only if there exist unique functions $c_d(\eta)$ strictly increasing with respect to $\eta$, which satisfy equation (\ref{bayes2_appendix}), and are compatible with the reduced form $R$. 
Thus, by Theorem \ref{identification_theorem}, the optimal CCCs, $c_d^*(\eta)$ ($d = 0,..., J$), are point identified from the reduced form $R$ as the unique strictly increasing solutions to (\ref{bayes2_appendix}). \\

\vspace{-0.5\baselineskip}

\begin{proof} 
\textit{Existence} of the solution is trivial: the reduced form is compatible with the structural model, so, by construction and as shown in equation (\ref{bayes21_appendix}), the  $c_d^*(\cdot)$ ($d = 0, ..., J$) solve system (\ref{bayes2_appendix}). 
\noindent To prove \textit{uniqueness} of the solution, let us derive system (\ref{bayes2_appendix}) with respect to $\eta$, for all $\eta \in [0, 1]$, 
\begin{align*}
	1 = \sum^{J}_{d=0} \frac{\partial c_d(\eta)}{\partial \eta} f_{D, C| w}(d, c_d(\eta))  \quad \text{ for all } w \in \{0,..., J\}, 
\end{align*}
where $f_{D, C| w}(d, c_d(\eta))$ are the joint densities of $D$ and $C$ conditional on $W=w$. 
This quasilinear implicit differential equation can be written under matrix form: 
\begin{align}\label{system_identification_appendix}
	&\begin{bmatrix}
		f_{D, C| 0}(0, c_0(\eta))  & \cdots & f_{D, C| 0}(J, c_J(\eta)) \\
		\vdots & & \vdots \\
		f_{D, C| J}(0, c_0(\eta))  & \cdots & f_{D, C| J}(J, c_J(\eta)) \\
	\end{bmatrix}
		\begin{bmatrix}
\partial c_0(\eta)/\partial \eta \\
\vdots \\
\partial c_J(\eta)/\partial \eta \\
	\end{bmatrix}
	 = 		\begin{bmatrix}
1 \\
\vdots \\
1 \\
	\end{bmatrix} \notag \\
	\overset{def}{\iff}&   \quad \quad \quad \quad \quad \quad \quad   M(\mathbf{c}(\eta)) \ \frac{\partial \mathbf{c}(\eta)}{\partial \eta} = \begin{bmatrix}
1 \\
\vdots \\
1 \\
	\end{bmatrix}, 
\end{align}
where $\mathbf{c}(\eta)$ is a notation for the vector of the $c_d(\eta)$ for all $d=0, ..., J$. \\
\indent Notice that, if $M$ is invertible, we can write the explicit differential equation 
\begin{align}\label{eq_appendix_invertible}
	\frac{\partial \mathbf{c}(\eta)}{\partial \eta} = M^{-1}(\mathbf{c}(\eta)) \ [1 \hdots 1]^T 
\end{align}
and this system of differential equations uniquely determines the derivatives $\partial c_d(\eta)/\partial \eta$ and thus the evolution of $c_d(\eta)$. 
Thus, if $M(\mathbf{c}^*(\eta))$ is invertible for all $\eta$ on the true optimal path, $\mathbf{c}^*(\eta)$, the true CCCs are the unique solution to the initial value problem starting from the known $\mathbf{c}^*(0)$ and solving system (\ref{system_identification_appendix}). 
Indeed, since $\mathbf{c}^*(\eta)$ is a solution to the system, and the derivatives are uniquely determined by (\ref{eq_appendix_invertible}), the uniquely determined derivatives correspond to $\partial \mathbf{c}^*(\eta)/\partial \eta$. Thus, starting from the known true $\mathbf{c}^*(0)$, we recover the entire true CCCs, $\mathbf{c}^*(\eta)$, for all $\eta \in [0, 1]$ by solving system (\ref{system_identification_appendix}). \\
\indent Now, notice that at the optimal CCCs, for all $d = 0, ..., J$ and $w = 0, ..., J$, 
\begin{align*}
f_{D, C| w}(d, c^*_d(\eta))  = \textrm{Pr}(D=d | \Eta = \eta, W=w) \  \partial {c_d^*}^{-1}(c_d)/\partial c_d.
\end{align*} 
Thus, on the optimal path of the true CCCs, we have 
\begin{align*}
	&M(\mathbf{c^*}(\eta)) =  \\ 
	&\underbrace{\begin{bmatrix}
		\textrm{Pr}(D=0 | \Eta = \eta, W=0) & \cdots  & \textrm{Pr}(D=J | \Eta = \eta, W=0) \\
		\vdots &  & \vdots\\
		\textrm{Pr}(D=0 | \Eta = \eta, W=J) & \cdots & \textrm{Pr}(D=J | \Eta = \eta, W=J)
	\end{bmatrix}}_{= \tilde{M}(\eta)}  \ \begin{bmatrix}
		\frac{\partial {c_0^*}^{-1}(c_0)}{\partial c_0} & & 0 \\
		& \ddots & \\
		0 & & \frac{\partial {c_J^*}^{-1}(c_J) }{\partial c_J} 
	\end{bmatrix}.
\end{align*}
Since $c_d^*(\eta)$ are strictly monotone functions of $\eta$, ${c_d^*}^{-1}(c_d)$ are also strictly increasing functions of $c_d$, thus $\partial {c_d^*}^{-1}(c_d) / \partial c_d$ are strictly positive for all $d$. So $M(\mathbf{c^*}(\eta)) $ is invertible if and only if $\tilde{M}(\eta)$ is invertible. \\ 
\indent \textit{Full rank case ($K=0$)}: under a strong version of Assumption \ref{relevance_J} with $K=0$, $\tilde{M}(\eta)$ is full rank and invertible, and so is $M(\mathbf{c^*}(\eta))$ for all $\eta \in [0, 1]$. In this case, the system of differential equation (\ref{system_identification_appendix}) uniquely identifies the true CCCs, $\mathbf{c^*}(h)$, as the solution to the initial value problem starting from the known $\mathbf{c}^*(0)$.\footnote{See \cite{bruneel2024identification} for a similar proof in the full rank case that does not directly rely on the system of differential equations. }  \\
\indent \textit{General case ($K \geq 0$)}: even when there are $K > 0$ isolated singularities on the optimal path, one can show there still exists a unique \textit{strictly increasing} solution to (\ref{bayes2_appendix}), using results from the literature on autonomous quasilinear implicit differential algebraic equations (DAEs) \citep[for e.g.,][]{rabier1989, sotomayor2001, marszalek2005, riaza2008}. 
\noindent First, let us rewrite (\ref{system_identification_appendix}) as an autonomous system with respect to $c$, 
\begin{align}\label{autonomous_system}
	M(\mathbf{c}) \mathbf{c'} = [1 \hdots 1]^T.
\end{align}
Let us denote $\mathbf{\tilde{c}}$ an isolated singularity. $\mathbf{\tilde{c}}$ is such that 
\begin{align*}
	\textrm{det} \ M(\mathbf{\tilde{c}}) = 0 &\text{ and } (\textrm{det}  \ M)' (\mathbf{\tilde{c}}) \neq 0.
\end{align*}
Obviously, in between the isolated singularities, $M(\mathbf{c})$ is invertible and we can proceed as in the previous case to solve the differential equation. 
At a singularity $\mathbf{\tilde{c}}$,  (\ref{autonomous_system}) is not invertible, we cannot write (\ref{eq_appendix_invertible}). Instead, define the \textit{canonical system}: 
\begin{align}\label{appendix_canonical}
	\textrm{det} \ M(\mathbf{c}) \mathbf{c}' = \textrm{adj} \ M(\mathbf{c}) \ [1 \hdots 1]^T. 
\end{align}
Take it at $\mathbf{\tilde{c}}$ to obtain: $\textrm{det} \ M(\mathbf{c}) \mathbf{c}' = \textrm{adj} \ M(\mathbf{c}) \ [1 \hdots 1]^T$.  
A solution to the canonical system (\ref{appendix_canonical}) also solves (\ref{autonomous_system}) and (\ref{system_identification_appendix}) given appropriate initial values. \\
\indent On the optimal path, under Assumption \ref{relevance_J}, we will only encounter $K$ ($\geq 0$) singular values $\tilde{\eta}^k$ for $k=1, ..., K$, with 
\begin{align*}
	\textrm{det} \ M(\mathbf{c}(\tilde{\eta}^k)) = 0 &\text{ and } \frac{d(\textrm{det}  \ M) (\mathbf{c}(\tilde{\eta}^k))}{d \eta} \neq 0.
\end{align*} 
They correspond to the $K$ singular values of the vector $\mathbf{\tilde{c}^k} = \mathbf{c^*}(\tilde{\eta}^k)$.  
These singularities on the optimal path are isolated \textit{geometric} singularities in the terminology of \cite{marszalek2005}. Indeed, $[1 \hdots 1]^T \in \textrm{image of} \ M(\mathbf{\tilde{c}})$, since the true optimal $\mathbf{{c^*}'}$ exist and satisfy (\ref{autonomous_system}) if $\mathbf{\tilde{c}}$ is on the optimal path. 
Now, we are only looking for a solution under the \textit{monotonicity constraint} that $\mathbf{c}' > 0$. The monotonicity constraint eliminates eventual multiplicity of the solutions to the canonical system (\ref{appendix_canonical}) at singular points. 
Indeed, following \cite{marszalek2005} (Theorem 1 and 2), there can only be one strictly increasing solution that goes smoothly through any geometric singularity $\mathbf{\tilde{c}^k}$ present on the optimal path. 
Thus, starting from the true initial value, $\mathbf{c}(0)$, there is a unique strictly increasing solution to  (\ref{bayes2_appendix}) even in the presence of isolated singularities on the optimal path. This solution is the true CCCs, $\mathbf{c^*}(\cdot)$.  \hfill \end{proof} 
%
%
%


\section{Proofs}
\subsection{Proof: Lemma \ref{distrib_c1}}\label{appendix_proof_lemma_distrib_c1}

\begin{proof}
The distribution of $\Eta$ is continuously differentiable and strictly increasing. Under Assumption \ref{ass_shocks}, $0 < \textrm{Pr}(d | \eta, w) < 1$. Thus, the distribution of $\Eta$ given $D=d$ and $W=w$ is also continuously differentiable and strictly increasing.  
Moreover, by the monotonicity Assumption \ref{monotone}, the distribution functions of $C_d$ (given $w$) are strictly monotone transformations of the distribution of $\Eta | d$, and we have
\begin{align*}
\underbrace{\textrm{Pr}( \Eta \leq \ \eta \ | D=d, W=w)}_{=F_{\Eta | d, w} (\eta)} \ = \ \underbrace{\textrm{Pr}(C_d \leq \ c_d^*(\eta) \ | D=d, W=w)}_{=F_{C_d | d, w} (c_d^*(\eta))}   \quad \forall d, w.
\end{align*}
Therefore, since $F_{\Eta |d, w}(\eta)$ is continuously differentiable and strictly increasing (with respect to $\eta$), $F_{C_d | d, w}(c_d^*(\eta))$ is also continuously differentiable and strictly increasing (with respect to $\eta$). Finally, since $c_d^*(\eta)$ are continuously differentiable and strictly increasing with respect to $\eta$ (Assumption \ref{monotone}), $F_{C_d | d, w}(c_d)$ are also continuously differentiable and strictly increasing with respect to $c_d$ for all $d$. 
\end{proof}

\subsection{Proof: Lemma \ref{difference_instru}}\label{appendixnonflat}

%
%
%
%

\begin{proof}
\noindent First, let us relate $\textrm{Pr}(D=d | \Eta=\eta, W=1) - \textrm{Pr}(D=d | \Eta=\eta, W=0)$ to the distributions/quantiles. We have for $\eta \in [0, 1]$ and $w=0, 1$: 
\begin{align}
\eta &= \quad \textrm{Pr}(\Eta \leq \eta) = \textrm{Pr}(\Eta \leq h | W=w) \nonumber \\
&= \quad \textrm{Pr}(\Eta \leq \eta \ | \  D = 0, W=w) \textrm{Pr}(D=0 | W=w) \nonumber \\
&\quad + \textrm{Pr}(\Eta \leq \eta \ | \  D = 1, W=w) \textrm{Pr}(D=1 | W=w) \nonumber \\
&= \quad F_{\Eta|D=0, w}(\eta) p_{D|w}(0) + F_{\Eta|D=1, w}(\eta) p_{D|w}(1). \label{bayes2eta}
\end{align}
Then, combining (\ref{bayes2eta}) at $w=1$ and $w=0$, we obtain for all $\eta$: 
\begin{align}
 &F_{\Eta |D=0, W=1}(\eta) p_{D|1}(0) - F_{\Eta |D=0, W=0}(\eta) p_{D|0}(0)  \nonumber \\
 = \  &- \ \Big( F_{\Eta | D=1, W=1}(\eta) p_{D|1}(1)  -  F_{\Eta | D=1, W=0}(\eta) p_{D|0}(1) \Big) \nonumber \\
\overset{def}{\iff}&  \quad \quad \Delta F_{\Eta_0}(\eta) \ = \ - \Delta F_{\Eta_1}(\eta). \label{diffweta}
\end{align}
Moreover, notice that we have
\begin{align}
F_{\Eta |D=d, w}(\eta) = \textrm{Pr}(D=d | \Eta \leq \eta, W=w) \textrm{Pr}(\Eta \leq \eta | W=w) / \textrm{Pr}(D=d | W=w).  \label{rewrittenFeta}
\end{align}

\noindent First focus on $D=0$ (symmetric reasoning for $D=1$) and plug (\ref{rewrittenFeta}) into (\ref{diffweta}):
\begin{align*}
\Delta F_{\Eta_0}(\eta) =& \ \textrm{Pr}(D=0 | \Eta \leq \eta, W=1) \textrm{Pr}(\Eta \leq \eta | W=1) \\
&-  \textrm{Pr}(D=0 | \Eta \leq \eta, W=0) \textrm{Pr}(\Eta \leq \eta | W=0).
\end{align*}
Moreover, since $\Eta \perp W$: $\textrm{Pr}(\Eta \leq \eta | W=1) = \textrm{Pr}(\Eta \leq \eta | W=0) = \textrm{Pr}(\Eta \leq \eta)$, thus
\begin{align}
\Delta F_{\Eta_0}(\eta) &= \big[ \ \textrm{Pr}(D=0 | \Eta \leq \eta, W=1) - \textrm{Pr}(D=0 | \Eta \leq \eta, W=0) \ \big] \ \textrm{Pr}(\Eta \leq \eta). \label{rewrittenFeta2}
\end{align}
\noindent Under the normalization $\Eta \sim \mathcal{U}(0,1)$,  $\textrm{Pr}(\Eta \leq \eta) = \eta$, and  
\begin{align*}
\textrm{Pr}(D=0 | \Eta \leq \eta, w) = \int^{\eta}_0 \textrm{Pr}(D=0 | \Eta = \tilde{\eta}, w) / \eta \ d\tilde{\eta}. 
\end{align*}
So we can rewrite (\ref{rewrittenFeta2}) for all $\eta$, as
\begin{align}
\Delta F_{\Eta_0}(\eta) 
& = \int^{\eta}_0 \Big(\textrm{Pr}(D=0 | \Eta = \tilde{\eta}, W=1) - \textrm{Pr}(D=0 | \Eta = \tilde{\eta}, W=0) \Big) d\tilde{\eta}. \label{rewrittenFeta3}
\end{align}
Thus, we relate $\Delta F_{\Eta_0}(\cdot)$ with the difference in CCPs present in Assumption \ref{identification2}: 
\begin{align*}
\frac{\partial \Delta F_{\Eta_0}(\eta)}{\partial\eta} = \textrm{Pr}(D=0 | \Eta = \eta, W=1) - \textrm{Pr}(D=0 | \Eta = \eta, W=0).
\end{align*}
\indent Now let us relate these with the observed $C_0$. The true CCCs, $c_d^*(\eta)$, are strictly monotone, thus $F_{\Eta |D=d, W=w}(h) = F_{C_d | D=d, W=w} (c_d^*(\eta))$ and
\begin{align*}
\Delta F_{\Eta_0} (\eta) &= F_{\Eta | D=0, W=1}(\eta) p_{D|1}(0) - F_{\Eta |D=0, W=0} (h) p_{D|0}(0) \\
&= F_{C_0 |D=0, W=1}(c_0^*(\eta))p_{D|1}(0) - F_{C_0 |D=0, W=0} (c_0^*(\eta)) p_{D|0}(0)  \overset{def}{=} \Delta F_{C_0}(c_0^*(\eta)).   
\end{align*}
%
Thus
\begin{align*}
 \frac{\partial \Delta F_{\Eta_0}(h)}{\partial h} &= \textrm{Pr}(D=0 | \Eta = \eta, W=1) - \textrm{Pr}(D=0 | \Eta = \eta, W=0)  \\
&= \frac{\partial \Delta F_{C_0} (c_0^*(\eta))}{\partial\eta} = \frac{\partial \Delta F_{C_0} (c_0)}{\partial c_0}   \underbrace{\frac{\partial c_0^*(\eta)} {\partial\eta}}_{ > 0}.
\end{align*}



\noindent So, if $\textrm{Pr}(D=0 | \Eta = \eta, W=1) - \textrm{Pr}(D=0 | \Eta = \eta, W=0) \neq 0$ (Assumption \ref{identification2}), then $\partial \Delta F_{C_0} (c_0)/ \partial c_0 \neq 0$ by strict monotonicity of $c_0^*(\eta)$. And reversely, at the  $K$ ($\geq 0$) isolated values of $\eta \in [0, 1]$ such that $\textrm{Pr}(D=0 | \Eta = \eta, W=1) = \textrm{Pr}(D=0 | \Eta = \eta, W=0)$, then there is also $K$ isolated values of $c_0 \in \mathcal{C}_0$ such that $\partial \Delta F_{C_0} (c_0)/ \partial c_0 = 0$. \\
%
%
\indent We can follow exactly the same reasoning for $D=1$. Using the fact that $\textrm{Pr}(D=0 | \Eta = \eta, W=w) = 1 - \textrm{Pr}(D=1 | \Eta = \eta, W=w)$ for all $\eta$ and $w=0,1$, we have that if $\textrm{Pr}(D=0 | \Eta = \eta, W=1) - \textrm{Pr}(D=0 | \Eta = \eta, W=0) \neq 0$ (Assumption \ref{identification2}), then $\partial \Delta F_{C_1} (c_1)/ \partial c_1 \neq 0$ by strict monotonicity of $c_1^*(\eta)$. And reversely, at the  $K$ ($\geq 0$) isolated values of $\eta \in [0, 1]$ such that $\textrm{Pr}(D=0 | \Eta = \eta, W=1) = \textrm{Pr}(D=0 | \Eta = \eta, W=0)$, then there is also $K$ isolated values of $c_1 \in \mathcal{C}_1$ such that $\partial \Delta F_{C_1} (c_1)/ \partial c_1 = 0$. 
Moreover, the $K$ values of $c_1$ such that $\partial \Delta F_{C_1} (c_1)/ \partial c_1 = 0$ corresponds to the same $h$ as the $K$ values of $c_0$ such that $\partial \Delta F_{C_0} (c_0)/ \partial c_0 = 0$. \hfill \end{proof}  

\subsection{Details of the proof of Theorem \ref{identification_theorem} when K $> 0$}\label{appendix_proof_identification}

There is a finite number $K < \infty$ of $c_0$ and $c_1$ such that $\partial \Delta F_{C_d}(c_d)/\partial c_d = 0$, denoted $c_0^1 < c_0^2 < ... < c_0^K$ and $c_1^1 < c_1^2 < ... < c_1^K$. These points corresponds to the same set of underlying $\eta^k$ and are thus the image of each other  (cf proof of Lemma \ref{difference_instru}).  
Thus, we necessarily have 
\begin{align*}
	\tilde{c_0}(c_1^k) = c_0^k \text{ for } k = 1, ..., K, 
\end{align*}
since otherwise $\tilde{c_0}(\cdot)$ would not be strictly increasing with respect to $c_1$. 

Now, we show that $\tilde{c_0}(c_1)$ is also unique in between the $c_1^k$. We use that the function $\Delta F_{C_d}(\cdot)$ are \textit{piecewise monotone and invertible} between the $K$ points of null derivative.
First, for $d=0,1$, split the compact set $\mathcal{C}_d$ of $c_d$ into $K+1$ sub-intervals $\mathcal{C}_d^k$: 
$\mathcal{C}_d^1 = [c_d^*(0), c_d^1], \ \mathcal{C}_d^2 = [c_d^1, c_d^2],\  ... \ , \  \mathcal{C}_d^{K+1} = [c_d^K, c_d^*(1)]$  such that $\mathcal{C}_d = \underset{k \in \{1, ..., K+1\}}{\cup} \mathcal{C}_d^k $, and where $c_d^*(0)$ and $c_d^*(1)$ are identified as the minimum and maximum observed $C_d$.   
Denote $\mathcal{S}_0^k$ and $\mathcal{S}_1^k$ the image of those subsets by $\Delta F_{C_0}(\cdot)$ and $- \Delta F_{C_1}(\cdot)$ respectively, i.e., $\Big((-1)^d \Delta F_{C_d}\Big): \mathcal{C}_d^k \rightarrow \mathcal{S}_d^k$. 
Let us show that, on each subintervals, the images $\mathcal{S}_0^k = \mathcal{S}_1^k$ correspond to each other.
$\Delta F_{C_d}(\cdot)$ are strictly monotone and invertible (since continuously differentiable) between the points of null derivative. Thus, $\mathcal{S}_d^k$ are compact sets, as image of compact sets by strictly monotone functions.
Moreover, at the global boundaries, $F_{C_d | d, w}(c_d^*(0)) = 0$ for $d=0,1$ and $w=0,1$, thus, $\Delta F_{C_d}(c_d^*(0)) = 0$ for $d = 0, 1$, 
and $F_{C_d | d, w}(c_d^*(1)) = 1$ for $d=0,1$ and $w=0,1$, thus, $\Delta F_{C_0}(c_0^*(1)) = p_{D|1}(0) - p_{D|0}(0) = (1 - p_{D|1}(1)) - (1 - p_{D|0}(1)) = - (p_{D|1}(1) - p_{D|0}(1)) = - \Delta F_{C_1}(c_1^*(1))$. 
For the interval boundaries, they are images of each other so we have, $\Delta F_{C_0}(c_0^k) = -\Delta F_{C_1}(c_1^k)$ for $k = 1, ..., K$. 
So, it implies that $\mathcal{S}_0^k = \mathcal{S}_1^k$ and we denote them $\mathcal{S}^k$ for all $k = 1, ..., K+1$. 
We have: $\mathcal{S}^0 = [0, -\Delta F_{C_1}(c_1^1)], \ \mathcal{S}^1 = [-\Delta F_{C_1}(c_1^1), -\Delta F_{C_1}(c_1^2) ], \ ..., \ \mathcal{S}^{K+1} = [-\Delta F_{C_1}(c_1^K), -\Delta F_{C_1}(c_1^*(1)) ]$. \\
\indent Now, we are looking for an increasing mapping solution to (\ref{mapping_sketch}). By \textit{monotonicity}, we know that for a solution $\tilde{c_0}(\cdot)$, we have $\tilde{c_0}: \mathcal{C}_1^k \rightarrow \mathcal{C}_0^k$ since the bounds of these sets are image of each other. 
On each subintervals $\mathcal{C}_d^k$, the corresponding functions $\Delta F_{C_d}(\cdot)$ are strictly monotone and continuously differentiable for $d=0, 1$. Moreover, they have the same image, $\Delta F_{C_0} : \mathcal{C}_0^k \rightarrow S^k$ and $(- \Delta F_{C_1}): \mathcal{C}_1^k \rightarrow S^k$.  So we can invert the $\Delta F_{C_0}(\cdot)$ segment by segment and get for any $k = 1, ..., K+1$: 
\begin{align*}
\tilde{c_0}(c_1) = ( \Delta F_{C_0} )^{-1} ( \Delta F_{C_1}(c_1) )   \text{ for all } c_1 \in \mathcal{C}_1^k.  
\end{align*}
This uniquely define the solution $\tilde{c_0}(c_1)$ on $\mathcal{C}_1^k$. 
We repeat it for all $k = 1, ..., K+1$, and obtain the unique mapping $\tilde{c_0}(c_1)$ solution to (\ref{mapping_sketch}) for all $c_1 \in \mathcal{C}_1$.


\section{Unobserved types}\label{appendix_types}

\indent At any time $t \in \{1, ..., T\}$, we observe the joint conditional densities $f_{D_t, C_t | x_t, d_{t-1}, t}(d_t, c_t)$, and by construction, we know that
\begin{align*}
f_{D_t, C_t | x_t, d_{t-1}, t}(d_t, c_t) = \sum^M_{m=1} \pi^m f^m_{D_t, C_t | x_t, d_{t-1}}(d_t, c_t)
\end{align*}
Over several periods, we have the joint density 
\begin{align}\label{eq_panel}
f(\{d_{t}, c_t, x_t\}_{t=1}^T) = \sum^M_{m=1} &\pi^m f^{m}_{D_1, C_1, X_1}(d_1, c_1, x_1)
 \prod^T_{t=2} f_{X_t| x_{t-1}, c_{t-1}, d_{t-1}}(x_t) f^m_{D_t, C_t | x_t, d_{t-1}}(d_t, c_t).
\end{align}
where $f^{m}_{D_1, C_1, X_1}(\cdot)$ is the density of the first period, when we do not observe past choices.  
Define $s_t = (d_t, c_{t}, x_t)$, $q^{*m}(s_1) = f^{m}_{D_1, C_1, X_1}(d_1, c_1, x_1)$, and 
\begin{align*}
Q^m_t(s_t | s_{t-1}) = f_{X_t| x_{t-1}, c_{t-1}, d_{t-1}}(x_t) \  f^m_{D_t, C_t | x_t, d_{t-1}, t}(d_t, c_t). 
\end{align*}
$s_t$ follows a first-order Markov process.\footnote{In fact, here, $f^m_{D_t, C_t | x_t, d_{t-1}, t}(d_t, c_t)$ does not depend on $c_{t-1}$, even though we write it generally in $Q^m_t(s_t | s_{t-1})$. But it could be identified even if it depended on $c_{t-1}$.} 
Rewrite equation (\ref{eq_panel}) as
\begin{equation}
f(\{s_t\}^T_{t=1}) = \sum^M_{m=1} \pi^m q^{*m}(s_1) \prod^T_{t=2} Q^m_t(s_t | s_{t-1}). \tag{\ref{eq_panel}'}
\end{equation}
Notice that $s_t$ appears in both $Q^m_t(s_t | s_{t-1})$ and $Q^m_{t+1}(s_{t+1} | s_t)$, which creates a dependence between these two terms. 
As in \cite{kasaharashimotsu2009}, I solve this dependence problem by using the first-order Markov property of $s_t$ and looking at \textit{every two periods} in order to break the dependence of $s_t$ across periods. \\
%
%
%
%
%
\indent Without loss of generality, first focus on even time periods. The time horizon is $T \geq 6$ and $T$ is even such that $T$ is the last \textit{even} period.\footnote{If $T$ is odd, just follow the same reasoning with $T-1$ (i.e., the last even period) instead of $T$.}  Fix a predefined \textbf{sequence} $\mathbf{\bar{s}} = \{ \bar{s}_1, \bar{s}_3, ..., \bar{s}_{T-1} \}$ for odd time periods. 
First focus on the identification of the joint densities given this predefined sequence. As a difference with \cite{kasaharashimotsu2009}, I allow for $\bar{s}_1 \neq \bar{s}_3 \neq ... \neq \bar{s}_{T-1}$. This modification is made to account for the presence of the asset in the covariates, because of which, a given $\bar{s}_{t+1}$ is only reachable by picking some $s_t$ given $\bar{s}_{t-1}$. I can pick any predefined sequence, as long as it is observable in the data, i.e., for any $t$, there must exist at least one $s_t$ such that $f(\bar{s}_{t+1}, s_t, \bar{s}_{t-1}) > 0$. Identification requires the existence of $M-1$ values of $s_t$ satisfying this condition. 
%
%
%
%
%
Conditional on this predefined sequence $\mathbf{\bar{s}}$, also define
\begin{align*}
\lambda_{t, \ \mathbf{\bar{s}}}^m (s_t) &= Q^m_{t+1}(\bar{s}_{t+1} | s_t) \  Q^m_t(s_t | \bar{s}_{t-1})  \quad \text{ for } t = 2, 4, ..., T-2 \\
\text{and } \quad \lambda_{T, \ \mathbf{\bar{s}}}^{*m} (s_T) &= Q^m_{T}(s_{T} | \bar{s}_{T-1}) \text{ for the last (even) period } T.
\end{align*}
Then we have that the observable $f(\{s_t\}_{t=1}^T | s_t = \bar{s}_t \text{ for t odd} )$ is 
\begin{equation}
f(\{s_t\}_{t=1}^T | s_t = \bar{s}_t \text{ for t odd} ) =  \sum^M_{m=1} \pi^m q^{*m}(s_1) \left( \prod_{t=2, 4, ..}^{T-2} \lambda^m_{t, \ \mathbf{\bar{s}}} (s_t) \right) \ \lambda_{T, \ \mathbf{\bar{s}}}^m (s_T).
\end{equation}
Let us define the sets $\mathcal{S}_t|\mathbf{\bar{s}}$ for $t=2,4,...,T$ as the sets of elements which are `compatible' with the predefined sequence $\mathbf{\bar{s}}$, i.e., 
\begin{align*}
\{ \xi \in \mathcal{S}_t|\mathbf{\bar{s}}: f(\bar{s}_{t+1}, s_t=\xi, \bar{s}_{t-1}) > 0 \}.
\end{align*}
Let $\{\xi_j^t\}_{t=2, 4, ..., T-2}$, for $j=1, ..., M-1$ be different elements of $\mathcal{S}_2|\mathbf{\bar{s}} \times \mathcal{S}_4|\mathbf{\bar{s}} \times ... \times \mathcal{S}_{T-2}|\mathbf{\bar{s}}$, let $\xi^T$ be an element of $\mathcal{S}_T|\mathbf{\bar{s}}$ and define 
\begin{align*}
\underset{(M \times M)}{L_{t, \ \mathbf{\bar{s}} }} &= \begin{bmatrix}
1 & \lambda^1_{t, \ \mathbf{\bar{s}}}(\xi^t_1) & \cdots & \lambda^1_{t, \ \mathbf{\bar{s}}}(\xi^t_{M-1}) \\
\vdots & \vdots & \ddots & \vdots \\
1 & \lambda^{M}_{t, \ \mathbf{\bar{s}}}(\xi^t_1) & \cdots & \lambda^{M}_{t, \ \mathbf{\bar{s}}}(\xi^t_{M-1}) \\
\end{bmatrix}, \ \\ 
\underset{(M \times M)}{D_{\xi^T, \ \mathbf{\bar{s}}}} &= \textrm{diag}(\lambda^{*1}_{T, \ \mathbf{\bar{s}}}(\xi^T), \hdots, \lambda^{*M}_{T, \ \mathbf{\bar{s}}}(\xi^T)) 
\text{ and } \quad \underset{(M \times M)}{V} = \textrm{diag}( \pi^1, ..., \pi^m). 
\end{align*}
The elements of $L_{t, \ \mathbf{\bar{s}} }$ and $D_{t, \ \mathbf{\bar{s}} }$ and $V$ are the parameters of the mixture models we want to identify. \\
\indent Some remarks about the `\textit{compatible sets}': 
first, notice that by Assumption \ref{type_support}, if $\xi \in \mathcal{S}_t|\mathbf{\bar{s}}$ then we also have $f^m(\bar{s}_{t+1}, s_t=\xi, \bar{s}_{t-1}) > 0$ for all $m$. Thus, $\lambda^m_{t, \ \mathbf{\bar{s}}}(\xi) > 0$. This is necessary for identification as we will require that $\{\xi_j^t\}$ are such that $L_{t, \ \mathbf{\bar{s}} }$ is nonsingular. Hence the focus on these `compatible sets'. 
Second, the introduction of these sets is only important if $f_t(x_t | x_{t-1}, d_{t-1}, c_{t-1}) = 0$ for some $(t, x_t, x_{t-1}, d_{t-1}, c_{t-1})$, i.e., if assumption 1(c) of \cite{kasaharashimotsu2009} is violated. It is the case in the example with consumption and labor choice if the assets are in the covariates. Indeed the asset has deterministic transition through a budget constraint: $\text{asset}_{t} = (1+r_{t-1}) \text{asset}_{t-1} - c_{t-1} + \text{income}_{t-1} d_{t-1}$. Now fix the sequence $\bar{\mathbf{s}}$. For any even $t$, the fixed $\bar{s}_{t-1}$ yields an unique value of $\tilde{\text{asset}}_t$ for the asset at time $t$. Any $s_t$ with an asset different from $\tilde{\text{asset}}_t$ is not compatible with the predefined sequence. The sequence $\mathbf{\bar{s}}$ uniquely determines the assets for all time periods $\{\text{asset}\}_{t=1}^T$. 
So, in order to identify $f^m_{D_t, C_t | x_t, d_{t-1}, t}(d_t, c_t)$ conditional on different asset values, one needs to adjust the sequence $\mathbf{\bar{s}}$.  
Moreover, for a given $(\text{income}_t, r_t, d_t)$, $c_t$ can only take one value: $c_t = \bar{\text{asset}}_{t+1} - (1+r_t) \tilde{\text{asset}_t} - d_t \text{income}_t$. Thus the set of values $s_t$ can take given a fixed sequence $\mathbf{\bar{s}}$ is quite limited. To identify the joint densities for all $c_{dt} \in \mathcal{C}_{dt}$ given a fixed $\bar{s}_{t-1}$, one needs to adjust $\bar{s}_{t+1}$ in the sequence so that $c_t$ can adjust. 
Finally notice that if $M$ is large, it is useful to have covariates $x^1$ which do not enter the budget constraint and for which $f_t(x^1_t | x_{t-1}, d_{t-1}, c_{t-1}) > 0$ for all $(t, x^1_t, x_{t-1}, d_{t-1}, c_{t-1})$ in order to have more elements in $\mathcal{S}_t|\mathbf{\bar{s}}$ from which to pick $M-1$ elements $\xi$ from. \\
\indent Now, let us define notation for what is observable in the data, given the fixed sequence $\mathbf{\bar{s}}$. If $t$ is even, using the first order markov property of $s$, we define
\allowdisplaybreaks
\begin{align*}
F^t_{\mathbf{\bar{s}}}(s_t) &= f(s_t, s_{t+1} | s_\tau = \bar{s}_\tau \text{ for } \tau \text{ odd}) = \sum_{m=1}^M \pi^m \lambda^m_{t, \ \mathbf{\bar{s}}} (s_t), \\
\text{and for } T, \quad F^{*T}_{\mathbf{\bar{s}}}(s_T) &= f(s_T | s_\tau = \bar{s}_\tau \text{ for } \tau \text{ odd}) = \sum_{m=1}^M \pi^m \lambda^{*m}_{T, \  \mathbf{\bar{s}}}(s_T). 
\end{align*}
Similarly, for transitions over two (non consecutive) even periods, define
\begin{align*}
F^{t, t+2}_{\mathbf{\bar{s}}}(s_t, s_{t+2}) = f(s_t, s_{t+1}, s_{t+2}, s_{t+3} | s_\tau = \bar{s}_\tau \text{ for } \tau \text{ odd}) &= \sum^M_{m=1} \pi^m \lambda^m_{t+2, \ \mathbf{\bar{s}}} (s_{t+2})\  \lambda^m_{t, \ \mathbf{\bar{s}}} (s_{t}), \\
\text{Including $T$, } F^{*t, T}_{\mathbf{\bar{s}}}(s_t, s_T) = f(s_t, s_{t+1}, s_{T} | s_\tau = \bar{s}_\tau \text{ for } \tau \text{ odd}) &= \sum^M_{m=1} \pi^m \lambda^m_{t, \ \mathbf{\bar{s}}} (s_{t})\  \lambda^{*m}_{T, \ \mathbf{\bar{s}}} (s_{T}).
\end{align*}
Finally, for the transitions over three (non consecutive) periods, including $T$, define
\begin{align*}
F^{*t, t+2, T}_{\mathbf{\bar{s}}}(s_t, s_{t+2}, s_T) &= f(s_t, s_{t+1}, s_{t+2}, s_{t+3}, s_T | s_\tau = \bar{s}_\tau \text{ for } \tau \text{ odd}) \\
 &= \sum^M_{m=1} \pi^m \lambda^m_{t, \ \mathbf{\bar{s}}} (s_{t}) \ \lambda^m_{t+2, \ \mathbf{\bar{s}}} (s_{t+2}) \ \lambda^{*m}_{T, \ \mathbf{\bar{s}}} (s_{T}),
\end{align*}
which is equal to $f(\{ s_t \}_{t=2}^T | s_\tau = \bar{s}_\tau \text{ for } \tau \text{ odd})$ when $T=6$. \\
\indent So, the observable probabilities $F$, can be related with type-dependent parameters of the mixture models we want to identify. Let us evaluate these marginals at combinations of selected $s_t = \{\xi^t_j\}_{t, t+2}$ for $j = 1, ..., M-1$, and $\{\xi^T\}$. Arrange them into two $M \times M$ matrices: \\
\begin{align*}
\underset{(M \times M)}{P_{t, \ \mathbf{\bar{s}} }} &= 
\begin{bmatrix}
1 & F^{t+2}_{\mathbf{\bar{s}}}(\xi^{t+2}_1) & \cdots & F^{t+2}_{\mathbf{\bar{s}}}(\xi^{t+2}_{M-1}) \\
& & & \\
F^{t}_{\mathbf{\bar{s}}}(\xi^{t}_1) & F^{t, t+2}_{\mathbf{\bar{s}}}(\xi^t_1, \xi^{t+2}_1) & \cdots & F^{t, t+2}_{\mathbf{\bar{s}}}(\xi^t_1, \xi^{t+2}_{M-1}) \\
& & & \\
\vdots & \vdots & \ddots & \vdots \\
& & & \\
F^{t}_{\mathbf{\bar{s}}}(\xi^{t}_{M-1}) & F^{t, t+2}_{\mathbf{\bar{s}}}(\xi^t_{M-1}, \xi^{t+2}_1) & \cdots & F^{t, t+2}_{\mathbf{\bar{s}}}(\xi^t_{M-1}, \xi^{t+2}_{M-1}) \\
\end{bmatrix}, \\
\\
\underset{(M \times M)}{P^*_{t,\  \xi^T, \ \mathbf{\bar{s}} }} &= 
\begin{bmatrix}
F^{*T}_{\mathbf{\bar{s}}}(\xi^T) & F^{*t+2, T}_{\mathbf{\bar{s}}}(\xi^{t+2}_1, \xi^T) & \cdots & F^{*t+2, T}_{\mathbf{\bar{s}}}(\xi^{t+2}_{M-1}, \xi^T) \\
& & & \\
F^{*t, T}_{\mathbf{\bar{s}}}(\xi^{t}_1, \xi^T) & F^{*t, t+2, T}_{\mathbf{\bar{s}}}(\xi^t_1, \xi^{t+2}_1, \xi^T) & \cdots & F^{*t, t+2, T}_{\mathbf{\bar{s}}}(\xi^t_1, \xi^{t+2}_{M-1}, \xi^T) \\
& & & \\
\vdots & \vdots & \ddots & \vdots \\
& & & \\
F^{*t, T}_{\mathbf{\bar{s}}}(\xi^{t}_{M-1}, \xi^T) & F^{*t, t+2, T}_{\mathbf{\bar{s}}}(\xi^t_{M-1}, \xi^{t+2}_1, \xi^T) & \cdots & F^{*t, t+2, T}_{\mathbf{\bar{s}}}(\xi^t_{M-1}, \xi^{t+2}_{M-1}, \xi^T) \\
\end{bmatrix}. 
\end{align*}
Now, since $\sum^M_{m=1} \pi^m = 1$, we obtain \textit{factorization equations} for any even $t \leq T-4$: 
\begin{equation}
P_{t, \mathbf{\bar{s}} } = L'_{t,\mathbf{\bar{s}} } \ V \ L_{t+2, \mathbf{\bar{s}} }, \quad \text{ and } P^*_{t, \ \xi^T, \mathbf{\bar{s}} } = L'_{t, \mathbf{\bar{s}} } \ D_{\xi^T, \mathbf{\bar{s}} } \ V \ L_{t+2, \mathbf{\bar{s}} }.
\end{equation}
We use these known relation to identify $\{\lambda^m_{t, \ \mathbf{\bar{s}}}(\xi^t_j)\}_{t=2, 4, ..., T-2}$ for $j=1,.., M-1$ and for all $m = 1,..., M$ and $\lambda^{*m}_{T, \ \mathbf{\bar{s}}}(\xi^t_j)$ for all $m = 1,..., M$. 

\begin{proposition}[Identification given $\mathbf{\bar{s}}$ for even $t$]\label{proposition_even_sbar} Suppose that $s_t$ follows a first-order Markov process and assume $T \geq 6$ and is even. Fix an observable predefined sequence $\mathbf{\bar{s}}$ for odd $t$ periods. For $t$ even, let $\xi^t_j$, $j=1,..., M-1$ be elements of $\mathcal{S}_t | \mathbf{\bar{s}}$. 
Suppose $q^{*m}(\bar{s}_1) > 0$ for all $m$, and for any even $t$, there exists  $M-1$ elements $\{\xi^t_j\}_{j=1,..., M-1}$ of $\mathcal{S}_t | \mathbf{\bar{s}}$,  such that $L_{t, \ \mathbf{\bar{s}} }$ is nonsingular for Suppose also that there exists $\xi^T \in \mathcal{S}_T|\mathbf{\bar{s}}$ such that $\lambda^{*m}_{T, \ \mathbf{\bar{s}}}(\xi^T) > 0$ for all $m$ and $\lambda^{*m}_{T, \ \mathbf{\bar{s}}}(\xi^T) \neq \lambda^{*n}_{T, \ \mathbf{\bar{s}}}(\xi^T)$ for any $m \neq n$. 
Then $\{\pi^m, \{ \lambda^{*m}_{T, \ \mathbf{\bar{s}}}(\xi)\}_{\xi \in \mathcal{S}_T | \mathbf{\bar{s}}}, \{\{ \lambda^{m}_{t, \ \mathbf{\bar{s}}}(\xi^t)\}_{\xi^t \in \mathcal{S}_t | \mathbf{\bar{s}}} \}_{t=2, 4, ..., T} \}^M_{m=1}$ is uniquely determined from $\{ f(\{s_t\}^T_{t=1} | s_t = \bar{s}_t \text{ for t odd}): \{s_t\}_{t=2, 4, ..., T} \in \mathcal{S}_2|\mathbf{\bar{s}} \times \mathcal{S}_4|\mathbf{\bar{s}} \times ... \times \mathcal{S}_T|\mathbf{\bar{s}} \}$. 
\end{proposition}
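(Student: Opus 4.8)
The plan is to adapt the spectral (eigenvalue) decomposition argument of \cite{kasaharashimotsu2009} to the two factorization equations derived above, namely $P_{t, \mathbf{\bar{s}}} = L'_{t, \mathbf{\bar{s}}} V L_{t+2, \mathbf{\bar{s}}}$ and $P^*_{t, \xi^T, \mathbf{\bar{s}}} = L'_{t, \mathbf{\bar{s}}} D_{\xi^T, \mathbf{\bar{s}}} V L_{t+2, \mathbf{\bar{s}}}$. I would first fix the even base period $t=2$ (this is where $T\geq 6$ even matters, so that $2,4,\ldots,T$ provides enough even periods) and observe that all three factors are nonsingular: $L_{2, \mathbf{\bar{s}}}$ and $L_{4, \mathbf{\bar{s}}}$ by the hypothesis on the selected points $\{\xi^t_j\}$, and $V = \mathrm{diag}(\pi^1, \ldots, \pi^M)$ since $\pi^m > 0$ for every $m$. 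Hence $P_{2, \mathbf{\bar{s}}}$ is invertible, which is what makes the decomposition usable.

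Next I would form the matrix
\[
A \;:=\; P^*_{2, \xi^T, \mathbf{\bar{s}}}\, P_{2, \mathbf{\bar{s}}}^{-1} \;=\; L'_{2, \mathbf{\bar{s}}}\, D_{\xi^T, \mathbf{\bar{s}}}\, \big(L'_{2, \mathbf{\bar{s}}}\big)^{-1},
\]
so that $A$ is similar to the diagonal matrix $D_{\xi^T, \mathbf{\bar{s}}}$. The hypothesis $\lambda^{*m}_{T, \mathbf{\bar{s}}}(\xi^T) \neq \lambda^{*n}_{T, \mathbf{\bar{s}}}(\xi^T)$ for $m \neq n$ means $A$ has $M$ distinct eigenvalues, so its eigendecomposition is unique up to the ordering of eigenvalues and the scaling of eigenvectors. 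The eigenvalues identify $\{\lambda^{*m}_{T, \mathbf{\bar{s}}}(\xi^T)\}_m$, and fixing their order pins down the (inherently arbitrary) labeling of the $M$ types. The eigenvectors are the columns of $L'_{2, \mathbf{\bar{s}}}$; since the first column of $L_{2, \mathbf{\bar{s}}}$ is a vector of ones, each eigenvector has leading entry one, so normalizing to a unit first entry removes the scaling indeterminacy and delivers $L_{2, \mathbf{\bar{s}}}$, i.e. the values $\lambda^m_{2, \mathbf{\bar{s}}}(\xi^2_j)$ for all $m$ and $j = 1, \ldots, M-1$.

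To recover $L_{4, \mathbf{\bar{s}}}$ with the \emph{same} type labels, I would apply the identical argument to the transposed product
\[
\big(P_{2, \mathbf{\bar{s}}}^{-1} P^*_{2, \xi^T, \mathbf{\bar{s}}}\big)' \;=\; L'_{4, \mathbf{\bar{s}}}\, D_{\xi^T, \mathbf{\bar{s}}}\, \big(L'_{4, \mathbf{\bar{s}}}\big)^{-1},
\]
whose eigenvectors are the rows of $L_{4, \mathbf{\bar{s}}}$ (again with leading entry one) and whose eigenvalues are once more $\lambda^{*m}_{T, \mathbf{\bar{s}}}(\xi^T)$, so ordering by these common distinct eigenvalues keeps the labeling aligned. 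With $L_{2, \mathbf{\bar{s}}}$ and $L_{4, \mathbf{\bar{s}}}$ known, the first factorization gives $V = (L'_{2, \mathbf{\bar{s}}})^{-1} P_{2, \mathbf{\bar{s}}} L_{4, \mathbf{\bar{s}}}^{-1}$, identifying all $\pi^m$. Finally, to pass from the $M-1$ selected points to every $\xi \in \mathcal{S}_t | \mathbf{\bar{s}}$, I would solve, for each such $\xi$, the linear system
\[
L'_{t+2, \mathbf{\bar{s}}}\, u \;=\; \big(F^{t}_{\mathbf{\bar{s}}}(\xi),\, F^{t, t+2}_{\mathbf{\bar{s}}}(\xi, \xi^{t+2}_1),\, \ldots,\, F^{t, t+2}_{\mathbf{\bar{s}}}(\xi, \xi^{t+2}_{M-1})\big)',
\]
whose right-hand side is observed and where $u_m = \pi^m \lambda^m_{t, \mathbf{\bar{s}}}(\xi)$; nonsingularity of $L_{t+2, \mathbf{\bar{s}}}$ yields $u$, and dividing by the known $\pi^m$ returns $\lambda^m_{t, \mathbf{\bar{s}}}(\xi)$. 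The same device recovers $\lambda^{*m}_{T, \mathbf{\bar{s}}}(\xi)$ for all $\xi \in \mathcal{S}_T | \mathbf{\bar{s}}$, and when $T > 6$ one chains the construction forward through $t = 4, 6, \ldots, T-2$, reusing the fixed distinct eigenvalues to hold the type indices fixed across periods.

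The hard part is not the linear algebra, which is routine once distinct eigenvalues are in hand, but the compatible-set bookkeeping that distinguishes this result from \cite{kasaharashimotsu2009}. Because the asset has a deterministic transition, $f(\bar{s}_{t+1}, s_t, \bar{s}_{t-1})$ vanishes off $\mathcal{S}_t | \mathbf{\bar{s}}$, violating their full-support assumption; I must therefore check that each $\mathcal{S}_t | \mathbf{\bar{s}}$ contains enough points to build a nonsingular $L_{t, \mathbf{\bar{s}}}$ (with the positivity $\lambda^m_{t, \mathbf{\bar{s}}} > 0$ on these sets guaranteed by Assumption \ref{type_support}), and that the labeling read off from the eigenvalues is genuinely consistent between the two transpose decompositions and along the forward chain. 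Verifying these positivity and consistent-ordering conditions, rather than the decomposition itself, is where the care is required.
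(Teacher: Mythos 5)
Your proposal follows essentially the same route as the paper's proof: diagonalize a product of the two observable matrices, use distinctness of the eigenvalues $\lambda^{*m}_{T,\ \mathbf{\bar{s}}}(\xi^T)$ to make the decomposition unique, exploit the entries equal to one to resolve the eigenvector scaling, and then extend from the selected points to the whole compatible sets by solving linear systems in the known nonsingular matrices. The only differences are mechanical: the paper diagonalizes $A_{\xi^T} = P_{t,\ \mathbf{\bar{s}}}^{-1}P^*_{t,\ \xi^T,\ \mathbf{\bar{s}}} = L_{t+2,\ \mathbf{\bar{s}}}^{-1}D_{\xi^T,\ \mathbf{\bar{s}}}L_{t+2,\ \mathbf{\bar{s}}}$ once and recovers $L_{t,\ \mathbf{\bar{s}}}$, $V$ and $L_{t+2,\ \mathbf{\bar{s}}}$ by matrix algebra (using the rows/columns of ones), whereas you diagonalize the reversed product $P^*P^{-1} = L'_{t,\ \mathbf{\bar{s}}}D_{\xi^T,\ \mathbf{\bar{s}}}(L'_{t,\ \mathbf{\bar{s}}})^{-1}$ and its transpose counterpart, pairing the two decompositions by their common distinct eigenvalues; both are valid.

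One step fails as literally written. Your linear system for passing from the selected points to every $\xi \in \mathcal{S}_t|\mathbf{\bar{s}}$ premultiplies by $L'_{t+2,\ \mathbf{\bar{s}}}$; at $t = T-2$ this calls for a nonsingular matrix of period-$T$ values, which the hypotheses do not supply: $L_{t,\ \mathbf{\bar{s}}}$ is only defined, and only assumed nonsingular, for $t = 2, 4, \ldots, T-2$, and the existence of a single $\xi^T$ with distinct positive $\lambda^{*m}_{T,\ \mathbf{\bar{s}}}(\xi^T)$ does not produce $M-1$ period-$T$ points forming a nonsingular ones-bordered matrix once $M \geq 3$. This is not an edge case: for $T=6$ it is the entire period $t=4$. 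The repair is to place the arbitrary point in the other slot of the factorization: for $\xi \in \mathcal{S}_{T-2}|\mathbf{\bar{s}}$, the observed vector $\bigl(F^{T-2}_{\mathbf{\bar{s}}}(\xi),\ F^{T-4,T-2}_{\mathbf{\bar{s}}}(\xi^{T-4}_1, \xi),\ \ldots,\ F^{T-4,T-2}_{\mathbf{\bar{s}}}(\xi^{T-4}_{M-1}, \xi)\bigr)'$ equals $L'_{T-4,\ \mathbf{\bar{s}}}\,u$ with $u_m = \pi^m \lambda^m_{T-2,\ \mathbf{\bar{s}}}(\xi)$, and $L_{T-4,\ \mathbf{\bar{s}}}$ is nonsingular by hypothesis; dividing by the known $\pi^m$ then gives $\lambda^m_{T-2,\ \mathbf{\bar{s}}}(\xi)$. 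This is precisely the adjustment the paper makes in a footnote to its proof. With that one correction, your argument goes through.
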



\begin{proof} 
Using the factorization equations, we can identify the parameters following \cite{kasaharashimotsu2009}. Let us consider an even $t \leq T-4$.  
First, $L_{t, \ \mathbf{\bar{s}} }$ and $L_{t+2, \ \mathbf{\bar{s}} }$ are nonsingular, so we can define: 
\begin{align*}
A_{\xi^T} := P_{t, \ \mathbf{\bar{s}} }^{-1} P^*_{t, \ \xi^T, \ \mathbf{\bar{s}} } = L_{t+2, \ \mathbf{\bar{s}} }^{-1} D_{\xi^T, \mathbf{\bar{s}}} L_{t+2, \ \mathbf{\bar{s}} }
\end{align*}
Thus, we can proceed to the eigen decomposition of $A_{\xi^T}$. The eigenvalues of $A_{\xi^T}$ gives the diagonal elements of $D_{\xi^T, \mathbf{\bar{s}}}$ (up to an arbitrary ordering of the types). The eigenvectors of $A_{\xi^T}$ determine the columns of $L_{t+2, \ \mathbf{\bar{s}} }^{-1}$ up to multiplicative constants. Denote these eigenvectors by  $L_{t+2, \ \mathbf{\bar{s}} }^{-1} K$ where $K$ is some diagonal matrix. Then, we can determine $VK$ from the first row of $P_{t, \ \mathbf{\bar{s}} } L_{t+2, \ \mathbf{\bar{s}} }^{-1} K $, since $P_{t, \ \mathbf{\bar{s}} } L_{t+2, \ \mathbf{\bar{s}} }^{-1} K = L'_{t, \ \mathbf{\bar{s}} } VK$ and the first row of $L'_{t, \ \mathbf{\bar{s}} }$ is a vector of ones. 
From here, we can uniquely identify $L'_{t, \ \mathbf{\bar{s}}}$ (and thus $L_{t, \ \mathbf{\bar{s}}}$) as $L'_{t, \ \mathbf{\bar{s}}} = (L'_{t, \ \mathbf{\bar{s}}} V K) (VK)^{-1}$. 
Then we can determine $V$ and $L_{t+2, \ \mathbf{\bar{s}}}$ from $VL_{t+2, \ \mathbf{\bar{s}}} = (L'_{t, \ \mathbf{\bar{s}}})^{-1} P_{t, \ \mathbf{\bar{s}}}$ since the first column of $VL_{t+2, \ \mathbf{\bar{s}}}$ equals the diagonal of $V$, and thus, $L_{t+2, \ \mathbf{\bar{s}}} = V^{-1} (VL_{t+2, \ \mathbf{\bar{s}}})$. 
Thus, we identified $D_{\xi^T, \mathbf{\bar{s}}}$, $V$, $L_{t, \ \mathbf{\bar{s}}}$ and $L_{t+2, \ \mathbf{\bar{s}}}$ and all their elements:
\begin{align*}
\Big\{\{\pi^m\}, \ \{ \lambda^{*m}_{T, \ \mathbf{\bar{s}}}(\xi^T)\}, \ \{ \lambda^{m}_{t, \ \mathbf{\bar{s}}}(\xi_j^t)\}_{j=1, ..., M-1}^{t=2, 4, ..., T} \Big\}_{m=1}^M. 
\end{align*}

Now we can also identify $\lambda$ for elements of $\mathcal{S}|\mathbf{\bar{s}}$ different from the ones we selected. 
First, we can identify $\{ \lambda^{*m}_{T, \ \mathbf{\bar{s}}}(\zeta) \}_{m=1}^{M}$ for any $\zeta \in \mathcal{S}_T | \mathbf{\bar{s}}$. Define  

\begin{align*}
\underset{(M \times M)}{D_{\zeta, \ \mathbf{\bar{s}}}} = \begin{bmatrix}
\lambda^{*1}_{T, \ \mathbf{\bar{s}}}(\zeta) & & 0 \\ 
& \ddots & \\
0 & & \lambda^{*M}_{T, \ \mathbf{\bar{s}}}(\zeta)
\end{bmatrix}, 
\end{align*}
and construct $P^*_{t, \ \zeta, \ \mathbf{\bar{s}} }$ the same way we constructed $P^*_{t, \  \xi^T, \ \mathbf{\bar{s}} }$, i.e.,
\begin{align*}
\underset{(M \times M)}{P^*_{t, \ \zeta, \ \mathbf{\bar{s}} }} = 
\begin{bmatrix}
F^{*T}_{\mathbf{\bar{s}}}(\zeta) & F^{*t+2, T}_{\mathbf{\bar{s}}}(\xi^{t+2}_1, \zeta) & \cdots & F^{*t+2, T}_{\mathbf{\bar{s}}}(\xi^{t+2}_{M-1}, \zeta) \\
& & & \\
F^{*t, T}_{\mathbf{\bar{s}}}(\xi^{t}_1, \zeta) & F^{*t, t+2, T}_{\mathbf{\bar{s}}}(\xi^t_1, \xi^{t+2}_1, \zeta) & \cdots & F^{*t, t+2, T}_{\mathbf{\bar{s}}}(\xi^t_1, \xi^{t+2}_{M-1}, \zeta) \\
& & & \\
\vdots & \vdots & \ddots & \vdots \\
& & & \\
F^{*t, T}_{\mathbf{\bar{s}}}(\xi^{t}_{M-1}, \zeta) & F^{*t, t+2, T}_{\mathbf{\bar{s}}}(\xi^t_{M-1}, \xi^{t+2}_1, \zeta) & \cdots & F^{*t, t+2, T}_{\mathbf{\bar{s}}}(\xi^t_{M-1}, \xi^{t+2}_{M-1}, \zeta) \\
\end{bmatrix}. 
\end{align*}
We identify the elements of $D_{\zeta, \ \mathbf{\bar{s}}}$ using 
\begin{align*}
D_{\zeta, \ \mathbf{\bar{s}}} = (L'_{t, \ \mathbf{\bar{s}} } V)^{-1} P^*_{t, \ \zeta, \ \mathbf{\bar{s}} } (L_{t+2, \ \mathbf{\bar{s}}})^{-1}.
\end{align*}

\indent Similarly, for $t = 2, 4, ..., T-2$, we identify $\{ \lambda^{m}_{t, \ \mathbf{\bar{s}}}(\zeta) \}_{m=1}^{M}$ for any $\zeta \in \mathcal{S}_t | \mathbf{\bar{s}}$. Define
\begin{align*}
\underset{(M \times 2)}{L_{t, \ \mathbf{\bar{s}} }^\zeta} = \begin{bmatrix}
1 & \lambda^1_{t, \ \mathbf{\bar{s}}}(\zeta)  \\
\vdots & \vdots  \\
1 & \lambda^{M}_{t, \ \mathbf{\bar{s}}}(\zeta)  \\
\end{bmatrix}.
\end{align*}
Then we can construct $P_{t, \  \zeta, \ \mathbf{\bar{s}} } = (L^\zeta_{t, \ \mathbf{\bar{s}}})' V L_{t+2, \ \mathbf{\bar{s}}}$, where $P_{t, \  \zeta, \ \mathbf{\bar{s}} }$ is observable from the data. Thus we can identify $L^\zeta_{t, \ \mathbf{\bar{s}}}$ for any $t = 2, 4, ..., T-2$ as:\footnote{For  $t=T-2$, one can construct the `next period' $L_{T, \ \mathbf{\bar{s}}}$ using identified elements of $D_{\zeta, \ \mathbf{\bar{s}}}$. Or alternatively, one can just build $L_{T-2, \ \mathbf{\bar{s}} }^\zeta$ to replace $L_{t+2, \ \mathbf{\bar{s}} }$ and not $L_{t, \ \mathbf{\bar{s}} }$ in the factorization equation. }
\begin{align*}
(L^\zeta_{t, \ \mathbf{\bar{s}}})' = P_{t, \  \zeta, \ \mathbf{\bar{s}} } (V L_{t+2, \ \mathbf{\bar{s}}})^{-1}.
\end{align*}

Thus, we identified
\begin{align*}
\Big\{\{\pi^m\}, \ \{ \lambda^{*m}_{T, \ \mathbf{\bar{s}}}(\zeta), \forall \zeta \in \mathcal{S}_T | \mathbf{\bar{s}} \}, \ \{ \lambda^{m}_{t, \ \mathbf{\bar{s}}}(\zeta^t) , \forall \zeta^t \in \mathcal{S}_t | \mathbf{\bar{s}}\}_{t=2, 4, ..., T-2} \Big\}_{m=1}^M.  
\end{align*} 
This completes the identification proof when $t$ is even, conditional on a given $\mathbf{\bar{s}}$. \hfill \end{proof}

\noindent \textbf{Identification for any sequence $\mathbf{\bar{s}}$ when t is even: }\\
Proposition \ref{proposition_even_sbar} provides identification of $\lambda$ conditional on a specific sequence $\mathbf{\bar{s}}$. 
We would like to identify $\lambda_{t}^m (s_{t-1}, s_{t}, s_{t+1})$ for all $(s_{t-1}, s_t, s_{t+1}) \in \mathcal{S}_{t-1} \times \mathcal{S}_t \times \mathcal{S}_{t+1}$ and  $\lambda_{t}^{*m} (s_{T-1}, s_{T})$ for all $(s_{T-1}, s_T) \in \mathcal{S}_{T-1} \times \mathcal{S}_T$, for all $m$, where 
\begin{align*}
\lambda_{t}^m (s_{t-1}, s_{t}, s_{t+1}) &= Q^m_{t+1}(s_{t+1} | s_t) Q^m_t(s_t | s_{t-1})  \quad \text{ for } t = 2, 4, ..., T-2 \\
\text{and } \quad \quad \quad \lambda_{T}^{*m} (s_{T-1}, s_T) &= Q^m_{T}(s_{T} | s_{T-1}) \quad \quad \text{ for the last (even) period } T.
\end{align*}

First, notice that we know that $\lambda_{t}^m (s_{t-1}, s_{t}, s_{t+1}) = 0$ for all $m$ for non existing transition patterns, i.e., $\text{if we observe }  f(s_{t+1}, s_t, s_{t-1}) = 0 \text{ then } \lambda_{t}^m (s_{t-1}, s_{t}, s_{t+1}) = 0 \ \text{for all } m$. And similarly, $\lambda_{T}^{*m} (s_{T-1}, s_T) = 0$ for all $m$ if $f(s_{T+1}, s_T) = 0$. With the presence of the asset in the variables $x_t$, there are many impossible transitions. \\
\indent For any other observable combination with non-zero transition probability, we will apply Proposition \ref{proposition_even_sbar} to identify the $\lambda$s. A sequence $\mathbf{\bar{s}}$ is `observable' if $f(s_t = \mathbf{\bar{s}}$ for $t$ odd$) > 0$. 
Let us assume that the transition pattern is \textit{sufficiently heterogenous across different types}, and that there exists covariates $x^1$ not included in the budget constraint for which $f_{X_t^1 | x_{t-1}, d_{t-1}, c_{t-1}}(x^1_t) > 0$ for all $(t, x^1_t, x_{t-1}, d_{t-1}, c_{t-1})$, with the number of elements in $\mathcal{X}^1_t$ is largely greater than $M$ for all $t$.\footnote{The second condition is only necessary when $M$ is large. If $M=2$ for example, I only need $M-1 = 1$ element $\xi \in \mathcal{S}_t|\mathbf{\bar{s}}$. I can always find such an element even if I only have covariates entering the budget constraint. Indeed, even if the income and the interest rate only take one value (giving no possible variations), just take $c_{t}^d = \text{asset}_{t+1} - (1+r) \text{asset}_t - d_t \text{income}_t$ $= \text{asset}_{t+1} - (1+r) ((1+r) \text{asset}_{t-1} - c_{t-1} - d_{t-1} \text{income}_{t-1})  - d_t \text{income}_t$. This gives (at most) two possible values for $(c_t, d_t)$: $(c_t^0, d_t=0)$ and $(c_t^1, d_t=1)$. \mysingleq{At most}, because depending on the value of the assets, income and interest rate, one of the two computed consumption may be negative. However since the transition is observed in the data, I know that at least one of these two consumptions will be positive, giving me an existing path. Now, if $M$ is large, and the support of the income and interest rate contains only a small finite number of elements, I need other covariates not entering the budget constraint in order to find $M-1$ different elements compatible with $\mathbf{\bar{s}}$.  } 
In this case, for any observable $\mathbf{\bar{s}}$, we can find $M-1$ elements $\{\xi^t_j\}_{t=2, 4, ..., T-2}$ such that $L_{t, \mathbf{\bar{s}}}$ is nonsingular. Similarly, for any observable $\mathbf{\bar{s}}$, we can find $\xi^T \in \mathcal{S}_T|\mathbf{\bar{s}}$ such that $\lambda^{*m}_{T, \ \mathbf{\bar{s}}}(\xi^T) > 0$ for all $m$ and $\lambda^{*m}_{T, \ \mathbf{\bar{s}}}(\xi^T) \neq \lambda^{*n}_{T, \ \mathbf{\bar{s}}}(\xi^T)$ for any $m \neq n$. 
Thus we can apply Proposition \ref{proposition_even_sbar} and identify $\Big\{\{\pi^m\}, \ \{ \lambda^{*m}_{T, \ \mathbf{\bar{s}}}(\zeta), \forall \zeta \in \mathcal{S}_T | \mathbf{\bar{s}} \}, \ \{ \lambda^{m}_{t, \ \mathbf{\bar{s}}}(\zeta^t) , \forall \zeta^t \in \mathcal{S}_t | \mathbf{\bar{s}}\}_{t=2, 4, ..., T-2} \Big\}_{m=1}^M$ for any observable $\mathbf{\bar{s}}$. \\
\indent Now, across several $\mathbf{\bar{s}}$, the types are identified up to an arbitrary order. Assume $\pi^m$ is different for all types (e.g., $\pi^n \neq \pi^k$ if $k \neq n$). In this case, since one identifies $\pi^m$ for each $\mathbf{\bar{s}}$ with Proposition \ref{proposition_even_sbar}, one can match the $\lambda^m_{t | \mathbf{\bar{s}}}$ to their respective type $m$ across different values of $\mathbf{\bar{s}}$.  
\noindent Thus, we cover the space of all possible transition patterns, and identify
\begin{align*}
\Big\{ &\{\pi^m\}, \ \{ \lambda^{*m}_{T}(s_{T-1}, s_T), \forall (s_{T-1}, s_T) \in \mathcal{S}_{T-1} \times \mathcal{S}_T  \}, \\
 \ &\{ \lambda^{m}_{t}(s_{t-1}, s_t, s_{t+1}) , \forall (s_{t-1}, s_t, s_{t+1}) \in \mathcal{S}_{t-1} \times \mathcal{S}_{t} \times \mathcal{S}_{t+1}\}_{t=2, 4, ..., T-2} \Big\}_{m=1}^M. 
\end{align*}

\noindent \textbf{Identification for odd time periods:} \\
We have identified $\lambda^{m}_{t}(\cdot)$ for even time periods. One can just proceed exactly the same way to identify the $\lambda^{m}_{t}(\cdot)$ for odd time periods. 
Just notice that now, we focus on the last \textit{odd} time period, i.e., $T-1 \geq 5$ if we assumed $T$ even.
Thus in the matrix $D$ we focus on $\lambda_{T-1, \ \mathbf{\bar{s}}}^{*m} (s_T) = Q^m_{T-1}(s_{T-1} | \bar{s}_{T-2})$ for the last (odd) period $T-1$. And we do not consider the transition from $T-1$ to the last period $T$. 
And for $t=1$, $Q^m_1(s_1 | s_{0})$ is undefined, so we replace it by the initial distribution $Q^m_1(s_1) := q^{*m}(s_1)$. 
The rest of the demonstration is straightforward, by replacing \textit{even} with \textit{odd} time periods and by fixing predefined sequences for even time periods $\mathbf{\bar{s}} = \{\bar{s}_2, \bar{s}_4, ..., \bar{s}_{T-2}\}$ in the previous development. I skip the complete development to simplify the exposition. \\

\noindent \textbf{Identification for all time periods:} \\
Trivially, if instead we assumed that $T \geq 6$ is odd, the identification with odd periods will identify up to $T$ while the identification of even periods would identify $\lambda$ up to $T-1$. 
Therefore, in any case ($T$ even or odd), if $T \geq 6$ we identify 
\begin{align*}
\Big\{ &\{\pi^m\}, \ 
\{ \lambda^{*m}_{T}(s_{T-1}, s_{T}), \forall (s_{T-1}, s_{T}) \in \mathcal{S}_{T-1} \times \mathcal{S}_{T}  \}, \\
&\{ \lambda^{*m}_{T-1}(s_{T-2}, s_{T-1}), \forall (s_{T-2}, s_{T-1}) \in \mathcal{S}_{T-2} \times \mathcal{S}_{T-1}  \}, \\
 \ &\{ \lambda^{m}_{t}(s_{t-1}, s_t, s_{t+1}) , \forall (s_{t-1}, s_t, s_{t+1}) \in \mathcal{S}_{t-1} \times \mathcal{S}_{t} \times \mathcal{S}_{t+1}\}_{t=1, 2, ..., T-2} \Big\}_{m=1}^M. 
\end{align*}

\indent Now we want to identify the $Q^m_t(s_t | s_{t-1})$ separately for all $t, s_t, s_{t-1}$. For $t \in \{T-1, T\}$, $Q^m_t(s_t | s_{t-1})$ are identified directly as they are equal to $\lambda^{*m}_T(s_{T-1}, s_T)$ and $\lambda^{*m}_{T-1}(s_{T-2}, s_{T-1})$. For $t \leq T-2$, we identified $\lambda_{t}^m (s_{t-1}, s_{t}, s_{t+1}) = Q^m_{t+1}(s_{t+1} | s_t) Q^m_t(s_t | s_{t-1})$. Thus, if we know $Q^m_{t+1}(s_{t+1} | s_t)$, then we identify $Q^m_t(s_t | s_{t-1}) = \lambda_{t}^m (s_{t-1}, s_{t}, s_{t+1})/Q^m_{t+1}(s_{t+1} | s_t)$. Thus, given that we know $Q^m_{T-1}(\cdot)$, we can proceed backwards to identify recursively $Q^m_t()$ for all $t \leq T-2$.  \\
\indent Moreover, recall that $Q^m_t(s_t | s_{t-1}) = f_{X_t | x_{t-1}, c_{t-1}, d_{t-1}}(x_t) f^m_{D_t, C_t | x_t, d_{t-1}}(d_{t}, c_t)$, and that the covariates transition density is type-independent and identified directly from the data. As a consequence, since $Q^m_t()$ are identified for all $t$, we also identify the type-dependent conditional choice joint densities. To conclude, we identify the type-dependent conditional choice joint densities and the type probabilities, $\text{ for all } m \in \{1, ..., M\}$, i.e.,  
\begin{align*}
f^m_{D_t, C_t | x_t, d_{t-1}}(d_t, c_t), \ (d_t, c_t, x_t, d_{t-1}) \in  \mathcal{D}_t  \times \mathcal{C}_{dt|x_t} \times \mathcal{X}_t \times \mathcal{D}_{t-1} \times \{2, ..., T\}   \  \text{ and } \pi^m. 
\end{align*}

\section{Dynamic games}\label{appendix_dynamic_games}
This Appendix describes how the dynamic discrete-continuous choice single-agent Framework of the main text can be extended to dynamic discrete-continuous games. \\
\indent There are $I$ firms in each of many markets. The payoffs of the $i$th firm depends on its own choices $(d^{(i)}_t, c_{d^{(i)}t}^{(i)})$, but also the choices of the other firms in the market $d_t^{(-i)} := (d_t^{(1)}, ..., d_t^{(i-1)}, d_t^{(i+1)}, ..., d_t^{(I)})$ and $c_{d^{(-i)}t}^{(-i)} := (c_{d^{(1)}t}^{(1)}, ..., c_{d^{(i-1)}t}^{(i-1)}, c_{d^{(i+1)}t}^{(i+1)}, ..., c_{d^{(I)}t}^{(I)})$. 
The payoff also depends on state variables $z_t^{(i)} = (x_t^{(i)}, w_t^{(i)})$. The instrument for firm $i$ is again $w_t^{(i)} = d_{t-1}^{(i)}$. 
The covariate $x_t^{(i)}$ includes variables $\tilde{x}_t$ which are common to all the firms, or firm-specific but observed by all the firms in the market. For notational simplicity, it also includes the \textit{market type} $m$, which is observed by all the firms.\footnote{One can have firms specific types $m^{(i)}$ taking values into $\mathcal{M} = \{1, ..., M\}$. As long as each firm on the market knows the types of the others, the common knowledge for everyone is $m=(m^{(1)}, ..., m^{(I)}) \in \mathcal{M}^I$, which is equivalent to having one unobserved type $m$ with the $M^I$ different possibilities. I do not allow for firm-specific private information types.} 
For firm $i$, $x_t$ also includes the instrument of the other firms, i.e., $w_t^{(-i)} = d_{t-1}^{(-i)}$. Thus, for firm $i$, the covariates are $x_t^{(i)} = (\tilde{x}_t, d_{t-1}^{(-i)}, m)$. %
Assume the environment is \textit{stationary}, with an infinite horizon, as is often assumed in the games literature. 
The current utility of firm $i$ in period $t$ when it picks $d_t^{(i)}$ is
\begin{align*}
	&\mathcal{U}_{d^{(i)}_t}^{(i)}\Big(c_{d^{(i)}t}^{(i)}, \ x_t^{(i)}, w_t^{(i)}, d_t^{(-i)}, c_{d^{(-i)}t}^{(-i)}, \eta^{(i)}_t,  \epsilon^{(i)}_t\Big)  \\ 
	& = \ u_{d^{(i)}_t}^{(i)}\Big(c_{d^{(i)}t}^{(i)}, x_t^{(i)}, d_t^{(-i)}, c_{d^{(-i)}t}^{(-i)}, \eta^{(i)}_t\Big) + m_{d^{(i)}_t}^{(i)}\Big(x_t^{(i)}, w_t^{(i)}, d_t^{(-i)}, c_{d^{(-i)}t}^{(-i)}, \eta^{(i)}_t\Big) + \epsilon^{(i)}_{d^{(i)}t}.
\end{align*}
$\eta_t^{(i)}$ and $\epsilon_t^{(i)}$ are identically and independently distributed shocks which are private information to the firm. As before, $\eta_t^{(i)}$ is nonseparable shock impacting the continuous and discrete choices of firm $i$, and $\epsilon_t^{(i)}$ are additively separable shocks which only impact the discrete choice $d^{(i)}_t$ of firm $i$. 
The setup satisfies the assumptions described before (monotonicity, independence from the instrument, independence between the shocks, ...), but applied to firm $i$ specific variables ($d_t^{(i)}, c_{d^{(i)}_t}^{(i)}, \eta_t^{(i)}, \epsilon_t^{(i)}, ...)$. 
Notice the utility functions can be firm specific, meaning that the same variable might have a different impact on different firms. For example, a characteristic of firm $i$ may affect the payoff of firm $i$ differently than a characteristic of firm $i'$. \\
\indent Firms make their decisions simultaneously in each periods. 
The main difficulty with games is that firm $i$'s payoff depends on the other firms choices, which are not observed when firm $i$ makes its own choices. So firm $i$ need to form beliefs about its competitors behaviour. 
Denote $L(d_t^{(-i)}, c_{d^{(-i)}t}^{(-i)} | X_t^{(i)}=x_t^{(i)})$, the likelihood that firm $i$'s competitors select $(d_t^{(-i)}, c_{d^{(-i)}t}^{(-i)})$ given $X_t^{(i)}=x_t^{(i)}$. This likelihood is time-independent since we consider stationary environment with infinite horizon.  
Since $(\eta_t^{(i)}, \epsilon_t^{(i)})$ are independently distributed across firms, we have 
\begin{align*}
	L(d_t^{(-i)}, c_{d^{(-i)}t}^{(-i)} | X_t^{(i)}=x_t^{(i)}) = \prod^I_{\substack{i'=1\\ i' \neq i}} L^{(i')}\big(d_t^{(i')}, c_{d^{(i')}t}^{(i')} | X_t^{(i')}=x_t^{(i')}\big),
\end{align*}
where $L^{(i')}(d_t^{(i')}, c_{d^{(i')}t}^{(i')} | X_t^{(i)}=x_t^{(i)})$ is the likelihood that firm $i'$ selects $D_t^{(i')} = d_t^{(i')}$ and $C_{t}^{(i')} = c_{d^{(i')}t}^{(i')}$ given $X_t^{(i')}=x_t^{(i')}$. 
\noindent I impose \textit{rational expectations} on firms' beliefs about their competitors' choices and assume firms are playing stationary Markov-perfect equilibrium strategies. Hence, the true densities match the beliefs of the firm. 
Firm $i$ uses its rational beliefs about the others choices in order to form expectations about the payoff it will obtain. In other words, firm $i$ makes its decision with respect to
\begin{align*}
\tilde{u}_{d^{(i)}_t}^{(i)}\Big(c_{d^{(i)}t}^{(i)}, x_t^{(i)}, \eta^{(i)}_t\Big) + \tilde{m}_{d^{(i)}_t}^{(i)}\Big(x_t^{(i)}, w_t^{(i)}, \eta^{(i)}_t\Big) + \epsilon^{(i)}_{d^{(i)}t}, 
\end{align*}
where we define 
\begin{flalign*}
	\tilde{u}_{d^{(i)}_t}^{(i)}\Big(c_{d^{(i)}t}^{(i)}, x_t^{(i)}, \eta^{(i)}_t\Big) &= \mathbb{E}_{D_t^{(-i)}, C_{D^{(-i)}t}^{(-i)}} \Big[ u_{d^{(i)}_t}^{(i)}\Big(c_{d^{(i)}t}^{(i)}, X_t^{(i)}, D_t^{(-i)}, C_{D^{(-i)}t}^{(-i)}, \eta^{(i)}_t\Big) \Big| X_t^{(i)}=x_t^{(i)} \Big] &\\
	\text{and } \quad \quad \tilde{m}_{d^{(i)}_t}^{(i)}\Big(x_t^{(i)}, w_t^{(i)}, \eta^{(i)}_t \Big) &=\mathbb{E}_{D_t^{(-i)}, C_{D^{(-i)}t}^{(-i)}} \Big[ m_{d^{(i)}_t}^{(i)}\Big(X_t^{(i)}, w_t^{(i)}, D_t^{(-i)}, C_{D^{(-i)}t}^{(-i)}, \eta^{(i)}_t\Big) \Big| X_t^{(i)}=x_t^{(i)} \Big], &
\end{flalign*} 
and where the expectations are computed as 
\begin{align*}
	&\mathbb{E}_{D_t^{(-i)}, C_{D^{(-i)}t}^{(-i)}} \Big[ u(\cdot, D_t^{(-i)}, C_{D^{(-i)}t}^{(-i)}) \Big| X_t^{(i)}=x_t^{(i)} \Big] \\
	&= \sum_{d_t^{(-i)} \in \mathcal{D}^{I-1}} \int_{c_{d^{(-i)}t}^{(-i)} \in \  \mathcal{C}_{d^{(-i)}t}^{(-i)}} L(d_t^{(-i)}, c_{d^{(-i)}t}^{(-i)} | X_t^{(i)}=x_t^{(i)}) \  u(\cdot, d_t^{(-i)}, c_{d^{(-i)}t}^{(-i)}), 
\end{align*}
with $\mathcal{C}_{d^{(-i)}t}^{(-i)} = \mathcal{C}_{d^{(1)}t}^{(1)} \times ...\times \mathcal{C}_{d^{(i-1)}t}^{(i-1)} \times \mathcal{C}_{d^{(i+1)}t}^{(i+1)} \times ... \times \mathcal{C}_{d^{(I)}t}^{(I)}$, where $\mathcal{C}_{d^{(k)}t}^{(k)}$ is the support of $C_{d^{(k)}}^{(k)}$ and can be firm-specific. 
%

\indent Similarly, the transition of the state variables will depend on the firm $i$'s choices but also on its competitors choices, i.e., 
\begin{align*}
f_{Z_{t+1}^{(i)} | x_t^{(i)}, d_t^{(i)}, c_{d^{(i)}t}^{(i)}, 	d_t^{(-i)}, c_{d^{(-i)}t}^{(-i)}}\big(z_{t+1}^{(i)}\big).
\end{align*}
The market types are time-invariant, $\mu_t = m$ for all $t$. 
The transitions are independent of the current instrument $w_{t}^{(i)}$ (conditional on $d_t^{(i)}$) by exclusion of the instrument from the transition(Assumption \ref{instru_transi}).
The instrument $w_{t+1}^{(i)} = d_t^{(i)}$, so its transition is known given the current discrete choice. 
The other firms instruments, $d_t^{(-i)}$, are also known directly given the current discrete choices of all firms. 
So really, the transitions that matters are the transitions of $\tilde{X}_t$ to $\tilde{X}_{t+1}$. 
When making its discrete-continuous choice, firm $i$ takes into account an expectation of the transition of $\tilde{X}_t$ with respect to its competitors choices, i.e., 
\begin{align*}
&f^{(i)}_{\tilde{X}_{t+1} | x_t^{(i)}, d_t^{(i)}, c_{d^{(i)}t}^{(i)}}\big(\tilde{x}_{t+1}\big) = \mathbb{E}_{D_t^{(-i)}, C_{D^{(-i)}t}^{(-i)}} \Big[ f_{\tilde{X}_{t+1} | x_t^{(i)}, d_t^{(i)}, c_{d^{(i)}t}^{(i)}, 	D_t^{(-i)}, C_{D^{(-i)}t}^{(-i)}}\big(\tilde{x}_{t+1}\big) \Big| X_t^{(i)}=x_t^{(i)}   \Big]. 
\end{align*}
%
%
Thus, firm $i$ expected covariate transition with respect to its competitors choices is
\begin{align*}
	f^{(i)}_{Z_{t+1}^{(i)} | x_t^{(i)}, d_t^{(i)}, c_{d^{(i)}t}^{(i)}}(z_{t+1}^{(i)}) = f^{(i)}_{\tilde{X}_{t+1} | x_t^{(i)}, d_t^{(i)}, c_{d^{(i)}t}^{(i)}}(\tilde{x}_{t+1}) \times \mathds{1}\{\mu_{t+1} = m_t\} \times \mathds{1}\{W_{t+1}^{(i)} = d_{t}^{(i)}\}.
\end{align*}

\vspace{3mm}
\indent Once we have firm $i$ rational expected payoffs and transitions with respect to the others behaviour, we can proceed exactly as we did in the case of \textit{dynamic single-agent models} (Section \ref{section_dynamic}). 
Knowing the transition densities, firm $i$ chooses $(d_t^{(i)}, c_{d^{(i)}t}^{(i)})$ to sequentially maximize its expected discounted sum of payoffs. 
Let $V^{(i)}(z_t^{(i)})$ be the (ex ante) value function of this discounted sum of future payoffs at the beginning of the period, just before the private information shocks $(\epsilon_t^{(i)}, \eta_t^{(i)})$ are revealed to firm $i$ and conditional on behaving according to the optimal decision rule. We have
\begin{align*}
V^{(i)}(z_t^{(i)}) \equiv \mathbb{E}\Bigg[ \sum^T_{\tau = t} \beta^{\tau-t} \underset{D^{(i)}_\tau, C_{D^{(i)}\tau}^{(i)}}{\textrm{max}} \Big[\tilde{u}_{D^{(i)}_\tau}^{(i)}\Big(C_{D^{(i)}\tau}^{(i)}, X_\tau^{(i)}, \Eta^{(i)}_\tau\Big) + \tilde{m}_{D^{(i)}_\tau}^{(i)}\Big(X_\tau^{(i)}, W_\tau^{(i)}, \Eta^{(i)}_\tau\Big) + \Epsilon^{(i)}_{D^{(i)}\tau}  \Big] \Bigg].\footnotemark 
\end{align*}
\footnotetext{In the stationary environment with infinite horizon, the value function $V$ is the same for all $t$ and we could write the sum starting from $0$ to $\infty$ instead.}  
This ex ante value function can be written recursively:
\begin{align*}
V^{(i)}(z_t^{(i)})  = \mathbb{E}_{\Epsilon^{(i)}, \Eta^{(i)}} \bigg[ \  \underset{d_t^{(i)}, c_{d^{(i)}t}^{(i)}}{\textrm{max}} \Big[&\tilde{u}_{d^{(i)}_t}^{(i)}\Big(c_{d^{(i)}t}^{(i)}, x_t^{(i)}, \Eta^{(i)}_t\Big) + \tilde{m}_{d^{(i)}_t}^{(i)}\Big(x_t^{(i)}, w_t^{(i)}, \Eta^{(i)}_t\Big) + \Epsilon^{(i)}_{d^{(i)}t}   \\
 &\ + \beta \mathbb{E}_{Z_{t+1}^{(i)}}[V^{(i)}(Z_{t+1}^{(i)}) \Big| X_t^{(i)}=x_t^{(i)}, C_t^{(i)}= c_{d^{(i)}t}^{(i)}, D_t^{(i)}=d_t^{(i)}] \Big] \  \bigg], 
\end{align*}
where the expectation about the next period value function is computed using firm $i$ expected transitions given its rational beliefs $f^{(i)}_{Z_{t+1}^{(i)} | x_t^{(i)}, d_t^{(i)}, c_{d^{(i)}t}^{(i)}}(z_{t+1}^{(i)})$. 
Thus, each period, after observing $(\Epsilon_t^{(i)}, \Eta_t^{(i)}) = (\epsilon_t^{(i)}, \eta_t^{(i)})$, firm $i$ chooses $d_t^{(i)}$ and $c_{d^{(i)}t}^{(i)}$ to maximize
\begin{align*}
	\underset{d_t^{(i)}, c_{d^{(i)}t}^{(i)}}{\textrm{max}} \ \tilde{u}_{d^{(i)}_t}^{(i)}\Big(c_{d^{(i)}t}^{(i)}, x_t^{(i)}, \eta^{(i)}_t\Big) &+ \beta \mathbb{E}_{Z_{t+1}^{(i)}}\Big[ V^{(i)}(Z_{t+1}^{(i)}) \Big| X_t^{(i)}=x_t^{(i)}, C_t^{(i)}= c_{d^{(i)}t}^{(i)}, D_t^{(i)}=d_t^{(i)} \Big]  \\
	 &+ \tilde{m}_{d^{(i)}_t}^{(i)}\Big(x_t^{(i)}, w_t^{(i)}, \eta^{(i)}_t\Big) + \epsilon^{(i)}_{d^{(i)}t}.
\end{align*}
Denote the conditional value functions of firm $i$ as
\begin{align*}
	v_{d^{(i)}_t}^{(i)}\Big(c_{d^{(i)}t}^{(i)}, x_t^{(i)}, \eta^{(i)}_t\Big) =& \  \tilde{u}_{d^{(i)}_t}^{(i)}\Big(c_{d^{(i)}t}^{(i)}, x_t^{(i)}, \eta^{(i)}_t\Big) \\
	&+ \beta \mathbb{E}_{Z_{t+1}^{(i)}}\Big[ V^{(i)}(Z_{t+1}^{(i)}) \Big| X_t^{(i)}=x_t^{(i)}, C_t^{(i)}= c_{d^{(i)}t}^{(i)}, D_t^{(i)}=d_t^{(i)} \Big]. 
\end{align*}
Now, the dynamic games yields the same maximization problem as in the general framework of Section \ref{section_framework}. Every period, firm $i$ selects $d_t^{(i)}$ and $c_{d^{(i)}t}^{(i)}$ to solve: 
\begin{align*}
	\underset{d_t^{(i)}, c_{d^{(i)}t}^{(i)}}{\textrm{max}} \ v_{d^{(i)}_t}^{(i)}\Big(c_{d^{(i)}t}^{(i)}, x_t^{(i)}, \eta^{(i)}_t\Big) + \tilde{m}_{d^{(i)}_t}^{(i)}\Big(x_t^{(i)}, w_t^{(i)}, \eta^{(i)}_t\Big) + \epsilon^{(i)}_{d^{(i)}t}.
\end{align*}

\indent Now, for the identification of the dynamic game, the market unobserved type $m$ are identified provided $T \geq 6$ following Section \ref{section_types} and pooling all the market observations. 
Once the types are identified, firm $i$ type-dependent stationary CCCs and CCPs are identified following Section \ref{section_identification}. \\
%
%
%

\noindent \textbf{Example:} Price and (discrete) quality choice. \\
There are $I$ firms that we observe over many periods ($T \geq 6$) in each of many markets. These firms compete by choosing a high ($d^{(i)}_t = 1$) or low quality ($d^{(i)}_t = 0$) for their products, and by choosing the corresponding conditional prices $c_{d^{(i)}t}^{(i)}$. They do so taking into account observed (by the econometrician) market characteristics $\tilde{x}_t$ (e.g., observed price of the inputs to produce the good), as well as time-invariant market specific type $m$, unobserved by the econometrician. $m$ could for example represent whether the market has an intrinsically high or low demand. Switching the quality of their product is costly, so $w_t^{(i)} = d_{t-1}^{(i)}$ is a relevant instrument. Conditional on the current quality choice, the past quality is not relevant when picking the price so $d_{t-1}^{(i)}$ is also excluded from the CCCs. Thus, the previous quality choice is a good instrument and each firm $i$ also takes into account observations about its competitors previous quality choice $d_{t-1}^{(-i)}$ when making its own choices. Not because it directly affects its own current payoff: conditional on $d_{t}^{(-i)}$, $d_{t-1}^{(-i)}$ has no impact on firm $i$ payoff at time $t$. But because $d_{t-1}^{(-i)}$ is important for firm $i$ to build rational expectations about the other firms choices probabilities today. Finally they also take into consideration $\eta^{(i)}_t$ and $\epsilon^{(i)}_t$ which are firm-specific temporary shocks. $\epsilon^{(i)}_t$ only impacts the quality choice while $\eta^{(i)}_t$ impacts quality and price decisions.   \\

\end{document}